\numberwithin{equation}{section}
\theoremstyle{plain}
\newtheorem{thm}{\protect\theoremname}[section]
  \theoremstyle{definition}
  \newtheorem{Definition}[thm]{\protect\definitionname}
  \theoremstyle{plain}
  \newtheorem{ax}[thm]{\protect\axiomname}
  \theoremstyle{plain}
  \newtheorem{Proposition}[thm]{\protect\propositionname}
  \theoremstyle{plain}
  \newtheorem{Lemma}[thm]{\protect\lemmaname}
  \theoremstyle{plain}
\newcommand{\qw}[1][-1]{\ar @{-} [0,#1]}
\newcommand{\gate}[1]{*{\xy *+<.6em>{#1};p\save+LU;+RU **\dir{-}\restore\save+RU;+RD **\dir{-}\restore\save+RD;+LD **\dir{-}\restore\POS+LD;+LU **\dir{-}\endxy} \qw}
\newcommand{\measureD}[1]{*{\xy*+=+<.5em>{\vphantom{\rule{0em}{.1em}#1}}*\cir{r_l};p\save*!R{#1} \restore\save+UC;+UC-<.5em,0em>*!R{\hphantom{#1}}+L **\dir{-} \restore\save+DC;+DC-<.5em,0em>*!R{\hphantom{#1}}+L **\dir{-} \restore\POS+UC-<.5em,0em>*!R{\hphantom{#1}}+L;+DC-<.5em,0em>*!R{\hphantom{#1}}+L **\dir{-} \endxy} \qw}
\newcommand{\multimeasureD}[2]{*+<1em,.9em>{\hphantom{#2}}\save[0,0].[#1,0];p\save !C *{#2},p+LU+<0em,0em>;+RU+<-.8em,0em> **\dir{-}\restore\save +LD;+LU **\dir{-}\restore\save +LD;+RD-<.8em,0em> **\dir{-} \restore\save +RD+<0em,.8em>;+RU-<0em,.8em> **\dir{-} \restore \POS !UR*!UR{\cir<.9em>{r_d}};!DR*!DR{\cir<.9em>{d_l}}\restore \qw}
\newcommand{\multigate}[2]{*+<1em,.9em>{\hphantom{#2}} \qw \POS[0,0].[#1,0];p !C *{#2},p \save+LU;+RU **\dir{-}\restore\save+RU;+RD **\dir{-}\restore\save+RD;+LD **\dir{-}\restore\save+LD;+LU **\dir{-}\restore}
\newcommand{\ghost}[1]{*+<1em,.9em>{\hphantom{#1}} \qw}
\newcommand{\Qcircuit}[1][0em]{\xymatrix @*=<#1>} 
\newcommand{\pureghost}[1]{*+<1em,.9em>{\hphantom{#1}}}
\newcommand{\multiprepareC}[2]{*+<1em,.9em>{\hphantom{#2}}\save[0,0].[#1,0];p\save !C
  *{#2},p+RU+<0em,0em>;+LU+<+.8em,0em> **\dir{-}\restore\save +RD;+RU **\dir{-}\restore\save
  +RD;+LD+<.8em,0em> **\dir{-} \restore\save +LD+<0em,.8em>;+LU-<0em,.8em> **\dir{-} \restore \POS
  !UL*!UL{\cir<.9em>{u_r}};!DL*!DL{\cir<.9em>{l_u}}\restore}
\newcommand{\prepareC}[1]{*{\xy*+=+<.5em>{\vphantom{#1\rule{0em}{.1em}}}*\cir{l^r};p\save*!L{#1} \restore\save+UC;+UC+<.5em,0em>*!L{\hphantom{#1}}+R **\dir{-} \restore\save+DC;+DC+<.5em,0em>*!L{\hphantom{#1}}+R **\dir{-} \restore\POS+UC+<.5em,0em>*!L{\hphantom{#1}}+R;+DC+<.5em,0em>*!L{\hphantom{#1}}+R **\dir{-} \endxy}}
\newcommand{\poloFantasmaCn}[1]{{{}^{#1}_{\phantom{#1}}}}
\newcommand{\rA}{\mathrm{A}}
\newcommand{\rB}{\mathrm{B}}
\newcommand{\rC}{\mathrm{C}}
\newcommand{\rD}{\mathrm{D}}
\newcommand{\rS}{\mathrm{S}}
\newcommand{\cA}{\mathcal{A}}
\newcommand{\cB}{\mathcal{B}}
\newcommand{\cC}{\mathcal{C}}
\newcommand{\cI}{\mathcal{I}}
\newcommand{\cU}{\mathcal{U}}
  \providecommand{\axiomname}{Axiom}
  \providecommand{\corollaryname}{Corollary}
  \providecommand{\definitionname}{Definition}
  \providecommand{\lemmaname}{Lemma}
  \providecommand{\propositionname}{Proposition}
\providecommand{\theoremname}{Theorem}
\begin{document}


\begingroup
\centering
{\Large\textbf{Ruling out higher-order interference from purity principles} \\[1.5em]
 \normalsize Howard Barnum\textsuperscript{$\dagger,\mathsection$} Ciar\'{a}n M. Lee\textsuperscript{$\ddagger$,}\footnote{Electronic address: ciaran.lee@ucl.ac.uk} Carlo Maria Scandolo\textsuperscript{$\ast,$}\footnote{Electronic address: carlomaria.scandolo@cs.ox.ac.uk} and John H. Selby\textsuperscript{$\ast,\bullet$,}\footnote{Electronic address: john.selby08@imperial.ac.uk}
}
\\[1em]
\small
\it \textsuperscript{$\dagger$}   QMATH, Department of Mathematical Sciences, University of Copenhagen, Denmark.\\ \textsuperscript{$\mathsection$} Department of Physics and Astronomy, University of New Mexico, Albuquerque, USA.\\
  \it \textsuperscript{$\ddagger$} Department of Physics and Astronomy, University College London, UK. \\ \it \textsuperscript{$\ast$}  University of Oxford, Department of Computer Science, OX1 3QD, UK. \\ \it \textsuperscript{$\bullet$} Imperial College London,  London SW7 2AZ, UK. \\

\endgroup

\begin{abstract}
As first noted by Rafael Sorkin, there is a limit to quantum interference. The interference pattern formed in a multi-slit experiment is a function of the interference patterns formed between pairs of slits; there are no genuinely
new features resulting from considering three slits instead of two. Sorkin has introduced a hierarchy of mathematically conceivable \emph{higher-order} interference behaviours, where classical theory lies at the first level of this hierarchy and quantum theory theory at the second. Informally, the order in this hierarchy corresponds to the number of slits on which the interference pattern has an irreducible dependence. Many authors have wondered why quantum interference is limited to the second level of this hierarchy. Does the existence of higher-order interference violate some natural physical principle that we believe should be fundamental? In the current work we show that  such principles can be found which limit interference behaviour to second-order, or ``quantum-like'', interference, but that do not restrict us to the entire quantum formalism. We work within the operational framework of generalised probabilistic theories, and prove that any theory satisfying Causality, Purity Preservation, Pure Sharpness, and Purification---four principles that formalise the fundamental character of purity in nature---exhibits at most second-order interference. Hence these theories are, at least conceptually, very ``close'' to quantum theory. Along the way we show that systems in such theories correspond to Euclidean Jordan algebras. Hence, they are self-dual and, moreover, multi-slit experiments in such theories are described by pure projectors.
\end{abstract}
\section{Introduction}

Described by Feynman as ``impossible, \emph{absolutely} impossible, to explain in any classical way''
 \cite{Feynman} (volume 1, chapter 37), quantum interference is a distinctive signature of non-classicality. However, as first noted by Rafael Sorkin \cite{Sorkin1,Sorkin2}, there is a limit to this interference; in contrast to the case of two slits, the interference pattern formed in a three slit experiment \emph{can} be written as a linear combination of two and one slit patterns. Sorkin has introduced a hierarchy of mathematically conceivable \emph{higher-order} interference behaviours, where classical theory lies at the first level of this hierarchy and quantum theory theory at the second. Informally, the order in this hierarchy corresponds to the number of slits on which the interference pattern has an irreducible dependence.

Many authors have wondered why quantum interference is limited to the second level of this hierarchy \cite{Sorkin1,Lee-Selby-interference,Lee-Selby-Grover,Barnum-interference,Ududec-3slits,UBEinterferenceUnpublished, CozThesis, Niestegge,Control-reversible,Density-cubes,bolotin2016ongoing}. Does the existence of higher-order interference violate some natural physical principle that we believe should be fundamental \cite{lee2017no}? In the current work we show that {such natural} principles can be found which limit interference behaviour to second-order, or ``quantum-like'', interference, but that do not restrict us to the entire quantum formalism.

We work in the framework of general probabilistic theories \cite{Barrett,hardy2011,Chiribella-purification,Chiribella-informational,Hardy-informational-1,Barnum-1,Barnum-2,Brukner,Masanes-physical-derivation,chiribella2016quantum,lee2015computation,lee2016information,lee2016bounds,landscape}.
This framework is general enough to accommodate essentially arbitrary operational theories, where an operational theory specifies a set of laboratory devices which can be connected together in different ways, and assigns probabilities to different experimental outcomes. Investigating how the structural and information-theoretic features of a given theory in this framework depend on different physical principles deepens our physical and intuitive understanding of such features. Indeed, many authors \cite{Hardy-informational-1,Chiribella-informational,Hardy-informational-2,Brukner,Masanes-physical-derivation} have derived the entire structure of finite-dimensional quantum theory from simple information-theoretic axioms---reminiscent of Einstein's derivation of special relativity from two simple physical principles. So far, ruling out higher-order interference has required thermodynamic arguments. Indeed, by combining the results and axioms of Refs.~\cite{TowardsThermo,Barnum-thermo}, higher-order interference could be ruled out in theories satisfying the combined axioms. In this paper we show that we can prove this in a more direct way from first principles, using only the axioms of Ref.~\cite{TowardsThermo}.

Many experimental investigations have searched for divergences from quantum theory by looking for higher-order interference \cite{sinha2010ruling,sinha2015superposition,park2012three,kauten2015obtaining,jin2017experimental}. These experiments involved passing a particle through a physical barrier with multiple slits and comparing the interference patterns formed on a screen behind the barrier when different subsets of slits are closed. Given this set-up, one would expect that the physical theory being tested should possess transformations that correspond to the action of blocking certain subsets of slits. Moreover, blocking all but two subsets of slits should not affect states which can pass through either slit. This intuition suggests that these transformations should correspond to projectors.

Many operational probabilistic theories do not possess such a natural mathematical interpretation of multi-slit experiments; indeed many theories do not admit well{-}defined projectors \cite{Lee-Selby-interference}. Here, we show that there exist natural information-theoretic principles that both imply the existence of the projector structure{,} and rule out third-, and higher-, order interference. The principles that ensure this structure are Causality, Purity Preservation, Pure Sharpness, and Purification. These formalise intuitive ideas about the fundamental role of purity in nature. More formally, we show that such theories possess a self-dualising inner product{,} and that there exist pure projectors which represent the opening and closing of slits in a multi-slit experiment. Barnum, M\"uller and Ududec  have shown that in any self-dual theory in which such projectors exist for every face, if projectors map pure states to pure states, then there can be at most second-order interference \cite{Barnum-interference} (Proposition 29).  The conjunction of our new results and the principle of Purity Preservation implies the conditions of Barnum et al.'s proposition. Hence sharp theories with purification do not exhibit higher-order interference.  In fact we prove a stronger result, that the systems in such theories are Euclidean Jordan {a}lgebras which have been studied in quantum foundations \cite{Ududec-3slits,Barnum-interference,BarnumGraydonWilceCCEJA}.

This paper is organised as follows. In Section~\ref{framework} we review the basics of the operational probabilistic theory framework. In Section~\ref{higher-order interference} we formally define higher-order interference. In Section~\ref{sharp theories with purification} we define sharp theories with purification and review relevant known results. In Section~\ref{proofs} we present and prove our new results. Finally, in Section~\ref{end}, we offer some suggestions on how new experiments might be devised {to observe} higher-order interference.

\section{Framework} \label{framework}

{We will describe theories in the framework of operational-probabilistic
theories (OPTs) \cite{Chiribella-purification,Chiribella-informational,hardy2011,Hardy-informational-2,hardy2013,Chiribella14,QuantumFromPrinciples},}
arising from the marriage of category theory \cite{Abramsky2004,Coecke-Kindergarten,Coecke-Picturalism,Selinger,Coecke2016,Coeckebook}
with probabilities. The foundation of this framework is the idea that any successful physical theory must provide an account of experimental data. Hence, such theories should have an operational description in terms of such experiments.

The OPT framework is based on the graphical language of circuits, describing experiments that can be performed in a laboratory with physical systems connecting together physical processes, which are denoted as wires and boxes respectively.  The systems/wires are labelled with a \emph{type} denoted $\mathrm{A}$, $\mathrm{B}$, $\mathrm{C}$, \ldots{}{. F}or example, the type given to a quantum system is the dimension of the Hilbert space describing the system. The processes/boxes are then viewed as transformations with some input and output systems/wires{. For instance}, in quantum theory these correspond to quantum instruments. We now give a brief introduction to the important concepts in this formalism.

\subsection{States, transformations, and effects}

A fundamental tenant of the OPT framework is composition of systems and physical processes. Given two systems $\mathrm{A}$ and $\mathrm{B}$, they can be combined into a composite system, denoted by $\mathrm{A}\otimes\mathrm{B}$.
Physical processes can be composed to build circuits, such as

\begin{equation}\label{circuit} \begin{aligned}\Qcircuit @C=1em @R=.7em @!R { & \multiprepareC{1}{\rho} & \qw \poloFantasmaCn{\rA} & \gate{\cA} & \qw \poloFantasmaCn{\rA'} & \gate{\cA'} & \qw \poloFantasmaCn{\rA''} &\measureD{a} \\ & \pureghost{\rho} & \qw \poloFantasmaCn{\rB} & \gate{\cB} & \qw \poloFantasmaCn{\rB'} &\qw &\qw &\measureD{b} }\end{aligned}~. \end{equation}

Processes with no inputs (such as $\rho$ in the above diagram) are called \emph{states}, those with no outputs (such as $a$ and $b$) are called \emph{effects} and, those with both inputs and outputs (such as $\cA$, $\cA'$, $\cB$) {are called} transformations. We define:

\begin{enumerate}
\item$\mathsf{St}\left(\mathrm{A}\right)$ as the set of states of system $\mathrm{A}$,
\item $\mathsf{Eff}\left(\mathrm{A}\right)$ as the set of effects on $\mathrm{A}$,
\item $\mathsf{Transf}\left(\mathrm{A},\mathrm{B}\right)$ as the set of transformations
from $\mathrm{A}$ to $\mathrm{B}$, and $\mathsf{Transf}\left(\mathrm{A}\right)$
as the set of transformations from $\mathrm{A}$ to $\mathrm{A}$,
\item $\mathcal{B}\circ\mathcal{A}$ (or $\mathcal{B}\mathcal{A}$, for
short) as the sequential composition of two transformations $\mathcal{A}$
and $\mathcal{B}$, with the input of $\mathcal{B}$ matching the
output of $\mathcal{A}$,
\item $\mathcal{A}\otimes\mathcal{B}$ as the parallel composition (or tensor
product) of the transformations $\mathcal{A}$ and $\mathcal{B}$.
\end{enumerate}

OPTs include a particular system, the trivial system $\mathrm{I}$,
representing the lack of input or output for a particular device.

Hence, states (resp.\ effects) are transformations with the trivial system as input (resp.\ output). Circuits with no external wires, like the circuit in Equation~\eqref{circuit}, are called scalars and are associated with probabilities. We will often use the notation $\left(a\middle|\rho\right)$ to denote the
circuit\[ \left(a\middle|\rho\right)~:=\!\!\!\!\begin{aligned}\Qcircuit @C=1em @R=.7em @!R { & \prepareC{\rho} & \qw \poloFantasmaCn{\rA} &\measureD{a}}\end{aligned}~, \]and
of the notation $\left(a\middle|\mathcal{C}\middle|\rho\right)$ to denote
the circuit\[ \left(a\middle|\cC\middle|\rho\right)~:=\!\!\!\!\begin{aligned}\Qcircuit @C=1em @R=.7em @!R { & \prepareC{\rho} & \qw \poloFantasmaCn{\rA} &\gate{\cC} &\qw \poloFantasmaCn{\rB} &\measureD{a}}\end{aligned}~. \]

The fact that scalars are probabilities and so are real numbers induces a notion of a sum of
transformations, so that the sets $\mathsf{St}\left(\mathrm{A}\right)$,
$\mathsf{Transf}\left(\mathrm{A},\mathrm{B}\right)$, and $\mathsf{Eff}\left(\mathrm{A}\right)$
become spanning sets of real vector spaces, denoted by $\mathsf{St}_{\mathbb{R}}\left(\mathrm{A}\right)$,
$\mathsf{Transf}_{\mathbb{R}}\left(\mathrm{A},\mathrm{B}\right)$,
and $\mathsf{Eff}_{\mathbb{R}}\left(\mathrm{A}\right)$. In this work
we will restrict our attention to \emph{finite} systems, i.e.,\ systems
for which the vector space spanned by states is finite-dimensional
for all systems. Operationally this assumption means that one need not perform an infinite number of distinct experiments to fully characterise a state. Restricting ourselves to non-negative real numbers,
we have the convex cone of states and of effects, denoted by $\mathsf{St}_{+}\left(\mathrm{A}\right)$
and $\mathsf{Eff}_{+}\left(\mathrm{A}\right)$ respectively. We moreover make the assumption that the set of states is closed. Operationally this is justified by the fact that up to any experimental error a state space is indistinguishable from its closure.

The composition of states and effects leads naturally to  a norm. This is defined, for states $\rho$ as $\left\Vert \rho\right\Vert :=\sup_{a\in\mathsf{Eff}\left(\mathrm{A}\right)}\left(a\middle|\rho\right)$,
and similarly for effects $a$ as $\left\Vert a\right\Vert :=\sup_{\rho\in\mathsf{St}\left(\mathrm{A}\right)}\left(a\middle|\rho\right)$.
The set of normalised states (resp.\ effects) of system $\mathrm{A}$ is denoted by $\mathsf{St}_{1}\left(\mathrm{A}\right)$ (resp.\ $\mathsf{Eff}_{1}\left(\mathrm{A}\right)$).

Transformations are characterised by their action on states of composite
systems: if $\mathcal{A},\mathcal{A}'\in\mathsf{Transf}\left(\mathrm{A},\mathrm{B}\right)$,
we have that $\mathcal{A}=\mathcal{A}'$ if and only if

\begin{equation}\label{eq:equality transformations}
\begin{aligned}\Qcircuit @C=1em @R=.7em @!R { & \multiprepareC{1}{\rho} & \qw \poloFantasmaCn{\rA} & \gate{\cA} & \qw \poloFantasmaCn{\rB} & \qw \\ & \pureghost{\rho} & \qw \poloFantasmaCn{\rS} & \qw &\qw &\qw }\end{aligned}~=\!\!\!\!\begin{aligned}\Qcircuit @C=1em @R=.7em @!R { & \multiprepareC{1}{\rho} & \qw \poloFantasmaCn{\rA} & \gate{\cA'} & \qw \poloFantasmaCn{\rB} & \qw \\ & \pureghost{\rho} & \qw \poloFantasmaCn{\rS} & \qw &\qw &\qw }\end{aligned}~,
\end{equation}for every system $\mathrm{S}$ and every state $\rho\in\mathsf{St}\left(\mathrm{A}\otimes\mathrm{S}\right)$. However it follows that \cite{Chiribella-purification} effects (resp. states) are completely defined by their action on states (resp. effects) of a single system.

Equality on states of the single system $\mathrm{A}$ is, in general, not enough
to discriminate between $\mathcal{A}$ and $\mathcal{A}'$, as is the case for quantum theory over real Hilbert spaces \cite{Wootters-real}.
However, for the scope of the present article, which focuses on single{-}system properties, we {often} concern ourselves with equality on single system.
\begin{Definition}
Two transformations $\mathcal{A},\mathcal{A}'\in\mathsf{Transf}\left(\mathrm{A},\mathrm{B}\right)$
are \emph{equal on single system}, denoted by $\mathcal{A}\doteq\mathcal{A}'$,
if $\mathcal{A}\rho=\mathcal{A}'\rho$ for all states $\rho\in\mathsf{St}\left(\mathrm{A}\right)$.
\end{Definition}

\subsection{Tests and channels}

In general, the boxes corresponding to physical processes come equipped with classical pointers. When used in an experiment, the final position of the a given pointer indicates the particular process which occurred for that box in that run. In general, this procedure can be non-deterministic. These non-deterministic
processes are described by \emph{tests} \cite{Chiribella-purification,QuantumFromPrinciples}:
a test from $\mathrm{A}$ to $\mathrm{B}$ is a collection of transformations
$\left\{ \mathcal{C}_{i}\right\} _{i\in\mathsf{X}}$ from $\mathrm{A}$
to $\mathrm{B}$, where $\mathsf{X}$ is the set of outcomes. If $\mathrm{A}$
(resp.\ $\mathrm{B}$) is the trivial system, the test is called
a \emph{preparation-test} (resp.\ \emph{observation-test}). If the
set of outcomes $\mathsf{X}$ has a single element, we say that the
test is \emph{deterministic}, because only one transformation can
occur. Deterministic transformations will be called \emph{channels}.

A channel $\mathcal{U}$ from $\mathrm{A}$ to $\mathrm{B}$ is \emph{reversible}
if there exists another channel $\mathcal{U}^{-1}$ from $\mathrm{B}$
to $\mathrm{A}$ such that $\mathcal{U}^{-1}\mathcal{U}=\mathcal{I}_{\mathrm{A}}$
and $\mathcal{U}\mathcal{U}^{-1}=\mathcal{I}_{\mathrm{B}}$, where
$\mathcal{I}_{\mathrm{S}}$ is the identity transformation on system $\mathrm{S}$.
If there exists a reversible channel transforming $\mathrm{A}$ into
$\mathrm{B}$, we say that $\mathrm{A}$ and $\mathrm{B}$ are \emph{operationally
equivalent}, denoted as $\mathrm{A}\simeq\mathrm{B}$. The composition
of systems is required to be \emph{symmetric}, meaning that $\mathrm{A}\otimes\mathrm{B}\simeq\mathrm{B}\otimes\mathrm{A}$.
Physically, this means that for every pair of systems there exists
a reversible channel swapping them. A state $\chi$ is called \emph{invariant} if $\mathcal{U}\chi=\chi$
for all reversible channels $\mathcal{U}$.

A particularly useful class of observation-tests allows for the following.
\begin{Definition}
The states $\left\{ \rho_{i}\right\} _{i\in\mathsf{X}}$ are called
\emph{perfectly distinguishable} if there exists an observation-test
$\left\{ a_{i}\right\} _{i\in\mathsf{X}}$ such that $\left(a_{i}\middle|\rho_{j}\right)=\delta_{ij}$
for all $i,j\in\mathsf{X}$.

Moreover, if there is \emph{no} other state $\rho_{0}$ such that
the states $\left\{ \rho_{i}\right\} _{i\in\mathsf{X}}\cup\left\{ \rho_{0}\right\} $
are perfectly distinguishable, the set $\left\{ \rho_{i}\right\} _{i\in\mathsf{X}}$
is said \emph{maximal}.
\end{Definition}

\subsection{Pure transformations}

There are various different ways to define pure transformations, for example in terms of resources \cite{Horodecki-Oppenheim,Nicole,Chiribella-Scandolo-entanglement,TowardsThermo,Purity} or ``side information'' \cite{QuantumFromPrinciples,Selby-leaks}. Informally pure transformations correspond to an experimenter having maximal control of or information about a process.
Here, we formalise this notion
by defining
the notion of a \emph{coarse-graining}
\cite{Chiribella-purification}. Coarse-graining is the operation of joining two or more outcomes of
a test into a single outcome. More precisely, a test $\left\{ \mathcal{C}_{i}\right\} _{i\in\mathsf{X}}$
is a \emph{coarse-graining} of the test $\left\{ \mathcal{D}_{j}\right\} _{j\in\mathsf{Y}}$
if there is a partition $\left\{ \mathsf{Y}_{i}\right\} _{i\in\mathsf{X}}$
of $\mathsf{Y}$ such that, for all $i\in\mathsf{X}$
\[
\mathcal{C}_{i}=\sum_{j\in\mathsf{Y}_{i}}\mathcal{D}_{j}
\]

In this case, we say that the test $\left\{ \mathcal{D}_{j}\right\} _{j\in\mathsf{Y}}$
is a \emph{refinement} of the test $\left\{ \mathcal{C}_{i}\right\} _{i\in\mathsf{X}}$,
and that the transformations $\left\{ \mathcal{D}_{j}\right\} _{j\in\mathsf{Y}_{i}}$
are a refinement of the transformation $\mathcal{C}_{i}$. A transformation
$\mathcal{C}\in\mathsf{Transf}\left(\mathrm{A},\mathrm{B}\right)$
is \emph{pure} if it has only trivial refinements, namely refinements
$\left\{ \mathcal{D}_{j}\right\} $ of the form $\mathcal{D}_{j}=p_{j}\mathcal{C}$,
where $\left\{ p_{j}\right\} $ is a probability distribution.
We denote the sets of pure transformations,
pure states, and pure effects as $\mathsf{PurTransf}\left({\rm A,{\rm B}}\right)$,
$\mathsf{PurSt}\left({\rm A}\right)$, and $\mathsf{PurEff}\left({\rm A}\right)$
respectively. Similarly, $\mathsf{PurSt}_{1}\left({\rm A}\right)$,
and $\mathsf{PurEff}_{1}\left({\rm A}\right)$ denote \emph{normalised}
pure states and effects respectively. Non-pure states are
called \emph{mixed}.
\begin{Definition}
Let $\rho\in\mathsf{St}_{1}\left(\mathrm{A}\right)$. A normalised
state $\sigma$ is \emph{contained} in $\rho$ if we can write $\rho=p\sigma+\left(1-p\right)\tau$,
where $p\in\left(0,1\right]$ and $\tau$ is another normalised state.
\end{Definition}
{
Clearly, no states are contained in a pure state. On the other edge of the spectrum we have complete states.
\begin{Definition}
A state $\omega\in\mathsf{St}_{1}\left(\mathrm{A}\right)$ is \emph{complete}
if every state is contained in it.
\end{Definition}}
\vspace{-12pt}

\begin{Definition}
We say that two transformations $\mathcal{A},\mathcal{A}'\in\mathsf{Transf}\left(\mathrm{A},\mathrm{B}\right)$
are \emph{equal upon input} of the state $\rho\in\mathsf{St}_{1}\left(\mathrm{A}\right)$
if $\mathcal{A}\sigma=\mathcal{A}'\sigma$ for every state $\sigma$
contained in $\rho$. In this case we will write $\mathcal{A}=_{\rho}\mathcal{A}'$.
\end{Definition}

\subsection{Causality}

A natural requirement of a physical theory is that it is \emph{causal}, that is, no signals can be sent from the future to the past. In the OPT framework this is formalised as follows:
\begin{ax}[Causality \cite{Chiribella-purification,QuantumFromPrinciples}]
The probability that a transformation occurs is independent of the
choice of tests performed on its output.
\end{ax}
Causality is equivalent to the requirement that, for every system $\mathrm{A}$,
there exists a unique deterministic effect $u_{\mathrm{A}}$ on $\mathrm{A}$
(or simply $u$, when no ambiguity can arise) \cite{Chiribella-purification}.
Owing to the uniqueness of the deterministic effect, the marginals
of a bipartite state can be uniquely defined as:
\[
\begin{aligned}\Qcircuit @C=1em @R=.7em @!R { &\prepareC{\rho_\rA} & \qw \poloFantasmaCn{\rA} &\qw } \end{aligned} ~:= \!\!\!\! \begin{aligned}\Qcircuit @C=1em @R=.7em @!R { &\multiprepareC{1}{\rho_{\rA\rB}} & \qw \poloFantasmaCn{\rA} &\qw \\ & \pureghost{\rho_{\rA\rB}} & \qw \poloFantasmaCn{\rB} & \measureD{u} }\end{aligned}~,
\]

Moreover, this uniqueness forbids the ability to signal \cite{Chiribella-purification,Coecke-no-signalling}. We will denote
by $\mathrm{Tr}_{\mathrm{B}}\rho_{\mathrm{AB}}$ the marginal on system
$\mathrm{A}$, in analogy with the notation used in the quantum case. We will stick to the notation $\mathrm{Tr}$ in formulas where the
deterministic effect is applied directly to a state, e.g.,\ $\mathrm{Tr}\:\rho:=\left(u\middle|\rho\right)$.

In a causal theory it is easy to see that the norm of a state takes
the form $\left\Vert \rho\right\Vert =\mathrm{Tr}\:\rho$, and that
a state can be prepared deterministically if and only if it is normalised.

\section{Higher-order interference} \label{higher-order interference}

The definition of higher-order interference we shall present in this section takes its motivation from the set-up of multi-slit interference experiments. In such experiments a particle passes through slits in a physical barrier and is detected at a screen. By repeating the experiment many times, one builds up a pattern on the screen. To determine if this experiment exhibits interference one compares this pattern to those produced when certain subsets of the slits are blocked. In quantum theory, for example, the two-slit experiment exhibits interference as the pattern formed with both slits open is not equal to the sum of the one-slit patterns.

Consider the state of the particle just before it passes through the slits. For every slit, there should exist states such that the particle is definitely found at that slit, if measured. Mathematically, this means that there is a face \cite{Barnum-interference} of the state space, such that all states in this face give unit probability for the ``yes'' outcome of the two-outcome measurement ``is the particle at this slit?''. Recall that a face is a convex set with the property that if $px+\left(1-p\right)y$, for $0 \leq p \leq 1$, is an element then $x$ and $y$ are also elements. These faces will be labelled $F_i$, one for each of the $n$ slits $i\in\left\{1,\ldots, n\right\}$. As the slits should be perfectly distinguishable, the faces associated with each slit should be perfectly distinguishable, or orthogonal. One can additionally ask coarse-grained questions of the form ``Is the particle found among a certain subset of slits, rather than somewhere else?''. The set of states that give outcome ``yes'' with probability one must contain all the faces associated with each slit in the subset. Hence the face associated with the subset of slits $\mathsf{I}\subseteq\left\{1,\ldots, n\right\}$ is the smallest face containing each face in this subset $F_{\mathsf{I}}:=\bigvee_{i\in \mathsf{I}} F_i$, where the operation $\bigvee$ is the least upper bound of the lattice of faces where the ordering is provided by subset inclusion of one face within another. The face $F_{\mathsf{I}}$ contains all those states which can be found among the slits contained in $\mathsf{I}$. The experiment is ``complete'' if all states in the state space (of a given system $\mathrm{A}$) can be found among some subset of slits. That is, if $F_{12\cdots n}=\mathsf{St}\left(\mathrm{A}\right)$.

An $n$-slit experiment requires a system that has $n$ orthogonal faces $F_i$, with $i\in\left\{1,\ldots,n\right\}$. Consider an effect $E$ associated with finding a particle at a particular point on the screen. We now formally define an $n$-slit experiment.

\begin{Definition}\label{def:faces interference}
An $n$-slit experiment is a collection of effects $e_{\mathsf{I}}$, where $ \mathsf{I}\subseteq \left\{1,\ldots,n\right\}${,} such that $$ \begin{aligned}
\left(e_{\mathsf{I}}\middle|\rho\right)&=\left(E\middle|\rho\right),\hspace{3mm} \forall \rho\in F_{\mathsf{I}}, \text{ and} \\
\left(e_{\mathsf{I}}\middle|\rho\right)&= 0,\hspace{3mm} \forall \rho \textrm{ where }\rho\perp F_{\mathsf{I}}.
\end{aligned}
$$
\end{Definition}
The effects introduced in the above definition arise from the conjunction of blocking off the slits $\left\{1,\ldots,n\right\}\setminus \mathsf{I}$ and applying the effect $E$. If the particle was prepared in a state such that it would be unaffected by the blocking of the slits (i.e.,\ $\rho\in F_{\mathsf{I}}$) then we should have $\left(e_{\mathsf{I}}\middle|\rho\right)=\left(E\middle|{\rho}\right)$. If instead the particle is prepared in a state which is guaranteed to be blocked (i.e.,\ $\rho'\perp F_{\mathsf{I}}$) then the particle should have no probability of being detected at the screen, i.e., $\left(e_{\mathsf{I}}\middle|\rho'\right)=0$.

The relevant quantities for the existence of various orders of interference are \cite{Barnum-1,Ududec-3slits,Sorkin1,Lee-Selby-interference}:
\begin{eqnarray}
& &I_1:=\left(E\middle|\rho\right),\\
& &I_2:=\left(E\middle|\rho\right)-\left(e_1\middle|\rho\right)-\left(e_2\middle|\rho\right),\\
& &I_3:=\left(E\middle|\rho\right)-\left(e_{12}\middle|\rho\right)-\left(e_{23}\middle|\rho\right)-\left(e_{31}\middle|\rho\right)+\left(e_1\middle|\rho\right)+\left(e_2\middle|\rho\right)+\left(e_3\middle|\rho\right),\\
& &I_n:= \sum_{\varnothing\neq \mathsf{I}\subseteq\left\{1,\ldots,n\right\}}\left(-1\right)^{n-\left|\mathsf{I}\right|}\left(e_{\mathsf{I}}\middle|\rho\right), \label{HOI}
\end{eqnarray}
for some state $\rho${,} and defining $e_{\left\{1,\ldots,n\right\}}:=E$.

\begin{Definition}
A theory has $n$-th order interference if there exists a state $\rho$ and an effect $E$ such that $I_n\neq0$.
\end{Definition}

In a slightly different formal setting, it was shown in \cite{Sorkin1} that $I_n = 0 \implies I_{n+1}=0$, so if there is no $n$th order interference, there will be no $\left(n+1\right)$th order interference; the argument of \cite{Sorkin1} applies here.

It should be noted that there appears to be a lot of freedom in choosing a set of effects $\left\{e_{\mathsf{I}}\right\}$ to test for the existence of higher-order interference. Indeed, in arbitrary generalised theories this appears to be the case \cite{Lee-Selby-interference}. However, it is natural to ask whether there exists physical transformations $T_{\mathsf{I}}$ in the theory which correspond to leaving the subset of slits $\mathsf{I}$ open and blocking the rest. Hence  a unique $e_{\mathsf{I}}$ is assigned to each fixed $E$
defined as $e_{\mathsf{I}}=ET_{\mathsf{I}}$. Ruling out the existence of higher-order interference then reduces to proving certain properties of the $T_{\mathsf{I}}$. This will turn out to be the case in sharp theories with purification.

\section{Sharp theories with purification} \label{sharp theories with purification}

In this section we present the definition and important properties of sharp theories with purification{. They} were originally introduced in \cite{Diagonalization,TowardsThermo,Purity}
for the analysis of the foundations of thermodynamics and statistical
mechanics.

{Sharp theories with purification are causal theories defined by three
axioms. The first axiom\textemdash Purity Preservation\textemdash states
that no information can leak when two pure transformations are composed:}
\begin{ax}[Purity Preservation \cite{Scandolo14}]
Sequential and parallel compositions of pure transformations yield
pure transformations.
\end{ax}
The second axiom\textemdash Pure Sharpness\textemdash guarantees that
every system possesses at least one elementary property.
\begin{ax}[Pure Sharpness \cite{Diagonalization}]
For every system there exists at least one pure effect occurring
with unit probability on some state.
\end{ax}
These axioms are satisfied by both classical and quantum theory. Our third axiom---Purification---signals the departure
from classicality, and characterises when a physical theory admits a level of description where all deterministic processes are pure and reversible.

Given a normalised state $\rho_{\mathrm{A}}\in\mathsf{St}_{1}\left(\mathrm{A}\right)$,
a normalised pure state $\Psi\in\mathsf{PurSt}_{1}\left(\mathrm{A}\otimes\mathrm{B}\right)$
is a \emph{purification} of $\rho_{\mathrm{A}}$ if
\[ \begin{aligned}\Qcircuit @C=1em @R=.7em @!R { &\multiprepareC{1}{\Psi} & \qw \poloFantasmaCn{\rA} &\qw \\ & \pureghost{\Psi} & \qw \poloFantasmaCn{\rB} & \measureD{u} } \end{aligned} ~ =\!\!\!\! \begin{aligned}\Qcircuit @C=1em @R=.7em @!R { &\prepareC{\rho_{\rA}} & \qw \poloFantasmaCn{\rA} &\qw } \end{aligned}~{;} \]
in this case $\mathrm{B}$ is called the \emph{purifying system}. We say that a pure state $\Psi\in\mathsf{PurSt}\left(\mathrm{A}\otimes\mathrm{B}\right)$
is an \emph{essentially unique purification of its marginal $\rho_{\mathrm{A}}$}
\cite{QuantumFromPrinciples} if every other pure state $\Psi'\in\mathsf{PurSt}\left(\mathrm{A}\otimes\mathrm{B}\right)$
satisfying the purification condition must
be of the form\[\begin{aligned}\Qcircuit @C=1em @R=.7em @!R { & \multiprepareC{1}{\Psi'} & \qw \poloFantasmaCn{\rA} & \qw \\ & \pureghost{\Psi'} & \qw \poloFantasmaCn{\rB} & \qw }\end{aligned}~=\!\!\!\! \begin{aligned}\Qcircuit @C=1em @R=.7em @!R { & \multiprepareC{1}{\Psi} & \qw \poloFantasmaCn{\rA} & \qw &\qw &\qw \\ & \pureghost{\Psi} & \qw \poloFantasmaCn{\rB} & \gate{\cU} &\qw \poloFantasmaCn{\rB} &\qw }\end{aligned}~ ,\]for
some reversible channel $\mathcal{U}$.
\begin{ax}[Purification \cite{Chiribella-purification,QuantumFromPrinciples}]
Every state has a purification. Purifications are essentially unique.
\end{ax}
Quantum theory, both on complex and real Hilbert spaces, satisfies
Purification, and also Spekkens' toy model \cite{Disilvestro}. Examples of sharp theories with purification besides quantum theory include fermionic quantum theory \cite{Fermionic1,Fermionic2}, a superselected version of quantum theory known as double{d} quantum theory \cite{Purity}, and a recent extension of classical theory with the
theory of codits \cite{TowardsThermo}.

\subsection{Properties of sharp theories with purifications}

Sharp theories with purifications enjoy some nice properties, which
were mainly derived in Refs.~\cite{Diagonalization,TowardsThermo}. The first property is that every non-trivial system admits perfectly
distinguishable states~\cite{Diagonalization}, and that all maximal
sets of pure states have the same cardinality \cite{TowardsThermo}.
\begin{Proposition}
For every system $\mathrm{A}$ there is a positive integer $d_{\mathrm{A}}$,
called the \emph{dimension} of $\mathrm{A}$, such that all maximal
sets of pure states have $d_{\mathrm{A}}$ elements.
\end{Proposition}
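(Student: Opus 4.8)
The statement to prove is that, for a fixed system $\mathrm{A}$, every maximal set of perfectly distinguishable \emph{pure} states has the same cardinality; here ``maximal'' is in the sense of the Definition above (no further state can be adjoined keeping the set perfectly distinguishable). The plan is to assemble this from the diagonalisation theory of sharp theories with purification developed in Refs.~\cite{Diagonalization,TowardsThermo}, in three stages: existence, a reduction to complete states, and a rank-counting comparison.

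First I would settle existence and a normal form. Pure Sharpness provides a pure effect $a$ and a state $\psi$ with $(a|\psi)=1$; refining $\psi$ into pure states and using that a convex combination of numbers in $[0,1]$ equal to $1$ forces every summand to be $1$, one may take $\psi$ pure. One then grows a perfectly distinguishable set of pure states one element at a time: if $\{\alpha_i\}_{i=1}^{k}$ is perfectly distinguishable via $\{a_i\}$ and some possibly mixed $\rho_0$ can be adjoined, refining $\rho_0$ and using the same ``a convex combination equal to $0$ (resp. $1$) has all summands $0$ (resp. $1$)'' trick shows a \emph{pure} refinement of $\rho_0$ can be adjoined instead. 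Since perfectly distinguishable states are linearly independent and the state space is finite-dimensional, the process terminates, so maximal perfectly distinguishable sets of pure states exist; the same decomposition argument shows such a set is in fact maximal among \emph{all} perfectly distinguishable sets of states.

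The core step pins the cardinality to something intrinsic. Let $\{\alpha_i\}_{i=1}^{n}$ and $\{\beta_j\}_{j=1}^{m}$ be two maximal perfectly distinguishable pure sets; I claim $\omega:=\tfrac1n\sum_{i=1}^{n}\alpha_i$ and $\omega':=\tfrac1m\sum_{j=1}^{m}\beta_j$ are \emph{complete} states. This follows from the lemma that the face generated by a maximal perfectly distinguishable set is all of $\mathsf{St}(\mathrm{A})$: if not, the complement face (nonempty in a sharp theory) would contain a state orthogonal to every $\alpha_i$, and orthogonality of a family of states is equivalent to joint perfect distinguishability in these theories, contradicting maximality. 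Since $\omega$ is complete, each $\beta_j$ is contained in $\omega$; as the $\beta_j$ are perfectly distinguishable and the decomposition $\omega=\tfrac1n\sum_i\alpha_i$ realises the diagonalisation rank $\mathrm{rank}(\omega)=n$, monotonicity of the diagonalisation rank under containment gives $n=\mathrm{rank}(\omega)\geq m$. By symmetry $m\geq n$, hence $n=m$, and we may set $d_{\mathrm{A}}$ to be this common value (one checks afterwards that $d_{\mathrm{A}}$ also equals the rank of the invariant state of $\mathrm{A}$, using uniqueness of the invariant state and completeness). Sorkin-style bookkeeping plays no role here; everything is convexity plus the diagonalisation apparatus.

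The main obstacle is precisely the two facts invoked in the last stage: that the face generated by a maximal perfectly distinguishable set is the whole state space (equivalently, that its uniform mixture is complete), and that the diagonalisation rank is monotone under the ``contained in'' relation. Both rest on the full existence-and-essential-uniqueness theory of diagonalisation (including how eigenvalues behave under coarse-graining and under restriction to faces) from Refs.~\cite{Diagonalization,TowardsThermo}, and proving them from the bare axioms is where essentially all the work lies. An alternative route, which trades these lemmas for a different nontrivial input, is induction on $d_{\mathrm{A}}$: use the reversible channels supplied by Purification (and Purity Preservation, which keeps the images pure) to arrange that two maximal sets share a first element $\alpha_1=\beta_1$, pass to the orthocomplement $\alpha_1^{\perp}$ on which $\{\alpha_i\}_{i\geq 2}$ and $\{\beta_j\}_{j\geq 2}$ are again maximal, and recurse; this requires instead the fact that $\alpha_1^{\perp}$ carries the state space of a genuine system, which is again part of the cited compression/diagonalisation framework.
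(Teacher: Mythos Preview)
The paper does not give its own proof of this proposition: it is stated as an imported result, with the sentence immediately preceding it attributing the existence of perfectly distinguishable states to Ref.~\cite{Diagonalization} and the constancy of the cardinality of maximal pure sets to Ref.~\cite{TowardsThermo}. So there is no in-paper argument to compare against; your proposal is really an attempt to reconstruct the proof from those references.

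As a reconstruction, your outline is in the right spirit and you are commendably explicit that the load-bearing lemmas (completeness of the uniform mixture of a maximal set, and rank monotonicity under containment) live in the diagonalisation theory of \cite{Diagonalization,TowardsThermo}. The concern is one of logical order rather than correctness of the individual steps. Several of the tools you invoke are, in \emph{this} paper's presentation, stated only \emph{after} the proposition: the diagonalisation theorem (Theorem~\ref{thm:diagonalisation}) is phrased in terms of a pure \emph{maximal} set of size $d$, i.e.\ it already presupposes $d_{\mathrm A}$; the state--effect dagger (Theorem~\ref{thm:duality}) and the whole complement-face/projector apparatus of Section~\ref{orthogonal projectors} likewise come later. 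So your argument, as written against the paper's ordering, is circular. In the original references the order is different (diagonalisation is first proved in a $d$-free form, and completeness of $\chi$ comes directly from Purification), and your sketch can be made non-circular only by explicitly working in that order. A more specific wobble: ``monotonicity of the diagonalisation rank under containment gives $n=\mathrm{rank}(\omega)\ge m$'' is not quite what you use; you need the intermediate step that $\omega':=\tfrac1m\sum_j\beta_j$ is contained in $\omega$ and has rank $m$, and then apply rank monotonicity to $\omega'\preceq\omega$. That step is fine once uniqueness of eigenvalues is available, but again you must secure that uniqueness without presupposing $d_{\mathrm A}$.

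Your alternative inductive route (align one element via a reversible channel, pass to the orthocomplement, recurse) is closer to how such results are often proved in the Purification framework, but note that it needs transitivity of reversible channels on pure states, which is \emph{not} among the axioms here and is itself a nontrivial consequence in sharp theories with purification; you should flag that as an additional input rather than bundling it under ``Purification and Purity Preservation''.
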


Note that we will omit the subscript $\mathrm{A}$ when the context is clear.

In sharp theories with purification every state can be diagonalised,
i.e.,\ written as a convex combination of perfectly distinguishable
pure states (cf.\ Refs.~\cite{Diagonalization,TowardsThermo}).
\begin{thm}
\label{thm:diagonalisation}Every normalised state $\rho\in\mathsf{St}_{1}\left(\mathrm{A}\right)$
of a non-trivial system can be decomposed as
\[
\rho=\sum_{i=1}^{d}p_{i}\alpha_{i},
\]
where $\left\{ p_{i}\right\} _{i=1}^{d}$ is a probability distribution,
and $\left\{ \alpha_{i}\right\} _{i=1}^{d}$ is a \emph{pure} maximal
set. Moreover, given $\rho$, $\left\{ p_{i}\right\} _{i=1}^{d}$
is unique up to rearrangements.
\end{thm}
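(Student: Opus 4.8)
The plan is to prove existence and uniqueness together by induction on the dimension $d=d_{\mathrm{A}}$, peeling one pure, perfectly distinguishable component off $\rho$ at each stage. The base case $d=1$ is immediate, since a one-dimensional system has a single normalised state, which is automatically pure, so $\rho=\alpha_{1}$. Assume $d>1$ and that the theorem holds for all systems of dimension strictly below $d$.

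\emph{Existence.} I would first extract the pure component of $\rho$ of maximal weight. Put $p_{1}:=\sup\{p\in[0,1]:\rho=p\,\alpha+(1-p)\,\tau,\ \alpha\in\mathsf{PurSt}_{1}(\mathrm{A}),\ \tau\in\mathsf{St}_{1}(\mathrm{A})\}$. Since $\mathrm{A}$ is finite-dimensional, $\mathsf{St}_{+}(\mathrm{A})$ is closed and $\mathsf{PurSt}_{1}(\mathrm{A})$ is compact, so the supremum is attained by some $\alpha_{1}$ and a normalised $\rho_{1}$ with $\rho=p_{1}\alpha_{1}+(1-p_{1})\rho_{1}$. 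The crux is that $\alpha_{1}$ and $\rho_{1}$ are perfectly distinguishable: using Pure Sharpness together with Purity Preservation and Purification one obtains a \emph{pure} effect $a_{1}$ that is normalised and sharp on $\alpha_{1}$, so $(a_{1}|\alpha_{1})=1$, whence maximality of $p_{1}$ forces $(a_{1}|\rho_{1})=0$ (otherwise $\rho$ would contain a pure summand of weight exceeding $p_{1}$). Hence $\rho_{1}$ lies in the ``orthogonal'' sector picked out by $a_{1}$, which in a sharp theory with purification is itself a system of dimension $d-1$. Applying the inductive hypothesis there gives $\rho_{1}=\sum_{i=2}^{d}\tfrac{p_{i}}{1-p_{1}}\alpha_{i}$ with $\{\alpha_{i}\}_{i=2}^{d}$ a pure maximal set of that sector, each $\alpha_{i}$ annihilated by $a_{1}$ and so perfectly distinguishable from $\alpha_{1}$; adjoining $a_{1}$ to the distinguishing test inherited from the sector shows $\{\alpha_{i}\}_{i=1}^{d}$ is perfectly distinguishable, and it is maximal because it has $d$ elements. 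Thus $\rho=\sum_{i=1}^{d}p_{i}\alpha_{i}$.

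\emph{Uniqueness.} Suppose $\rho=\sum_{i=1}^{d}p_{i}\alpha_{i}=\sum_{j=1}^{d}q_{j}\beta_{j}$ with distinguishing tests $\{a_{i}\}$ and $\{b_{j}\}$, so $(a_{i}|\alpha_{k})=\delta_{ik}$ and, by Causality, $\sum_{i}a_{i}=u$. Set $D_{ij}:=(a_{i}|\beta_{j})\ge0$. Then $\sum_{i}D_{ij}=(u|\beta_{j})=1$, and, using that the uniform mixture $\chi$ of a pure maximal set is independent of the set (a fact available in sharp theories with purification), $\sum_{j}D_{ij}=(a_{i}|d\,\chi)=d\,(a_{i}|\chi)=1$, because $(a_{i}|\chi)=(a_{i}|\tfrac1d\sum_{k}\alpha_{k})=\tfrac1d$. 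So $D$ is doubly stochastic, and since $p_{i}=(a_{i}|\rho)=\sum_{j}D_{ij}q_{j}$ we have $p=Dq$; symmetrically $q=D'p$ for a doubly stochastic $D'$. The Hardy--Littlewood--P\'olya (Birkhoff) theorem then gives $p\prec q\prec p$, so $p$ and $q$ agree up to a permutation.

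\emph{Main obstacle.} The delicate step is the perfect-distinguishability claim in the existence argument: manufacturing, from Pure Sharpness, an effect that is normalised and sharp on $\alpha_{1}$ alone, and then recognising the sector it annihilates as a bona fide $(d-1)$-dimensional system to which the induction descends. This is precisely where Purification does the real work, and it must be carried out without invoking the projector and face structure that the rest of the paper constructs. All of these ingredients are established in Refs.~\cite{Diagonalization,TowardsThermo}, and the proof would assemble them along the lines above.
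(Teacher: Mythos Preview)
The paper does not prove Theorem~\ref{thm:diagonalisation}; it is quoted as a known result from Refs.~\cite{Diagonalization,TowardsThermo}, so there is no in-paper argument to compare against. Your outline---extract a maximum-weight pure component, show it is perfectly distinguishable from the remainder, recurse on the orthogonal piece, and establish spectral uniqueness via doubly stochastic matrices and majorisation---is precisely the architecture used in those references, and your uniqueness argument is correct as written (granted the independence-of-maximal-set fact $\sum_j\beta_j=d\chi$, which you rightly flag).

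Two places in the existence half deserve tightening, though. First, your justification of the crux reads: if $(a_1|\rho_1)>0$ then ``$\rho$ would contain a pure summand of weight exceeding $p_1$''. What you actually obtain is $(a_1|\rho)=p_1+(1-p_1)(a_1|\rho_1)>p_1$, i.e.\ some pure \emph{effect} takes value $>p_1$ on $\rho$; to contradict the definition of $p_1$ you need a pure \emph{state} $\beta$ with $\rho-q\beta\in\mathsf{St}_+(\mathrm{A})$ for some $q>p_1$. Converting the effect-side bound into a state-side decomposition is exactly the nontrivial step---it fails in generic GPTs, and in sharp theories with purification it is carried by the pure non-disturbing transformation of Proposition~\ref{prop:non-disturbing} applied to the pure effect $a_1$ (the associated filter sends $\rho$ into the one-dimensional face spanned by $a_1^\dagger$ and ties its norm to $(a_1|\rho)$). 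You acknowledge this as the delicate step and defer to the references, which is consistent with what the paper itself does, but the parenthetical as stated is not yet an argument. Second, the ``orthogonal sector'' you recurse into is a \emph{face} of $\mathsf{St}_1(\mathrm{A})$, not in general a system of the theory (the paper makes this point explicitly in its discussion of Cauchy completion), so the induction has to be phrased for faces and one must check that the needed ingredients---a top pure component, a sharp pure effect identifying it, Proposition~\ref{prop:non-disturbing}---restrict appropriately; calling it ``a system of dimension $d-1$'' overstates what is available.
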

Such a decomposition is called a \emph{diagonalisation} of $\rho$,
the $p_{i}$'s are the \emph{eigenvalues} of $\rho$, and the $\alpha_{i}$'s
are the \emph{eigenstates}. Theorem~\ref{thm:diagonalisation} implies
that the eigenvalues of a state are unique, and independent of its
diagonalisation. Sharp theories with purification have a unique invariant
state $\chi$ \cite{Chiribella-purification}, which can be diagonalised as $\chi=\frac{1}{d}\sum_{i=1}^{d}\alpha_{i}$,
where $\left\{ \alpha_{i}\right\} _{i=1}^{d}$ is \emph{any} pure
maximal set \cite{TowardsThermo}.
Furthermore, the diagonalisation result of Theorem~\ref{thm:diagonalisation}
can be extended to every vector in $\mathsf{St}_{\mathbb{R}}\left(\mathrm{A}\right)$,
but here the eigenvalues will be generally real numbers \cite{TowardsThermo}.

One of the most important consequences for this paper of the axioms
defining sharp theories with purification is a duality between normalised
pure states and normalised pure effects.
\begin{thm}[States-effects duality \cite{Diagonalization,TowardsThermo}]
\label{thm:duality}For every system $\mathrm{A}$, there is a bijective
correspondence $\dagger:\mathsf{PurSt}_{1}\left(\mathrm{A}\right)\rightarrow\mathsf{PurEff}_{1}\left(\mathrm{A}\right)$
such that if $\alpha\in\mathsf{PurSt}_{1}\left(\mathrm{A}\right)$,
$\alpha^{\dagger}$ is the \emph{unique} normalised pure effect such
that $\left(\alpha^{\dagger}\middle|\alpha\right)=1$.
Furthermore this bijection can be extended by linearity to an isomorphism
between the vector spaces $\mathsf{St}_{\mathbb{R}}\left(\mathrm{A}\right)$
and $\mathsf{Eff}_{\mathbb{R}}\left(\mathrm{A}\right)$.
\end{thm}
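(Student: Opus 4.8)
The plan is to build $\dagger$ directly on normalised pure states, verify it is a well-defined bijection onto $\mathsf{PurEff}_{1}(\mathrm{A})$, and then extend it by linearity, using that $\mathsf{PurSt}_{1}(\mathrm{A})$ spans $\mathsf{St}_{\mathbb{R}}(\mathrm{A})$ (an immediate consequence of Theorem~\ref{thm:diagonalisation}). Two basic facts are used repeatedly: since a closed circuit is a probability, $(e|\rho)\le1$ for every effect $e$ and every normalised state $\rho$, so a pure effect taking the value $1$ on some normalised state is itself normalised; and the group of reversible channels acts transitively on $\mathsf{PurSt}_{1}(\mathrm{A})$, which holds in sharp theories with purification \cite{Diagonalization,TowardsThermo}. \emph{Existence.} By Pure Sharpness there is a pure effect $a_{0}$ with $(a_{0}|\rho_{0})=1$ for some normalised $\rho_{0}$; diagonalising $\rho_{0}=\sum_{i}p_{i}\beta_{i}$ and using $(a_{0}|\beta_{i})\le1$ together with $\sum_{i}p_{i}(a_{0}|\beta_{i})=1$ gives a pure $\beta_{0}$ with $(a_{0}|\beta_{0})=1$. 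For arbitrary $\alpha\in\mathsf{PurSt}_{1}(\mathrm{A})$, write $\alpha=\mathcal{U}\beta_{0}$ with $\mathcal{U}$ reversible and set $\alpha^{\dagger}:=a_{0}\circ\mathcal{U}^{-1}$; then $\alpha^{\dagger}$ is pure by Purity Preservation, normalised by the remark above, and $(\alpha^{\dagger}|\alpha)=(a_{0}|\beta_{0})=1$.

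\emph{Uniqueness and atomicity.} The heart of the argument is to show that $\alpha^{\dagger}$ is the \emph{only} pure effect taking the value $1$ on $\alpha$, and that $(\alpha^{\dagger}|\rho)<1$ for every normalised $\rho\neq\alpha$. Let $a$ be any pure effect with $(a|\alpha)=1$ and complete $\alpha=\alpha_{1}$ to a pure maximal set $\{\alpha_{1},\ldots,\alpha_{d}\}$, so $\chi=\frac{1}{d}\sum_{i}\alpha_{i}$. One argues that $(a|\alpha_{i})=0$ for $i\neq1$: conditioning a purification of $\chi$ on the pure effect $a$ yields a pure state (Purity Preservation) whose marginal, by essential uniqueness of purification, is a reversible image of $\alpha$ and hence perfectly distinguishable from the other eigenstates. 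Diagonalising an arbitrary normalised $\rho$ in the same maximal set (extending by linearity when necessary) then gives $(a|\rho)=q_{1}$, so $(a|\rho)\le1$ with equality iff $\rho=\alpha$; and $a$ is pinned down as the unique effect vanishing on $\mathrm{span}\{\alpha_{i}:i\neq1\}$ and equal to $1$ on $\alpha$. Hence $a=\alpha^{\dagger}$, which in particular shows $\alpha^{\dagger}$ is independent of the choice of $\mathcal{U}$.

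\emph{Bijectivity and linear extension.} Injectivity of $\dagger$ is immediate from atomicity: if $\alpha^{\dagger}=\beta^{\dagger}$ this effect equals $1$ on both $\alpha$ and $\beta$, forcing $\alpha=\beta$. For surjectivity, a normalised pure effect $b$ satisfies $\sup_{\rho}(b|\rho)=1$, attained at some normalised $\rho$ since the state space is closed and bounded; diagonalising $\rho$ produces a pure $\alpha$ with $(b|\alpha)=1$, whence $b=\alpha^{\dagger}$ by the uniqueness step. To extend $\dagger$ linearly, one uses the symmetry $(\alpha^{\dagger}|\beta)=(\beta^{\dagger}|\alpha)$ for pure $\alpha,\beta$ --- both sides being the transition probability between $\alpha$ and $\beta$, whose symmetry again follows from diagonalisation and Purification \cite{Diagonalization,TowardsThermo}. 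Given this, any linear relation $\sum_{i}c_{i}\alpha_{i}=0$ in $\mathsf{St}_{\mathbb{R}}(\mathrm{A})$ yields $\sum_{i}c_{i}(\alpha_{i}^{\dagger}|\beta)=(\beta^{\dagger}|\sum_{i}c_{i}\alpha_{i})=0$ for every pure $\beta$, and since pure states span $\mathsf{St}_{\mathbb{R}}(\mathrm{A})$ this forces $\sum_{i}c_{i}\alpha_{i}^{\dagger}=0$; so $\dagger$ extends to a well-defined linear map, whose inverse is obtained by the symmetric argument, giving a linear isomorphism from $\mathsf{St}_{\mathbb{R}}(\mathrm{A})$ onto $\mathsf{Eff}_{\mathbb{R}}(\mathrm{A})$.

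\emph{Main obstacle.} The delicate part is the atomicity/uniqueness step together with the symmetry of transition probabilities: these are exactly where Purification --- through essential uniqueness of purifications and the diagonalisation theorem it supports --- enters in an essential way, the rest being bookkeeping with Causality and Purity Preservation.
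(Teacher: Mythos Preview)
The paper does not contain its own proof of this theorem: it is imported from \cite{Diagonalization,TowardsThermo} and stated in Section~\ref{sharp theories with purification} without argument, so there is nothing here to compare your attempt against directly.

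Evaluating your proposal on its own merits, the uniqueness step has a genuine gap. You claim that once $(a|\alpha_{i})=\delta_{i1}$ on a fixed pure maximal set $\{\alpha_{1},\ldots,\alpha_{d}\}$, the pure effect $a$ is ``pinned down as the unique effect vanishing on $\mathrm{span}\{\alpha_{i}:i\neq1\}$ and equal to $1$ on $\alpha$''. But a pure maximal set has only $d$ elements, whereas $\dim\mathsf{St}_{\mathbb{R}}(\mathrm{A})$ is in general strictly larger than $d$ (for a qubit $d=2$ while the real dimension is $4$), so prescribing values on $\{\alpha_{i}\}$ does \emph{not} determine an effect. The companion claim that an arbitrary $\rho$ can be ``diagonalised in the same maximal set'' is likewise false: Theorem~\ref{thm:diagonalisation} only guarantees \emph{some} maximal set, not a prescribed one. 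The sentence about conditioning a purification of $\chi$ on $a$ is also garbled---the resulting pure state lives on the purifying system and has no ``marginal'', and it is not clear how essential uniqueness produces ``a reversible image of $\alpha$''. The actual uniqueness argument in the cited references does not reduce to values on a single frame.

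A secondary issue: for the linear extension you invoke the symmetry $(\alpha^{\dagger}|\beta)=(\beta^{\dagger}|\alpha)$ as an input from \cite{Diagonalization,TowardsThermo}. Within the present paper that symmetry is \emph{derived from} Theorem~\ref{thm:duality} (it is exactly the content of Lemma~\ref{lem:inner product}), so using it here would be circular in this paper's logical order; if you want a self-contained proof you need an independent route either to the symmetry or to well-definedness of the extension.
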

With a little abuse of notation we will use $\dagger$ also to denote
the inverse map $\mathsf{PurEff}_{1}\left(\mathrm{A}\right)\rightarrow\mathsf{PurSt}_{1}\left(\mathrm{A}\right)$,
by which, if $a\in\mathsf{PurEff}_{1}\left(\mathrm{A}\right)$, $a^{\dagger}$
is the unique pure state such that $\left(a\middle|a^{\dagger}\right)=1$.
Pure maximal sets $\left\{ \alpha_{i}\right\} _{i=1}^{d}$ have the
property that $\sum_{i=1}^{d}\alpha_{i}^{\dagger}=u$ \cite{TowardsThermo}.

A diagonalisation result holds for vectors of $\mathsf{Eff}_{\mathbb{R}}\left(\mathrm{A}\right)$ as well \cite{TowardsThermo}: they can be written as $X = \sum_{i=1}^{d}\lambda_i\alpha_i^{\dagger}$, where $\left\{\alpha_i\right\}_{i=1}^d$ is a pure maximal set. Again, the $\lambda_i$'s are uniquely defined given $X$.

Another result that will be made use of in the following sections is the following.
It was shown to hold in Ref.~\cite{TowardsThermo}, and expresses
the possibility of constructing non-disturbing measurements \cite{Chiribella-informational,Wehner-Pfister,Chiribella-Yuan2015}.
\begin{Proposition}
\label{prop:non-disturbing}Given a system $\mathrm{A}$, let $a\in\mathsf{Eff}\left(\mathrm{A}\right)$
be an effect such that $\left(a\middle|\rho\right)=1$, for some $\rho\in\mathsf{St}_{1}\left(\mathrm{A}\right)$.
Then there exists a pure transformation $\mathcal{T}\in\mathsf{PurTransf}\left(\mathrm{A}\right)$
such that $\mathcal{T}=_{\rho}\mathcal{I}$, {with} $\left(u\middle|\mathcal{T}\middle|\sigma\right)\le\left(a\middle|\sigma\right)$,
for every state $\sigma\in\mathsf{St}_{1}\left(\mathrm{A}\right)$.
\end{Proposition}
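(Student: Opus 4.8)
The plan is to reduce the statement to the existence of a pure ``projector'' onto a face, and then to build that projector by a dilation. First I would diagonalise $a$: by the diagonalisation of effects, $a=\sum_{j=1}^{d}\lambda_{j}\beta_{j}^{\dagger}$ for some pure maximal set $\{\beta_{j}\}_{j=1}^{d}$, with each $\lambda_{j}=(a|\beta_{j})\in[0,1]$ because $\beta_{j}$ is normalised and $0\le a\le u$. From $(a|\rho)=1$ and $\sum_{j}(\beta_{j}^{\dagger}|\rho)=(u|\rho)=1$ one gets $\sum_{j}(1-\lambda_{j})(\beta_{j}^{\dagger}|\rho)=0$, a sum of non-negative terms, so $(\beta_{j}^{\dagger}|\rho)=0$ whenever $\lambda_{j}<1$. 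Set $\mathsf{J}:=\{j:\lambda_{j}=1\}$ and $a':=\sum_{j\in\mathsf{J}}\beta_{j}^{\dagger}$. Then $(a'|\rho)=1$, while $a-a'=\sum_{j\notin\mathsf{J}}\lambda_{j}\beta_{j}^{\dagger}$ is a non-negative combination of pure effects, so $(a'|\sigma)\le(a|\sigma)$ for every state $\sigma$. Running the same cancellation for an arbitrary pure state $\alpha$ with $(a'|\alpha)=1$ gives $\alpha\perp\beta_{j}$ for all $j\notin\mathsf{J}$; combined with the standard behaviour of perfect distinguishability and orthocomplementation of faces in sharp theories with purification, this places $\rho$ --- and hence every state contained in $\rho$ --- in the face $F_{\mathsf{J}}:=\bigvee_{j\in\mathsf{J}}\{\beta_{j}\}$.

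It therefore suffices to produce a pure transformation $\Pi\in\mathsf{PurTransf}(\mathrm{A})$ that acts as the identity on all of $F_{\mathsf{J}}$ and has $u\circ\Pi=a'$: then $\mathcal{T}:=\Pi$ satisfies $\mathcal{T}=_{\rho}\mathcal{I}$, because every state contained in $\rho$ lies in $F_{\mathsf{J}}$ and is thus fixed by $\Pi$, and $(u|\mathcal{T}|\sigma)=(a'|\sigma)\le(a|\sigma)$ for every $\sigma$. The pure projector onto the whole state space is $\mathcal{I}$, and onto a singleton face $\{\beta_{j_{0}}\}$ it is $\beta_{j_{0}}\circ\beta_{j_{0}}^{\dagger}$, which is pure by Purity Preservation and has $u\circ\Pi=\beta_{j_{0}}^{\dagger}=a'$. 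For a general $\mathsf{J}$ I would realise $\Pi$ as a dilation of the sharp two-outcome test $\{a',u-a'\}$: introduce an ancilla $\mathrm{E}$ carrying perfectly distinguishable pure states $\eta_{0}\perp\eta_{1}$, take a reversible channel $\mathcal{U}$ on $\mathrm{A}\otimes\mathrm{E}$ that fixes $\sigma\otimes\eta_{0}$ whenever $\sigma\in F_{\mathsf{J}}$ but sends $\sigma\otimes\eta_{0}\mapsto\sigma\otimes\eta_{1}$ whenever $\sigma$ lies in the complementary face, and set $\Pi(\sigma):=(\mathcal{I}_{\mathrm{A}}\otimes\eta_{0}^{\dagger})\,\mathcal{U}\,(\sigma\otimes\eta_{0})$. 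By Purity Preservation $\Pi$ is pure; by construction it restricts to $\mathcal{I}$ on $F_{\mathsf{J}}$ and annihilates the complementary face, and from these properties of $\mathcal{U}$ one checks $u\circ\Pi=a'$.

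I expect the real obstacle to be exactly this last point: the existence, in an arbitrary sharp theory with purification, of the controlled reversible channel $\mathcal{U}$ --- equivalently, of the pure projector onto $F_{\mathsf{J}}$. This is where Purification (which supplies reversible dilations and their essential uniqueness) and Purity Preservation must be combined with the lattice structure of faces and with a transitivity-type property guaranteeing a reversible channel that moves $\eta_{0}$ to a pure state orthogonal to it; it is the substantive content of the result as established in Ref.~\cite{TowardsThermo}. By contrast, the remaining ingredients --- the bound $a'\le a$, the passage from $(a'|\rho)=1$ to $\rho\in F_{\mathsf{J}}$, and the verification that $\Pi$ fixes all of $F_{\mathsf{J}}$ and returns $u\circ\Pi=a'$ --- follow routinely from the diagonalisation theorem and the states-effects duality recorded above.
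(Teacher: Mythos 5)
Your preliminary reductions are correct and cleanly executed: the diagonalisation $a=\sum_{j}\lambda_{j}\beta_{j}^{\dagger}$, the deduction that $\left(\beta_{j}^{\dagger}\middle|\rho\right)=0$ whenever $\lambda_{j}<1$, the replacement of $a$ by the sharp effect $a'=\sum_{j\in\mathsf{J}}\beta_{j}^{\dagger}\leq a$ with $\left(a'\middle|\rho\right)=1$, and the observation that it suffices to exhibit a pure transformation fixing $F_{\mathsf{J}}$ pointwise with $u\Pi=a'$. But the proof stops exactly where the content of the proposition begins. The existence of the controlled reversible channel $\mathcal{U}$ is asserted, not established: a reversible channel is a single linear map on $\mathsf{St}_{\mathbb{R}}\left(\mathrm{A}\otimes\mathrm{E}\right)$, and prescribing its action only on $F_{\mathsf{J}}\otimes\eta_{0}$ and $F_{\mathsf{J}}^{\perp}\otimes\eta_{0}$ neither determines it elsewhere nor guarantees that any reversible extension exists; and without controlling its action on states outside those two faces you cannot even verify $u\Pi=a'$. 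Worse, your reduction runs in the opposite direction to the paper's own logic: Proposition~\ref{prop:existence projectors} (pure projectors on every face) is \emph{derived from} Proposition~\ref{prop:non-disturbing}, so closing your argument by appealing to the existence of a pure projector onto $F_{\mathsf{J}}$ would be circular within this development.

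For calibration: the paper does not prove this proposition at all --- it is imported wholesale from Ref.~\cite{TowardsThermo} --- so the benchmark is the argument there, which does not pass through faces or diagonalisation of $a$. Its essential ingredients are (i) the realisation of the observation-test $\left\{a,u-a\right\}$ as a reversible interaction with a pure ancilla followed by a measurement on the environment (Purification), refined into pure transformations $\mathcal{A}_{k}$ with $\sum_{k}u\mathcal{A}_{k}=a$ (Purity Preservation), and (ii) the essential uniqueness of purification applied to a purification $\Psi$ of $\rho$, which is what forces the relevant pure transformation to act, up to a correctable reversible channel on $\mathrm{A}$, as the identity on $\Psi$ and hence (by steering) on every state contained in $\rho$, while still satisfying $u\mathcal{T}\leq a$. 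That combination of dilation plus essential uniqueness is precisely the step your proposal labels ``the real obstacle'' and then defers; as written, the proposal is a correct reformulation of the statement rather than a proof of it.
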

Note that the pure transformation $\mathcal{T}$ is non-disturbing
on $\rho$ because it acts as the identity on $\rho$ and on all
states contained in it. In other words, whenever we have an effect occurring with unit probability on some state $\rho$,
we can always find a transformation that does not disturb $\rho$
(i.e.,\ a non-disturbing, non-demolition measurement) \cite{TowardsThermo}.

Finally, a property that we will use often is a sort of no-restriction
hypothesis for tests, derived in~\cite{Chiribella-informational} (Corollary 4).
\begin{Proposition}
A collection of transformations $\left\{ \mathcal{A}_{i}\right\} _{i\in\mathsf{X}}$
is a valid test if and only if $\sum_{i\in\mathsf{X}}u\mathcal{A}_{i}=u$.

\noindent
A collection of effects $\left\{ a_{i}\right\} _{i\in\mathsf{X}}$
is a valid observation-test if and only if $\sum_{i\in\mathsf{X}}a_{i}=u$.
\end{Proposition}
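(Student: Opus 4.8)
The plan is to dispatch the two ``only if'' directions by coarse-graining and to derive the two converses from Purification. For necessity, if $\{\mathcal{A}_i\}_{i\in\mathsf{X}}$ is a valid test from $\mathrm{A}$ to $\mathrm{B}$, then coarse-graining all of its outcomes yields a single-outcome test, i.e.\ a channel $\mathcal{C}=\sum_{i\in\mathsf{X}}\mathcal{A}_i$; composing $\mathcal{C}$ with the single-outcome observation-test $\{u_{\mathrm{B}}\}$ gives a single-outcome observation-test on $\mathrm{A}$, hence a deterministic effect, which by Causality equals $u_{\mathrm{A}}$, so $\sum_{i\in\mathsf{X}}u\mathcal{A}_i=u$. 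The same coarse-graining applied to an observation-test $\{a_i\}_{i\in\mathsf{X}}$ shows $\sum_{i\in\mathsf{X}}a_i$ is a deterministic effect, hence equals $u$.

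The substance is the converse. I would handle effects first and then bootstrap to transformations, combining two consequences of Purification --- dynamical faithfulness and steering. Fix a complete state $\omega\in\mathsf{St}_{1}(\mathrm{A})$ and a purification $\Phi\in\mathsf{PurSt}_{1}(\mathrm{A}\otimes\mathrm{C})$. By essential uniqueness of purification, $\Phi$ is dynamically faithful, so $a\mapsto(a\otimes\mathcal{I}_{\mathrm{C}})\Phi$ is injective on $\mathsf{Eff}(\mathrm{A})$ and carries $u_{\mathrm{A}}$ to the normalised marginal $\omega_{\mathrm{C}}:=\mathrm{Tr}_{\mathrm{A}}\Phi$; and, regarding $\Phi$ as a purification of $\omega_{\mathrm{C}}$ with purifying system $\mathrm{A}$, steering guarantees that every decomposition of $\omega_{\mathrm{C}}$ into subnormalised states is realised by some observation-test on $\mathrm{A}$. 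Given $\{a_i\}_{i\in\mathsf{X}}$ with $\sum_i a_i=u$, the states $\rho_i:=(a_i\otimes\mathcal{I}_{\mathrm{C}})\Phi$ are subnormalised and sum to $\omega_{\mathrm{C}}$, so steering supplies an observation-test $\{b_i\}_{i\in\mathsf{X}}$ on $\mathrm{A}$ with $(b_i\otimes\mathcal{I}_{\mathrm{C}})\Phi=\rho_i=(a_i\otimes\mathcal{I}_{\mathrm{C}})\Phi$, whence $b_i=a_i$ by faithfulness and $\{a_i\}_{i\in\mathsf{X}}$ is a valid observation-test. For transformations, given $\{\mathcal{A}_i\}_{i\in\mathsf{X}}$ with $\sum_i u\mathcal{A}_i=u$, the same argument runs with effects on $\mathrm{A}$ replaced by elements of $\mathsf{Transf}(\mathrm{A},\mathrm{B})$ and the states $\rho_i$ replaced by $(\mathcal{A}_i\otimes\mathcal{I}_{\mathrm{C}})\Phi\in\mathsf{St}(\mathrm{B}\otimes\mathrm{C})$: faithfulness and steering identify $\{\mathcal{A}_i\}_{i\in\mathsf{X}}$ with a genuine refinement of the channel $\mathcal{C}:=\sum_i\mathcal{A}_i$ (note that $u\mathcal{C}=u$, which in a causal theory already forces any transformation equal to $\mathcal{C}$ to be a channel), hence with a valid test.

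I expect the main obstacle to lie entirely in the converse, and within it in the two inputs from Purification: (i) dynamical faithfulness of the purification of a complete state --- injectivity, with the correct image, of $a\mapsto(a\otimes\mathcal{I}_{\mathrm{C}})\Phi$ and of its transformation analogue --- which is where essential uniqueness of purification (via pure steering) is genuinely used; and (ii) the steering property realising arbitrary decompositions of the marginal $\omega_{\mathrm{C}}$. Once these are available the remainder is bookkeeping, though some care is needed because these arguments are most naturally carried out ``on single system'' (the relation $\doteq$), so an additional step is required to promote the conclusions to genuine equalities of effects, transformations, and tests.
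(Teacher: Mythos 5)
The paper does not actually prove this proposition: it is imported verbatim from Corollary~4 of the cited reference of Chiribella, D'Ariano and Perinotti, so there is no in-paper argument to compare against. Your reconstruction follows the same route as that source: the ``only if'' directions by coarse-graining plus uniqueness of the deterministic effect, and the converses from Purification via steering and the dynamical faithfulness of a purification of a complete state. The necessity direction and the effects converse are correct as you describe them: the $\rho_i:=(a_i\otimes\mathcal{I}_{\mathrm{C}})\Phi$ are genuine subnormalised states decomposing $\omega_{\mathrm{C}}$, steering supplies an observation-test $\{b_i\}$ on $\mathrm{A}$ reproducing them, and injectivity of $a\mapsto(a\otimes\mathcal{I}_{\mathrm{C}})\Phi$ (which follows from completeness of $\omega$ together with steering on the other marginal, plus the fact that effects are determined by their action on single-system states) forces $b_i=a_i$.

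The transformation case, however, is not ``the same argument with $\mathsf{Eff}(\mathrm{A})$ replaced by $\mathsf{Transf}(\mathrm{A},\mathrm{B})$'', and this is where your sketch has a real gap. Steering, as you invoke it, only produces \emph{observation-tests} on a purifying system; it does not directly produce a test of transformations $\mathrm{A}\to\mathrm{B}$ reproducing the states $\sigma_i:=(\mathcal{A}_i\otimes\mathcal{I}_{\mathrm{C}})\Phi$. The missing step is: purify $\Sigma:=\sum_i\sigma_i$ to some $\Psi\in\mathsf{PurSt}(\mathrm{B}\otimes\mathrm{C}\otimes\mathrm{D})$, use steering to obtain an observation-test $\{d_i\}$ on $\mathrm{D}$ with $(\mathcal{I}_{\mathrm{B}\mathrm{C}}\otimes d_i)\Psi=\sigma_i$, and then invoke the separate consequence of essential uniqueness that two purifications of the same state ($\omega_{\mathrm{C}}$, purified by $\mathrm{A}$ via $\Phi$ and by $\mathrm{B}\otimes\mathrm{D}$ via $\Psi$) are connected by a \emph{channel} $\mathcal{E}:\mathrm{A}\to\mathrm{B}\otimes\mathrm{D}$ with $(\mathcal{E}\otimes\mathcal{I}_{\mathrm{C}})\Phi=\Psi$; only then is $\mathcal{B}_i:=(\mathcal{I}_{\mathrm{B}}\otimes d_i)\mathcal{E}$ a bona fide test equal to $\{\mathcal{A}_i\}$ by faithfulness. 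Relatedly, your parenthetical that $u\mathcal{C}=u$ ``already forces'' $\mathcal{C}$ to be a channel in a causal theory is precisely the single-outcome instance of the proposition being proved, so it cannot be assumed; it only comes out at the end of the argument above. With these steps supplied the proof is the standard one and goes through.
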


\section{Sharp theories with purification have no higher-order interference} \label{proofs}

Here we will show that sharp theories with purification do not exhibit higher-order interference. Our proof strategy will be to show that results of \cite{Barnum-interference}, which rule out the existence of higher-order interference from certain assumptions, hold in sharp theories with purification. To this end, we will first prove
that these theories are self-dual,
and that they admit \emph{pure} orthogonal projectors which satisfy certain properties{, compatible with the setting presented in Section~\ref{higher-order interference}.}

\subsection{Self-duality\label{subsec:Self-duality}}
Now we will prove that sharp theories with purification are self-dual. Recall that a theory is \emph{self-dual} if for every system $\mathrm{A}$
there is an inner product $\left\langle \bullet,\bullet\right\rangle $
on $\mathsf{St}_{\mathbb{R}}\left(\mathrm{A}\right)$ such that $\xi\in\mathsf{St}_{+}\left(\mathrm{A}\right)$
if and only if $\left\langle \xi,{\eta}\right\rangle \geq0$ for every
${\eta}\in\mathsf{St}_{+}\left(\mathrm{A}\right)$. To show that, we need to find a self-dualising inner product on $\mathsf{St}_{\mathbb{R}}\left(\mathrm{A}\right)$
for every system $\mathrm{A}$. The dagger will provide us with a
good candidate. First we need the following lemma.
\begin{Lemma}
\label{lem:normalised effects}Let $a\in\mathsf{Eff}_{1}\left(\mathrm{A}\right)$
be a normalised effect. Then $a$ can be diagonalised as  $a=\sum_{i\in\mathsf{I}}\alpha_{i}^{\dagger}+\sum_{j\in\mathsf{J}}\lambda_{j}\alpha_{j}^{\dagger}$, where $\mathsf{I}$ is a \emph{non-empty} subset of $\left\lbrace 1,\ldots,d\right\rbrace  $, and  $\mathsf{J}$ is  a (possibly empty) subset of the complement of $\mathsf{I}$, and $ \lambda_{j}\in\left( 0,1\right)  $ for every $ j\in\mathsf{J} $.
\end{Lemma}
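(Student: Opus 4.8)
The plan is to start from the effect-diagonalisation result recalled just above Lemma~\ref{lem:normalised effects}: any vector of $\mathsf{Eff}_{\mathbb{R}}\left(\mathrm{A}\right)$, and in particular $a$, can be written $a=\sum_{i=1}^{d}\lambda_{i}\alpha_{i}^{\dagger}$ for some pure maximal set $\left\{\alpha_{i}\right\}_{i=1}^{d}$, with the $\lambda_{i}$ uniquely determined by $a$. The lemma is then only a matter of pinning down the range of the coefficients $\lambda_{i}$ and reading off the index sets, so it suffices to prove two things: (i) $\lambda_{i}\in\left[0,1\right]$ for every $i$, and (ii) $\max_{i}\lambda_{i}=1$.

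The key technical ingredient for both points is the orthonormality relation $\left(\alpha_{j}^{\dagger}\middle|\alpha_{i}\right)=\delta_{ij}$. The case $i=j$ is precisely the states--effects duality of Theorem~\ref{thm:duality}. For $i\neq j$, I would use that a pure maximal set satisfies $\sum_{k=1}^{d}\alpha_{k}^{\dagger}=u$; evaluating this identity on the normalised state $\alpha_{i}$ gives $\sum_{k=1}^{d}\left(\alpha_{k}^{\dagger}\middle|\alpha_{i}\right)=\left(u\middle|\alpha_{i}\right)=1$, and since $\left(\alpha_{i}^{\dagger}\middle|\alpha_{i}\right)=1$ already while every term $\left(\alpha_{k}^{\dagger}\middle|\alpha_{i}\right)$ is a probability, hence non-negative, all the off-diagonal terms must vanish. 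It follows that $\left(a\middle|\alpha_{i}\right)=\lambda_{i}$; as $a$ is an effect and $\alpha_{i}$ a normalised state, this is a probability, so $\lambda_{i}\in\left[0,1\right]$, giving (i).

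For (ii), set $\lambda_{\max}:=\max_{i}\lambda_{i}$. On one side, $\left\Vert a\right\Vert\geq\left(a\middle|\alpha_{i}\right)=\lambda_{i}$ for every $i$, so $\left\Vert a\right\Vert\geq\lambda_{\max}$. On the other side, for any normalised state $\rho$ we have $\left(a\middle|\rho\right)=\sum_{j}\lambda_{j}\left(\alpha_{j}^{\dagger}\middle|\rho\right)\leq\lambda_{\max}\sum_{j}\left(\alpha_{j}^{\dagger}\middle|\rho\right)=\lambda_{\max}\left(u\middle|\rho\right)=\lambda_{\max}$, using again that each $\left(\alpha_{j}^{\dagger}\middle|\rho\right)\geq0$ and $\sum_{j}\alpha_{j}^{\dagger}=u$; hence $\left\Vert a\right\Vert\leq\lambda_{\max}$. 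Therefore $\left\Vert a\right\Vert=\lambda_{\max}$, and since $a$ is normalised this forces $\lambda_{\max}=1$. Finally I would partition $\left\{1,\ldots,d\right\}$ by the value of $\lambda_{i}$: let $\mathsf{I}:=\left\{i:\lambda_{i}=1\right\}$, non-empty by (ii); let $\mathsf{J}:=\left\{j:\lambda_{j}\in\left(0,1\right)\right\}$, disjoint from $\mathsf{I}$; and simply drop the indices with $\lambda_{i}=0$, which contribute nothing. This yields $a=\sum_{i\in\mathsf{I}}\alpha_{i}^{\dagger}+\sum_{j\in\mathsf{J}}\lambda_{j}\alpha_{j}^{\dagger}$ with $\lambda_{j}\in\left(0,1\right)$ for $j\in\mathsf{J}$, as required.

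I do not expect any serious obstacle here: everything rests on results already in hand (effect diagonalisation, $\sum_i\alpha_i^\dagger=u$, and Theorem~\ref{thm:duality}). The only point needing a little care is the orthonormality relation $\left(\alpha_{j}^{\dagger}\middle|\alpha_{i}\right)=\delta_{ij}$, which is the linchpin of both halves of the argument, so I would state and justify it explicitly before using it.
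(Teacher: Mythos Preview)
Your proposal is correct and follows essentially the same approach as the paper: diagonalise the effect, use $\sum_{k}\alpha_{k}^{\dagger}=u$ to bound $(a|\rho)\le\lambda_{\max}$, and conclude $\lambda_{\max}=1$ from $\|a\|=1$. The one minor difference is that the paper invokes closedness of the state space to pick a state $\rho$ with $(a|\rho)=1$ and then runs the inequality at that $\rho$, whereas you compute $\|a\|=\lambda_{\max}$ directly via a two-sided bound and so never need a state attaining the supremum; both routes are equally valid and yours is marginally more self-contained.
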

\begin{proof}
{W}e know that every effect $a$ can be written as $a=\sum_{i=1}^{r}\lambda_{i}\alpha_{i}^{\dagger}$,
where $r\leq d$, the {pure} states $\left\{ \alpha_{i}\right\} _{i=1}^{r}$
are perfectly distinguishable, and for every $i\in\left\{ 1,\ldots,r\right\} $,
$\lambda_{i}\in\left(0,1\right]$. Since the state space is closed,
and $a$ is normalised, then there exist{s} a (normalised) state $\rho$
such that $\left(a\middle|\rho\right)=1$. {O}ne has
\[
1=\left(a\middle|{\rho}\right)=\sum_{i=1}^{r}\lambda_{i}\left(\alpha_{i}^{\dagger}\middle|{\rho}\right).
\]
Now, $\left(\alpha_{i}^{\dagger}\middle|{\rho}\right)\geq0$, and $\sum_{i=1}^{r}\left(\alpha_{i}^{\dagger}\middle|{\rho}\right)\leq1$
because
\[
\sum_{i=1}^{r}\left(\alpha_{i}^{\dagger}\middle|{\rho}\right)\leq\sum_{i=1}^{d}\left(\alpha_{i}^{\dagger}\middle|{\rho}\right)=\mathrm{Tr}\:{\rho}=1,
\]
where we have used the fact that $\sum_{i=1}^{d}\alpha_{i}^{\dagger}=u$. Then $\sum_{i=1}^{r}\lambda_{i}\left(\alpha_{i}^{\dagger}\middle|{\rho}\right)\leq\lambda_{\mathrm{max}}$,
where $\lambda_{\mathrm{max}}$ is the maximum of the $\lambda_{i}$'s.
Therefore, $\lambda_{\mathrm{max}}\geq1$, which implies $\lambda_{\mathrm{max}}=1$.
Now, the condition
\[
\sum_{i=1}^{r}\lambda_{i}\left(\alpha_{i}^{\dagger}\middle|{\rho}\right)=\lambda_{\mathrm{max}}
\]
means that $\lambda_{i}=\lambda_{\mathrm{max}}=1$ for every $i$ such that $\left(\alpha_{i}^{\dagger}\middle|{\rho}\right)>0$. This means that there always exists at least one eigenstate with eigenvalue 1, but in general there may be some eigenstates with eigenvalues strictly less than 1.
\end{proof}
We can use
this result to prove the following.
\begin{Lemma}
\label{lem:inner product}For every system $\mathrm{A}$, the map
\[
\left\langle \xi,\eta\right\rangle :=\left(\xi^{\dagger}\middle|\eta\right),
\]
for every $\xi,\eta\in\mathsf{St}_{\mathbb{R}}\left(\mathrm{A}\right)$
is an inner product on $\mathsf{St}_{\mathbb{R}}\left(\mathrm{A}\right)$.
\end{Lemma}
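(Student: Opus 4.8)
The plan is to verify directly that $\langle\xi,\eta\rangle:=\left(\xi^{\dagger}\middle|\eta\right)$ satisfies the three defining properties of an inner product on the finite-dimensional real vector space $\mathsf{St}_{\mathbb{R}}\left(\mathrm{A}\right)$: bilinearity, symmetry, and positive-definiteness. Bilinearity is immediate: the dagger map is linear by Theorem~\ref{thm:duality} (it is the linear extension of the pure-state/pure-effect bijection), and the pairing $\left(\bullet\middle|\bullet\right)$ is bilinear in state and effect, so $\left(\xi^{\dagger}\middle|\eta\right)$ is bilinear in $\xi$ and $\eta$.

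The main work is symmetry. First I would reduce to the case of pure states by linearity: since every vector in $\mathsf{St}_{\mathbb{R}}\left(\mathrm{A}\right)$ is a real linear combination of pure states (indeed every state diagonalises by Theorem~\ref{thm:diagonalisation}, and the span is all of $\mathsf{St}_{\mathbb{R}}\left(\mathrm{A}\right)$), it suffices to show $\left(\alpha^{\dagger}\middle|\beta\right)=\left(\beta^{\dagger}\middle|\alpha\right)$ for all normalised pure states $\alpha,\beta$. For this I would use Proposition~\ref{prop:non-disturbing}: since $\left(\alpha^{\dagger}\middle|\alpha\right)=1$, there is a pure transformation $\mathcal{T}$ with $\mathcal{T}=_{\alpha}\mathcal{I}$ and $\left(u\middle|\mathcal{T}\middle|\sigma\right)\le\left(\alpha^{\dagger}\middle|\sigma\right)$ for all states $\sigma$. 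Applying this with $\sigma=\beta$ gives $\left(u\middle|\mathcal{T}\middle|\beta\right)\le\left(\alpha^{\dagger}\middle|\beta\right)$; I then want to argue the reverse inequality (and symmetrically swap $\alpha\leftrightarrow\beta$) so that $\left(\alpha^{\dagger}\middle|\beta\right)$ is squeezed into a symmetric expression. The key observation is that $u\mathcal{T}$ is a normalised effect occurring with probability $1$ on $\alpha$, hence by the uniqueness in Theorem~\ref{thm:duality} combined with Lemma~\ref{lem:normalised effects} its ``eigenvalue-$1$ part'' is exactly $\alpha^{\dagger}$; pushing $\mathcal{T}$ to be pure and non-disturbing on $\alpha$ forces $u\mathcal{T}\doteq\alpha^{\dagger}$ (a pure transformation that acts as identity on $\alpha$ and is dominated by $\alpha^{\dagger}$ must coarse-grain to $\alpha^{\dagger}$, since $\alpha^{\dagger}$ itself is pure and admits only trivial refinements). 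This yields $\left(u\middle|\mathcal{T}\middle|\beta\right)=\left(\alpha^{\dagger}\middle|\beta\right)$. I expect this identification $u\mathcal{T}\doteq\alpha^{\dagger}$ to be the main obstacle: one must carefully use purity of $\mathcal{T}$, purity of $\alpha^{\dagger}$, and the domination inequality together, possibly invoking diagonalisation of the effect $u\mathcal{T}$ and the no-restriction result (Proposition on valid tests) to complete the refinement argument. An alternative, perhaps cleaner route to symmetry is to diagonalise both sides: choose a pure maximal set refining both $\alpha$ and $\beta$ simultaneously where possible, or use that for perfectly distinguishable pure $\{\alpha_i\}$ one has $\left(\alpha_i^{\dagger}\middle|\alpha_j\right)=\delta_{ij}$, reducing the general bilinear form to a genuinely symmetric Gram-type matrix.

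Finally, positive-definiteness: for $\xi\in\mathsf{St}_{\mathbb{R}}\left(\mathrm{A}\right)$ write $\xi=\sum_i\lambda_i\alpha_i$ in a diagonalising pure maximal set $\{\alpha_i\}$ (extending Theorem~\ref{thm:diagonalisation} to all real vectors as noted after it), with $\lambda_i\in\mathbb{R}$. Then using $\left(\alpha_i^{\dagger}\middle|\alpha_j\right)=\delta_{ij}$ we get $\langle\xi,\xi\rangle=\left(\xi^{\dagger}\middle|\xi\right)=\sum_i\lambda_i^2\ge0$, with equality iff all $\lambda_i=0$, i.e.\ iff $\xi=0$; here I use that the eigenvalues of a real vector are uniquely determined (so vanishing of all $\lambda_i$ genuinely forces $\xi=0$). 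This completes the verification that $\langle\bullet,\bullet\rangle$ is an inner product.
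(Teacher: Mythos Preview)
Your treatment of bilinearity and positive-definiteness is fine and matches the paper's. The gap is in your symmetry argument: even if you succeed in showing $u\mathcal{T}\doteq\alpha^{\dagger}$ for the pure non-disturbing transformation $\mathcal{T}$ associated with $\alpha$, this only gives you back $\left(u\middle|\mathcal{T}\middle|\beta\right)=\left(\alpha^{\dagger}\middle|\beta\right)$, which is a tautology at that point and says nothing about $\left(\beta^{\dagger}\middle|\alpha\right)$. Swapping $\alpha\leftrightarrow\beta$ produces a \emph{different} transformation $\mathcal{T}'$ and an inequality $\left(u\middle|\mathcal{T}'\middle|\alpha\right)\le\left(\beta^{\dagger}\middle|\alpha\right)$; there is no sandwich because the two inequalities involve unrelated quantities. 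Your fallback of choosing a pure maximal set ``refining both $\alpha$ and $\beta$ simultaneously'' is generically impossible: two arbitrary pure states need not belong to a common pure maximal set, so the $\delta_{ij}$ orthogonality you invoke does not apply to $\left(\alpha^{\dagger}\middle|\beta\right)$.

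The paper's route is genuinely different and is what makes symmetry go through. One \emph{defines} a new functional $\psi^{\ddagger}$ on states by $\left(\psi^{\ddagger}\middle|\sigma\right):=\left(\sigma^{\dagger}\middle|\psi\right)$ for pure $\psi$, checks it is a well-defined normalised effect (since $\left(\psi^{\ddagger}\middle|\psi\right)=1$), applies Lemma~\ref{lem:normalised effects} to write $\psi^{\ddagger}=\sum_{i\in\mathsf{I}}\alpha_i^{\dagger}+\sum_{j\in\mathsf{J}}\lambda_j\alpha_j^{\dagger}$, and then \emph{evaluates on the invariant state} $\chi$: on one hand $\left(\psi^{\ddagger}\middle|\chi\right)=\left(\chi^{\dagger}\middle|\psi\right)=\tfrac{1}{d}$, on the other hand the diagonal form gives $\tfrac{1}{d}\big(|\mathsf{I}|+\sum_{j}\lambda_j\big)$. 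This forces $|\mathsf{I}|=1$ and $\mathsf{J}=\varnothing$, so $\psi^{\ddagger}$ is pure and hence equals $\psi^{\dagger}$ by Theorem~\ref{thm:duality}. Linearity then gives $\rho^{\ddagger}=\rho^{\dagger}$ for all states, i.e.\ $\left(\rho^{\dagger}\middle|\sigma\right)=\left(\sigma^{\dagger}\middle|\rho\right)$. The key idea you are missing is this calibration against $\chi$; it is what converts the structural information of Lemma~\ref{lem:normalised effects} into the purity of $\psi^{\ddagger}$.
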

\begin{proof}
The map $\left\langle \bullet,\bullet\right\rangle $ is clearly bilinear
by construction, because the dagger is also linear. Let us show that
it is positive-definite. Take a non-null vector $\xi\in\mathsf{St}_{\mathbb{R}}\left(\mathrm{A}\right)$,
and diagonalise it as $\xi=\sum_{i=1}^{d}x_{i}\alpha_{i}$. Then
\[
\left\langle \xi,\xi\right\rangle =\left(\xi^{\dagger}\middle|\xi\right)=\sum_{i,j=1}^{d}x_{i}x_{j}\left(\alpha_{i}^{\dagger}\middle|\alpha_{j}\right)=\sum_{i=1}^{d}x_{i}^{2}>0,
\]
where we have used the fact that for perfectly distinguishable pure
states $\left(\alpha_{i}^{\dagger}\middle|\alpha_{j}\right)=\delta_{ij}$
\cite{TowardsThermo}.

The hard part is to prove that this bilinear map is symmetric,
namely $\left\langle \xi,\eta\right\rangle =\left\langle \eta,\xi\right\rangle $,
for every $\xi,\eta\in\mathsf{St}_{\mathbb{R}}\left(\mathrm{A}\right)$.
Let us define a new (double) dagger $\ddagger$. The double dagger
of a normalised state $\rho$ is an effect $\rho^{\ddagger}$ whose
action on normalised states $\sigma$ is defined as
\begin{equation}
\left(\rho^{\ddagger}\middle|\sigma\right):=\left(\sigma^{\dagger}\middle|\rho\right),\label{eq:double dagger}
\end{equation}
where $\dagger$ is the dagger of Theorem~\ref{thm:duality}. Note
that {E}quation~\eqref{eq:double dagger} is enough to characterise $\rho^{\ddagger}$
completely, and it guarantees that $\rho^{\ddagger}$ is a mathematically well-defined
effect, because it is linear and $\left(\sigma^{\dagger}\middle|\rho\right)\in\left[0,1\right]$.
Consider now $\rho$ {and $ \sigma $} to be a normalised \emph{pure} state ${\psi}$. Then $\left({\psi}^{\ddagger}\middle|{\psi}\right)=\left({\psi}^{\dagger}\middle|{\psi}\right)=1$,
this means that $\psi^{\ddagger}$ is normalised. 
By Lemma~\ref{lem:normalised effects},
${\psi}^{\ddagger}$ is of the form ${\psi}^{\ddagger}=\sum_{i\in\mathsf{I}}\alpha_{i}^{\dagger}+\sum_{j\in\mathsf{J}}\lambda_{j}\alpha_{j}^{\dagger}$,
where the pure states $\left\{ \alpha_{i}\right\}_{i\in\mathsf{I}}\cup\left\{ \alpha_{j}\right\}_{j\in\mathsf{J}}$
are perfectly distinguishable. Note that  ${\psi}^{\ddagger}$ is pure if and only if $\left| \mathsf{I}\right| =1 $, and $ \mathsf{J}=\varnothing $.
Let us evaluate ${\psi}^{\ddagger}$
{on} $\chi$:
\begin{equation}
\left({\psi}^{\ddagger}\middle|\chi\right)=\left(\chi^{\dagger}\middle|{\psi}\right)=\frac{1}{d}\mathrm{Tr}\:{\psi}=\frac{1}{d},\label{eq:double dagger chi}
\end{equation}
as prescribed by {E}quation~\eqref{eq:double dagger}. Now, since ${\psi}^{\ddagger}=\sum_{i\in\mathsf{I}}\alpha_{i}^{\dagger}+\sum_{j\in\mathsf{J}}\lambda_{j}\alpha_{j}^{\dagger}$,
we have
\begin{equation}
\left({\psi}^{\ddagger}\middle|\chi\right)=\sum_{i\in\mathsf{I}}\left( \alpha_{i}^{\dagger}\middle|\chi\right) +\sum_{j\in\mathsf{J}}\lambda_{j}\left( \alpha_{j}^{\dagger}\middle|\chi\right) =\frac{1}{d}\left( \left| \mathsf{I}\right| +\sum_{j\in\mathsf{J}}\lambda_{j}\right) ,\label{eq:double dagger chi 2}
\end{equation}
because $\left(\alpha_{i}^{\dagger}\middle|\chi\right)=\frac{1}{d}$ for
every $i$ \cite{TowardsThermo}. Since $ \left| \mathsf{I}\right|\geq1 $ and $\sum_{j\in\mathsf{J}}\lambda_{j}>0  $, a comparison between Equations~\eqref{eq:double dagger chi}
and \eqref{eq:double dagger chi 2} shows that it must be $\left| \mathsf{I}\right| =1 $ and $ \mathsf{J}=\varnothing $. This
means that ${\psi}^{\ddagger}$ is a (physical) pure effect, whence ${\psi}^{\ddagger}={\psi}^{\dagger}$ by Theorem~\ref{thm:duality}.
Now we can show that the double dagger $\ddagger$ actually coincides
with the dagger of Theorem~\ref{thm:duality}. Indeed, given a state
$\rho$, diagonalise it as $\rho=\sum_{i=1}^{d}p_{i}\alpha_{i}$.
One can easily show that the double dagger of {E}quation~\eqref{eq:double dagger}
is linear, so we have $\rho^{\ddagger}=\sum_{i=1}^{d}p_{i}\alpha_{i}^{\ddagger}$,
but we have just proved that $\alpha_{i}^{\ddagger}=\alpha_{i}^{\dagger}$
for pure states, so $\rho^{\ddagger}=\sum_{i=1}^{d}p_{i}\alpha_{i}^{\dagger}=\rho^{\dagger}$.
This means that $\ddagger=\dagger$, and that {E}quation~\eqref{eq:double dagger}
is nothing but a redefinition of the usual dagger. This means for
every normalised states we have
\begin{equation}
\left(\rho^{\dagger}\middle|\sigma\right)=\left(\sigma^{\dagger}\middle|\rho\right),\label{eq:dagger symmetric}
\end{equation}
and this extends linearly to all vectors $\xi,\eta\in\mathsf{St}_{\mathbb{R}}\left(\mathrm{A}\right)$.
We have proved that $\left\langle \bullet,\bullet\right\rangle $
is symmetric, and this concludes the proof.
\end{proof}
Note that the above result immediately yields the ``symmetry of transition probabilities'' as defined in Ref.~\cite{Alfsen-Shultz,Barnum-thermo-QPL}.

Now we prove that this inner product is invariant
under reversible transformations.
\begin{Proposition}
\label{prop:invariance reversible}For every $\xi,\eta\in\mathsf{St}_{\mathbb{R}}\left(\mathrm{A}\right)$
and every reversible channel $\mathcal{U}$ one has
\[
\left\langle \mathcal{U}\xi,\mathcal{U}\eta\right\rangle =\left\langle \xi,\eta\right\rangle .
\]
\end{Proposition}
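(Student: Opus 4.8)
The plan is to show that the dagger of Theorem~\ref{thm:duality} intertwines $\mathcal{U}$ with $\mathcal{U}^{-1}$, namely that $\left(\mathcal{U}\xi\right)^{\dagger}=\xi^{\dagger}\mathcal{U}^{-1}$ for every $\xi\in\mathsf{St}_{\mathbb{R}}\left(\mathrm{A}\right)$. Once this is established the proposition follows in one line:
\[
\left\langle\mathcal{U}\xi,\mathcal{U}\eta\right\rangle=\left(\left(\mathcal{U}\xi\right)^{\dagger}\middle|\mathcal{U}\eta\right)=\left(\xi^{\dagger}\mathcal{U}^{-1}\middle|\mathcal{U}\eta\right)=\left(\xi^{\dagger}\middle|\mathcal{U}^{-1}\mathcal{U}\middle|\eta\right)=\left(\xi^{\dagger}\middle|\eta\right)=\left\langle\xi,\eta\right\rangle ,
\]
using $\mathcal{U}^{-1}\mathcal{U}=\mathcal{I}$.

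First I would record the elementary facts I need about a reversible channel $\mathcal{U}$: it is a pure transformation, it preserves normalisation of states (since $u\mathcal{U}=u$ by Causality), and it maps any pure maximal set $\left\{\alpha_{i}\right\}_{i=1}^{d}$ to a pure maximal set $\left\{\mathcal{U}\alpha_{i}\right\}_{i=1}^{d}$ (the discriminating observation-test for the images is obtained by post-composing that of $\left\{\alpha_{i}\right\}$ with $\mathcal{U}^{-1}$, and maximality is pulled back along $\mathcal{U}^{-1}$). In particular, for any $\alpha\in\mathsf{PurSt}_{1}\left(\mathrm{A}\right)$ the state $\mathcal{U}\alpha$ is again normalised and pure, so $\left(\mathcal{U}\alpha\right)^{\dagger}$ is defined.

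The heart of the argument is the identity $\left(\mathcal{U}\alpha\right)^{\dagger}=\alpha^{\dagger}\mathcal{U}^{-1}$ for pure $\alpha$. The effect $\alpha^{\dagger}\mathcal{U}^{-1}$ is pure by Purity Preservation (composition of the pure effect $\alpha^{\dagger}$ with the pure channel $\mathcal{U}^{-1}$), it is normalised, and
\[
\left(\alpha^{\dagger}\mathcal{U}^{-1}\middle|\mathcal{U}\alpha\right)=\left(\alpha^{\dagger}\middle|\mathcal{U}^{-1}\mathcal{U}\middle|\alpha\right)=\left(\alpha^{\dagger}\middle|\alpha\right)=1 ;
\]
since by Theorem~\ref{thm:duality} there is a \emph{unique} normalised pure effect taking value $1$ on the pure state $\mathcal{U}\alpha$, this forces $\left(\mathcal{U}\alpha\right)^{\dagger}=\alpha^{\dagger}\mathcal{U}^{-1}$. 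I would then extend to arbitrary vectors by linearity: diagonalising $\xi=\sum_{i=1}^{d}x_{i}\alpha_{i}$ with $\left\{\alpha_{i}\right\}_{i=1}^{d}$ a pure maximal set, one has $\mathcal{U}\xi=\sum_{i=1}^{d}x_{i}\,\mathcal{U}\alpha_{i}$, and since $\dagger$ is linear on $\mathsf{St}_{\mathbb{R}}\left(\mathrm{A}\right)$,
\[
\left(\mathcal{U}\xi\right)^{\dagger}=\sum_{i=1}^{d}x_{i}\left(\mathcal{U}\alpha_{i}\right)^{\dagger}=\sum_{i=1}^{d}x_{i}\,\alpha_{i}^{\dagger}\mathcal{U}^{-1}=\xi^{\dagger}\mathcal{U}^{-1} .
\]

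The main obstacle, and really the only step that is not bookkeeping, is the pure-state identity $\left(\mathcal{U}\alpha\right)^{\dagger}=\alpha^{\dagger}\mathcal{U}^{-1}$: it hinges on knowing that $\mathcal{U}^{-1}$ is pure, so that Purity Preservation delivers purity of $\alpha^{\dagger}\mathcal{U}^{-1}$, together with the uniqueness half of the states-effects duality in Theorem~\ref{thm:duality}. One might try to avoid invoking purity of $\mathcal{U}^{-1}$ by leaning instead on the symmetry relation~\eqref{eq:dagger symmetric} already proven in Lemma~\ref{lem:inner product}, but that route turns out to be essentially circular, so passing through Purity Preservation is the clean way.
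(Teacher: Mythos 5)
Your proposal is correct and follows essentially the same route as the paper: establish $\left(\mathcal{U}\alpha\right)^{\dagger}=\alpha^{\dagger}\mathcal{U}^{-1}$ for normalised pure states via Purity Preservation and the uniqueness clause of Theorem~\ref{thm:duality}, extend by linearity to $\left(\mathcal{U}\xi\right)^{\dagger}=\xi^{\dagger}\mathcal{U}^{-1}$, and conclude in one line using $\mathcal{U}^{-1}\mathcal{U}=\mathcal{I}$. The extra care you take in justifying that $\mathcal{U}^{-1}$ is pure and that $\mathcal{U}\alpha$ is a normalised pure state is detail the paper leaves implicit, but the argument is the same.
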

\begin{proof}
To prove the statement, let us first prove that for a normalised pure
state $\alpha$ one has $\left(\mathcal{U}\alpha\right)^{\dagger}=\alpha^{\dagger}\mathcal{U}^{-1}$,
for every reversible channel $\mathcal{U}$. $\alpha^{\dagger}\mathcal{U}^{-1}$
is a pure effect and one has $\left(\alpha^{\dagger}\mathcal{U}^{-1}\middle|\mathcal{U}\alpha\right)=\left(\alpha^{\dagger}\middle|\alpha\right)=1$.
By the uniqueness of the dagger for normalised pure states, $\alpha^{\dagger}\mathcal{U}^{-1}=\left(\mathcal{U}\alpha\right)^{\dagger}$.
This can be extended by linearity to all vectors $\xi$ in $\mathsf{St}_{\mathbb{R}}\left(\mathrm{A}\right)$,
so $\left(\mathcal{U}\xi\right)^{\dagger}=\xi^{\dagger}\mathcal{U}^{-1}$.
Therefore, when we compute $\left\langle \mathcal{U}\xi,\mathcal{U}\eta\right\rangle $,
we have
\[
\left\langle \mathcal{U}\xi,\mathcal{U}\eta\right\rangle =\left(\xi^{\dagger}\middle|\mathcal{U}^{-1}\mathcal{U}\middle|\eta\right)=\left(\xi^{\dagger}\middle|\eta\right)=\left\langle \xi,\eta\right\rangle .
\]
\end{proof}
The fact that $\left\langle \bullet,\bullet\right\rangle $ is {an inner}
product allows us to define an additional norm {in} sharp theories with
purification: if $\xi\in\mathsf{St}_{\mathbb{R}}\left(\mathrm{A}\right)$,
define the \emph{dagger norm} as
\[
\left\Vert \xi\right\Vert _{\dagger}:=\sqrt{\left\langle \xi,\xi\right\rangle }.
\]

See Appendix~\ref{subsec:operational vs dagger} for an extended discussion on the properties of this norm.

Now we are ready to state the core of this subsection.
\begin{Proposition}
Sharp theories with purification are self-dual.
\end{Proposition}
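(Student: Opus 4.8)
The plan is to exhibit a self-dualising inner product on $\mathsf{St}_{\mathbb{R}}\left(\mathrm{A}\right)$ for every system $\mathrm{A}$, and the natural candidate is the inner product $\left\langle \xi,\eta\right\rangle :=\left(\xi^{\dagger}\middle|\eta\right)$ of Lemma~\ref{lem:inner product}. That lemma already establishes that $\left\langle \bullet,\bullet\right\rangle $ is a genuine (bilinear, symmetric, positive-definite) inner product, so the only thing left to verify is the cone condition: $\xi\in\mathsf{St}_{+}\left(\mathrm{A}\right)$ if and only if $\left\langle \xi,\eta\right\rangle \geq0$ for every $\eta\in\mathsf{St}_{+}\left(\mathrm{A}\right)$. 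Both implications will follow quickly from the diagonalisation result of Theorem~\ref{thm:diagonalisation} extended to arbitrary vectors of $\mathsf{St}_{\mathbb{R}}\left(\mathrm{A}\right)$, together with the orthonormality relation $\left(\alpha_{i}^{\dagger}\middle|\alpha_{j}\right)=\delta_{ij}$ for a pure maximal set $\left\{\alpha_{i}\right\}_{i=1}^{d}$ and the linearity of the dagger.

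For the ``only if'' direction, take $\xi\in\mathsf{St}_{+}\left(\mathrm{A}\right)$ and diagonalise it as $\xi=\sum_{i=1}^{d}x_{i}\alpha_{i}$ with $\left\{\alpha_{i}\right\}_{i=1}^{d}$ a pure maximal set; pairing with the pure effect $\alpha_{j}^{\dagger}$ gives $x_{j}=\left(\alpha_{j}^{\dagger}\middle|\xi\right)\geq0$, since any effect applied to a state of the cone yields a non-negative number. Hence all eigenvalues of a state in the cone are non-negative, and so for any $\eta\in\mathsf{St}_{+}\left(\mathrm{A}\right)$ one has $\left\langle \xi,\eta\right\rangle =\left(\xi^{\dagger}\middle|\eta\right)=\sum_{i=1}^{d}x_{i}\left(\alpha_{i}^{\dagger}\middle|\eta\right)\geq0$, because each $x_{i}\geq0$ and each $\left(\alpha_{i}^{\dagger}\middle|\eta\right)\geq0$.

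For the ``if'' direction, which carries the real content, suppose $\left\langle \xi,\eta\right\rangle \geq0$ for all $\eta\in\mathsf{St}_{+}\left(\mathrm{A}\right)$, and diagonalise the \emph{real} vector $\xi$ as $\xi=\sum_{i=1}^{d}x_{i}\alpha_{i}$ with $\left\{\alpha_{i}\right\}_{i=1}^{d}$ a pure maximal set and $x_{i}\in\mathbb{R}$. The key observation is that it is enough to test the hypothesis against the finitely many states $\alpha_{j}$ of this diagonalising set rather than against the whole dual cone: by $\left(\alpha_{i}^{\dagger}\middle|\alpha_{j}\right)=\delta_{ij}$ and linearity of $\dagger$ we get $\left\langle \xi,\alpha_{j}\right\rangle =\left(\xi^{\dagger}\middle|\alpha_{j}\right)=\sum_{i=1}^{d}x_{i}\left(\alpha_{i}^{\dagger}\middle|\alpha_{j}\right)=x_{j}$, so the hypothesis forces $x_{j}\geq0$ for every $j$; hence $\xi=\sum_{j=1}^{d}x_{j}\alpha_{j}$ is a conic combination of pure states and therefore lies in $\mathsf{St}_{+}\left(\mathrm{A}\right)$, completing the proof of self-duality. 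There is no serious obstacle once the extended diagonalisation theorem is available; the only subtlety worth flagging is that this argument needs \emph{no} form of the no-restriction hypothesis, because the eigenvalues of $\xi$ can be recovered simply by pairing it with the specific pure effects $\alpha_{j}^{\dagger}$ dual to a diagonalising basis.
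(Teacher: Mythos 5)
Your proposal is correct and follows essentially the same route as the paper: both directions rest on the diagonalisation of arbitrary vectors of $\mathsf{St}_{\mathbb{R}}\left(\mathrm{A}\right)$ together with the orthonormality relation $\left(\alpha_{i}^{\dagger}\middle|\alpha_{j}\right)=\delta_{ij}$, and the sufficiency direction tests the hypothesis against the diagonalising pure states exactly as the paper does. The only (immaterial) difference is that in the necessity direction you bound $\left(\alpha_{i}^{\dagger}\middle|\eta\right)\geq 0$ directly rather than expanding $\eta$ in its own diagonalisation.
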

\begin{proof}
Given a system $\mathrm{A}$, we need to prove that $\xi\in\mathsf{St}_{\mathbb{R}}\left(\mathrm{A}\right)$
is in $\mathsf{St}_{+}\left(\mathrm{A}\right)$ if and only if $\left\langle \xi,\eta\right\rangle \geq0$
for all $\eta\in\mathsf{St}_{+}\left(\mathrm{A}\right)$. Note that
$\xi\in\mathsf{St}_{+}\left(\mathrm{A}\right)$ if and only if it
can be diagonalised as $\xi=\sum_{i=1}^{d}x_{i}\alpha_{i}$, where
the $x_{i}$'s are all non-negative.

Necessity. Suppose $\xi\in\mathsf{St}_{+}\left(\mathrm{A}\right)$,
and take any $\eta\in\mathsf{St}_{+}\left(\mathrm{A}\right)$, diagonalised
as $\eta=\sum_{i=1}^{d}y_{i}\beta_{i}$. Then we have
\[
\left\langle \xi,\eta\right\rangle =\sum_{i,j=1}^{d}x_{i}y_{j}\left(\alpha_{i}^{\dagger}\middle|\beta_{j}\right)\geq0
\]
because all the terms $x_{i}$, $y_{j}$, and $\left(\alpha_{i}^{\dagger}\middle|\beta_{j}\right)$
are non-negative.

Sufficiency. Take $\xi\in\mathsf{St}_{\mathbb{R}}\left(\mathrm{A}\right)$,
and assume that $\left\langle \xi,\eta\right\rangle \geq0$ for all
$\eta\in\mathsf{St}_{+}\left(\mathrm{A}\right)$. Assume $\xi$ is
diagonalised as $\xi=\sum_{i=1}^{d}x_{i}\alpha_{i}$, where the $x_{i}$'s
are generic real numbers. We wish to prove that all the $x_{i}$'s
are non-negative. Then
\[
\left\langle \xi,\eta\right\rangle =\sum_{i,j=1}^{d}x_{i}\left(\alpha_{i}^{\dagger}\middle|\eta\right)\geq0.
\]

Recalling that for perfectly
distinguishable pure states one has $\left(\alpha_{i}^{\dagger}\middle|\alpha_{j}\right)=\delta_{ij}$
\cite{TowardsThermo}, it is enough to take $\eta$ to be one of the
states $\left\{ \alpha_{i}\right\} _{i=1}^{d}$ to conclude that $x_{i}\geq0$
for every $i\in\left\{ 1,\ldots,d\right\} $, meaning that $\xi\in\mathsf{St}_{+}\left(\mathrm{A}\right)$.
\end{proof}
The self-dualising inner product, besides being a nice mathematical tool, has some operational meaning, because it provides a measure of the distinguishability of states, as explained in Appendix~\ref{subsec:dagger fidelity}. {Moreover, it is the starting point for extending the dagger to all transformations. This is done in Appendix~\ref{sec:dagger all}.}

\subsection{Existence of pure orthogonal projectors} \label{orthogonal projectors}

Now we show that we have orthogonal projectors on every face of the
state space. A consequence of diagonalisation is that all faces are
generated by perfectly distinguishable \emph{pure} states. Indeed,
every face $F$ is generated by a state $\omega$ in its relative
interior. $\omega$ can be diagonalised as $\omega=\sum_{i=1}^{r}p_{i}\alpha_{i}$,
where $r\leq d$, and $p_{i}>0$ for $i\in\left\{ 1,\ldots,r\right\} $.
By definition of face, this means that the states $\left\{ \alpha_{i}\right\} _{i=1}^{r}$
are in $F$, and therefore generate $F$. Consequently, there is an
effect $a$ that picks out the whole face
as the set of states $\rho$ such that $\left(a\middle|\rho\right)=1$. In the specific
case considered above, {it} is $a=\sum_{i=1}^{r}\alpha_{i}^{\dagger}$. Such faces are called \emph{exposed}.

Therefore the study of faces of sharp theories with purification reduces
to the study of normalised effects of the form $a_{\mathsf{I}}:=\sum_{i\in\mathsf{I}}\alpha_{i}^{\dagger}$, where  $\left\{ \alpha_{i}\right\} _{i=1}^{d}$ is a pure maximal set,
 $\mathsf{I}$ is a subset of $\left\{ 1,\ldots,d\right\} $ flagging the slits that are open in the experiment.
For every such $ a_{\mathsf{I}} $ we can define the two faces
\begin{enumerate}
\item $F_{\mathsf{I}}:=\left\{ \rho\in\mathsf{St}_{1}\left(\mathrm{A}\right):\left(a_{\mathsf{I}}\middle|\rho\right)=1\right\} $;
\item $F_{\mathsf{I}}^{\perp}:=\left\{ \rho\in\mathsf{St}_{1}\left(\mathrm{A}\right):\left(a_{\mathsf{I}}\middle|\rho\right)=0\right\} ${,}
\end{enumerate}
{in analogy with those of Definition~\ref{def:faces interference}.}
Clearly the effect ${a_{\mathsf{I}}^{\perp}:=}\sum_{i\notin\mathsf{I}}\alpha_{i}^{\dagger}$
defines the orthogonal face $F_{\mathsf{I}}^{\perp}$, as it occurs
with probability one on the states of $F_{\mathsf{I}}^{\perp}$. Note
that each of the effects $\left\{ \alpha_{i}^{\dagger}\right\} _{i\notin\mathsf{I}}$
occurs with zero probability on the states of $F_{\mathsf{I}}$.
\begin{Definition}
\label{def:An-orthogonal-projector}An \emph{orthogonal projector} (in the sense of \cite{Chiribella-informational})
on the face $F_{\mathsf{I}}$ is a transformation {$P_{\mathsf{I}}\in\mathsf{Transf}\left(\mathrm{A}\right)$}
such that
\begin{itemize}
\item if $\rho\in F_{\mathsf{I}}$, then $P_{\mathsf{I}}\rho=\rho$;
\item if $\rho\in F_{\mathsf{I}}^{\perp}$, then $P_{\mathsf{I}}\rho=0$.
\end{itemize}
\end{Definition}

We can prove the existence of a projector at least in one case, when
$\mathsf{I}=\left\{ 1,\ldots,d\right\} $. In this case $a_{\mathsf{I}}=u$,
so $F_{\mathsf{I}}=\mathsf{St}_{1}\left(\mathrm{A}\right)$, and $F_{\mathsf{I}}^{\perp}=\varnothing$.
Then it is enough to take $P_{\mathsf{I}}\doteq\mathcal{I}$. {However, sharp theories with purification admit projectors on \emph{every} face.}
\begin{Proposition}
\label{prop:existence projectors}Sharp theories with purification
have \emph{pure} projectors on every face $F_{\mathsf{I}}$. Furthermore
one has $uP_{\mathsf{I}}=a_{\mathsf{I}}$.
\end{Proposition}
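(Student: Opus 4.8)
The plan is to construct $P_{\mathsf{I}}$ as a pure transformation using the non-disturbing measurement result (Proposition~\ref{prop:non-disturbing}) applied to the effect $a_{\mathsf{I}}$, and then to pin down its behaviour on both faces $F_{\mathsf{I}}$ and $F_{\mathsf{I}}^{\perp}$ using diagonalisation, self-duality and the essential uniqueness of purification. First I would invoke Proposition~\ref{prop:non-disturbing}: since $a_{\mathsf{I}}$ is an effect and there exists a normalised state $\rho$ (for instance the normalised version of $\sum_{i\in\mathsf{I}}\alpha_i$, or $\alpha_1$ itself) with $\left(a_{\mathsf{I}}\middle|\rho\right)=1$, there is a pure transformation $\mathcal{T}\in\mathsf{PurTransf}\left(\mathrm{A}\right)$ with $\mathcal{T}=_{\rho}\mathcal{I}$ and $\left(u\middle|\mathcal{T}\middle|\sigma\right)\le\left(a_{\mathsf{I}}\middle|\sigma\right)$ for all $\sigma$. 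The first bullet of Definition~\ref{def:An-orthogonal-projector} follows once we know $\mathcal{T}$ acts as the identity on all of $F_{\mathsf{I}}$: every state in $F_{\mathsf{I}}$ is a convex combination of the $\left\{\alpha_i\right\}_{i\in\mathsf{I}}$, and since the complete/full state generating $F_{\mathsf{I}}$ is contained in... — more carefully, I would pick $\rho$ to be the state generating the \emph{whole} face $F_{\mathsf{I}}$ (its relative-interior state $\omega_{\mathsf{I}}=\tfrac{1}{|\mathsf{I}|}\sum_{i\in\mathsf{I}}\alpha_i$), so that every state of $F_{\mathsf{I}}$ is contained in $\rho$, and then $\mathcal{T}=_{\rho}\mathcal{I}$ gives $P_{\mathsf{I}}\rho'=\rho'$ for all $\rho'\in F_{\mathsf{I}}$.

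For the second bullet, I would use the norm bound: if $\rho\in F_{\mathsf{I}}^{\perp}$ then $\left(a_{\mathsf{I}}\middle|\rho\right)=0$, so $0\le\left(u\middle|\mathcal{T}\middle|\rho\right)\le\left(a_{\mathsf{I}}\middle|\rho\right)=0$, hence $\mathrm{Tr}\left(\mathcal{T}\rho\right)=0$. In a causal theory the norm of a state is its trace, so $\left\Vert\mathcal{T}\rho\right\Vert=0$, forcing $\mathcal{T}\rho=0$. Thus $P_{\mathsf{I}}:=\mathcal{T}$ already satisfies both conditions of Definition~\ref{def:An-orthogonal-projector} and is pure by construction. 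It remains to prove the last claim $uP_{\mathsf{I}}=a_{\mathsf{I}}$, i.e. $\left(u\middle|P_{\mathsf{I}}\middle|\sigma\right)=\left(a_{\mathsf{I}}\middle|\sigma\right)$ for every state $\sigma$. We already have $\le$. For the reverse, I would diagonalise an arbitrary $\sigma$ in a basis adapted to the splitting $\mathsf{I}\sqcup\mathsf{I}^c$ — this is possible because $P_{\mathsf{I}}$ fixes $F_{\mathsf{I}}$ and kills $F_{\mathsf{I}}^{\perp}$, and one can argue (using self-duality / the projector property together with diagonalisation of $\sigma$ relative to the face $F_{\mathsf{I}}$, cf. the ``filtering'' decompositions of Ref.~\cite{TowardsThermo}) that $\sigma$ decomposes as $\sigma = P_{\mathsf{I}}\sigma \oplus (\text{component in }F_{\mathsf{I}}^{\perp})$ plus possibly a ``cross'' term that nonetheless contributes $0$ to $u\mathcal{T}$. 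Concretely, writing $\sigma=\sum_k q_k\gamma_k$ with $\left\{\gamma_k\right\}$ a pure maximal set refining the faces, each $\gamma_k$ lies in $F_{\mathsf{I}}$ or in $F_{\mathsf{I}}^{\perp}$, so $\left(u\middle|\mathcal{T}\middle|\gamma_k\right)$ equals $1$ in the first case and $0$ in the second, matching $\left(a_{\mathsf{I}}\middle|\gamma_k\right)$ exactly; summing gives equality.

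The main obstacle I anticipate is precisely this last step: an arbitrary state $\sigma$ need not be diagonalisable in a maximal set that respects the decomposition $F_{\mathsf{I}}\oplus F_{\mathsf{I}}^{\perp}$, and the ``cross terms'' $\left(\alpha_i^{\dagger}\middle|\beta_j\right)$ between an eigenbasis adapted to $\mathsf{I}$ and the eigenbasis of $\sigma$ could in principle spoil the naive sum. Handling this cleanly likely requires either (i) a compression/filtering lemma of the form $\sigma = P_{\mathsf{I}}\sigma + P_{\mathsf{I}^c}\sigma$ for \emph{every} $\sigma$ (which in turn would use that $P_{\mathsf{I}}$ and $P_{\mathsf{I}^c}$ are complementary pure projectors whose $u$-composites sum to $u$), or (ii) an argument via the self-dualising inner product: $\left(u\middle|P_{\mathsf{I}}\middle|\sigma\right)=\left\langle u^{\dagger}, P_{\mathsf{I}}\sigma\right\rangle$ and one shows $P_{\mathsf{I}}$ is self-adjoint with respect to $\left\langle\bullet,\bullet\right\rangle$ and idempotent, so that $\left\langle u^{\dagger},P_{\mathsf{I}}\sigma\right\rangle = \left\langle P_{\mathsf{I}}^{\dagger}u^{\dagger},\sigma\right\rangle$, and $P_{\mathsf{I}}^{\dagger}u^{\dagger}=a_{\mathsf{I}}^{\dagger}{}^{\dagger}=a_{\mathsf{I}}$ by identifying $a_{\mathsf{I}}=\sum_{i\in\mathsf{I}}\alpha_i^{\dagger}$ with the dagger of $\sum_{i\in\mathsf{I}}\alpha_i$. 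I expect the actual paper to take route (i), proving complementarity $uP_{\mathsf{I}}+uP_{\mathsf{I}^c}=u$ first (so that $\left\{P_{\mathsf{I}},P_{\mathsf{I}^c}\right\}$, or rather an associated instrument, is a genuine test via the no-restriction Proposition for tests) and then deducing $uP_{\mathsf{I}}=a_{\mathsf{I}}$ from the sandwich $uP_{\mathsf{I}}\le a_{\mathsf{I}}$, $uP_{\mathsf{I}^c}\le a_{\mathsf{I}}^{\perp}$, and $a_{\mathsf{I}}+a_{\mathsf{I}}^{\perp}=u$.
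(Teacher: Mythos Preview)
Your construction of the pure projector and the verification of both bullets of Definition~\ref{def:An-orthogonal-projector} are correct and coincide with the paper's argument: pick the relative-interior state $\omega_{\mathsf{I}}$ of $F_{\mathsf{I}}$, apply Proposition~\ref{prop:non-disturbing}, and use the norm bound to kill $F_{\mathsf{I}}^{\perp}$.

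The gap is in the proof of $uP_{\mathsf{I}}=a_{\mathsf{I}}$. All three of your proposed routes fail or are circular at this point in the paper:
\begin{itemize}
\item An arbitrary $\sigma$ need \emph{not} diagonalise in a maximal set subordinate to $F_{\mathsf{I}}\oplus F_{\mathsf{I}}^{\perp}$; you correctly flag this, and there is no ``filtering'' lemma available yet to fix it.
\item Route~(i) is circular. From Proposition~\ref{prop:non-disturbing} you only get the one-sided bounds $uP_{\mathsf{I}}\le a_{\mathsf{I}}$ and $uP_{\mathsf{I}^c}\le a_{\mathsf{I}}^{\perp}$, hence $uP_{\mathsf{I}}+uP_{\mathsf{I}^c}\le u$. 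You have no independent way to establish the reverse inequality; the no-restriction proposition for tests is an \emph{equivalence}, so it cannot supply it. In the paper the identity $\sum_j uP_{\mathsf{I}_j}=u$ is derived \emph{from} $uP_{\mathsf{I}}=a_{\mathsf{I}}$, not the other way around.
\item Route~(ii) is also circular: self-adjointness of $P_{\mathsf{I}}$ (Proposition~\ref{prop:self-adjoint}) is proved \emph{after} Proposition~\ref{prop:existence projectors} and uses it.
\end{itemize}

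The paper's way out is to work on effects rather than states. Using $u=\sum_{i=1}^{d}\alpha_i^{\dagger}$, it suffices to compute $\alpha_i^{\dagger}P_{\mathsf{I}}$ for each $i$. For $i\in\mathsf{I}$ one has $\left(\alpha_i^{\dagger}P_{\mathsf{I}}\middle|\alpha_i\right)=1$, and $\alpha_i^{\dagger}P_{\mathsf{I}}$ is pure by Purity Preservation, so Theorem~\ref{thm:duality} forces $\alpha_i^{\dagger}P_{\mathsf{I}}=\alpha_i^{\dagger}$. For $i\notin\mathsf{I}$ one evaluates $\left(\alpha_i^{\dagger}\middle|P_{\mathsf{I}}\middle|\chi\right)$: splitting $\chi=\tfrac{1}{d}\sum_j\alpha_j$, the $j\notin\mathsf{I}$ terms vanish because $P_{\mathsf{I}}\alpha_j=0$, and the $j\in\mathsf{I}$ terms vanish because $\alpha_i$ is perfectly distinguishable from each such $\alpha_j$ so $\left(\alpha_i^{\dagger}\middle|\alpha_j\right)=0$. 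Since $\chi$ is complete, $\alpha_i^{\dagger}P_{\mathsf{I}}=0$. Summing over $i$ gives $uP_{\mathsf{I}}=\sum_{i\in\mathsf{I}}\alpha_i^{\dagger}=a_{\mathsf{I}}$. This dual computation is the missing idea in your proposal.
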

\begin{proof}
Suppose $\rho$ is any state in $F_{\mathsf{I}}$, then $\left(a_{\mathsf{I}}\middle|\rho\right)=1$. By Proposition~\ref{prop:non-disturbing} we know that there is a
\emph{pure} transformation $P_{\mathsf{I}}$ such that $P_{\mathsf{I}}\rho=\rho$
for every $\rho\in F_{\mathsf{I}}$. We also have $\left(u\middle|P_{\mathsf{I}}\middle|\sigma\right)\leq\left(a_{\mathsf{I}}\middle|\sigma\right)$,
so if $\sigma\in F_{\mathsf{I}}^{\perp}$, we have $\left(u\middle|P_{\mathsf{I}}\middle|\sigma\right)=0$,
whence $P_{\mathsf{I}}\sigma=0$.

To prove that $uP_{\mathsf{I}}=a_{\mathsf{I}}$, first note that $\psi^{\dagger}P_{\mathsf{I}}=\psi^{\dagger}$
for every pure state $\psi\in F_{\mathsf{I}}$. Indeed $\psi^{\dagger}P_{\mathsf{I}}$
is pure by Purity Preservation, and we have $\left(\psi^{\dagger}\middle|P_{\mathsf{I}}\middle|\psi\right)=\left(\psi^{\dagger}\middle|\psi\right)=1$
because $P_{\mathsf{I}}\psi=\psi$ by definition. By Theorem~\ref{thm:duality},
we have $\psi^{\dagger}P_{\mathsf{I}}=\psi^{\dagger}$. Furthermore,
${\varphi}^{\dagger}P_{\mathsf{I}}=0$ for a pure state ${\varphi}\in F_{\mathsf{I}}^{\perp}$.
{Indeed, c}onsider
\[
\left({\varphi}^{\dagger}\middle|P_{\mathsf{I}}\middle|\chi\right)=\frac{1}{d}\sum_{{i}\in\mathsf{I}}\left({\varphi}^{\dagger}\middle|P_{\mathsf{I}}\middle|\alpha_{{i}}\right)+\frac{1}{d}\sum_{{i}\notin\mathsf{I}}\left({\varphi}^{\dagger}\middle|P_{\mathsf{I}}\middle|\alpha_{{i}}\right).
\]

The second term vanishes because $\alpha_{{i}}\in F_{\mathsf{I}}^{\perp}$
for ${i}\notin\mathsf{I}$. The first term vanishes because $P_{\mathsf{I}}\alpha_{{i}}=\alpha_{{i}}$
for ${i}\in\mathsf{I}$, and ${\varphi}$ is perfectly distinguishable from
any of the $\alpha_{{i}}$'s for ${i}\in\mathsf{I}$ by means of the observation-test
$\left\{ u-a_{\mathsf{I}},a_{\mathsf{I}}\right\} $, implying $\left({\varphi}^{\dagger}\middle|\alpha_{{i}}\right)=0$
\cite{TowardsThermo}. This means that ${\varphi}^{\dagger}P_{\mathsf{I}}$
occurs with zero probability on all states contained in $\chi$, and
since $\chi$ is complete \cite{Chiribella-purification}, ${\varphi}^{\dagger}P_{\mathsf{I}}=0$.
Now, when we calculate $uP_{\mathsf{I}}$, we separate the contribution
arising from states in orthogonal faces:
\[
uP_{\mathsf{I}}=\sum_{i\in\mathsf{I}}\alpha_{i}^{\dagger}P_{\mathsf{I}}+\sum_{i\notin\mathsf{I}}\alpha_{i}^{\dagger}P_{\mathsf{I}}=\sum_{i\in\mathsf{I}}\alpha_{i}^{\dagger}=a_{\mathsf{I}}
\]
This concludes the proof.
\end{proof}
In other words, $P_{\mathsf{I}}$ occurs with the same probability
as $a_{\mathsf{I}}$
{, thus satisfying one of the desiderata of Section~\ref{higher-order interference}}.  Moreover, extending some of the results in the proof of Proposition~\ref{prop:existence projectors}
by linearity, we obtain the dual statements of Definition~\ref{def:An-orthogonal-projector},
namely
\begin{itemize}
\item $\rho^{\dagger}P_{\mathsf{I}}=\rho^{\dagger}$ if $\rho\in F_{\mathsf{I}}$
\item $\rho^{\dagger}P_{\mathsf{I}}=0$ if $\rho\in F_{\mathsf{I}}^{\perp}$
\end{itemize}

Another consequence of Proposition~\ref{prop:existence projectors}
is that projectors actually project on their associated face, {viz.}\ for every normalised state $\rho$, $P_{\mathsf{I}}\rho=\lambda\sigma$,
where $\sigma$ is in $F_{\mathsf{I}}$, and $\lambda=\left(a_{\mathsf{I}}\middle|\rho\right)$.
Indeed, $\lambda=\left(u\middle|P_{\mathsf{I}}\middle|\rho\right)=\left(a_{\mathsf{I}}\middle|\rho\right)$.
If $\lambda\neq0$, which means $\rho\notin F_{\mathsf{I}}^{\perp}$,
then and $\left(a_{\mathsf{I}}\middle|\sigma\right)=\frac{1}{\lambda}\left(a_{\mathsf{I}}\middle|P_{\mathsf{I}}\middle|\rho\right)$.
However, we know that $a_{\mathsf{I}}P_{\mathsf{I}}=a_{\mathsf{I}}$,
so $\left(a_{\mathsf{I}}\middle|\sigma\right)=1$, showing that $\sigma\in F_{\mathsf{I}}$.

Furthermore, we can show that every projector $P_{\mathsf{I}}$ has
a \emph{complement} $P_{\mathsf{I}}^{\perp}$, which is the projector
associated with the effect $a_{\mathsf{I}}^{\perp}=\sum_{i\notin\mathsf{I}}\alpha_{i}^{\dagger}$,
which defines the orthogonal face $F_{\mathsf{I}}^{\perp}$. Clearly
$P_{\mathsf{I}}^{\perp}\rho=\left(a_{\mathsf{I}}^{\perp}\middle|\rho\right)\sigma$,
with $\sigma\in F_{\mathsf{I}}^{\perp}$. In particular, $P_{\mathsf{I}}^{\perp}\rho$
vanishes if and only if $\rho\in F_{\mathsf{I}}$.

{These properties are the starting point for proving the idempotence of projectors.}
\begin{Proposition}
\label{prop:properties projectors}Given a fixed pure maximal set
$\left\{ \alpha_{i}\right\} _{i=1}^{d}$ and {$\mathsf{I}\subseteq\left\{ 1,\ldots,d\right\} $},
one has $P_{\mathsf{I}}^{2}\doteq P_{\mathsf{I}}$. Moreover, if {$\mathsf{J}$ is another subset of $ \left\{ 1,\ldots,d\right\} $ disjoint from $ \mathsf{I} $,} then $P_{\mathsf{I}}P_{\mathsf{J}}\doteq0$.
\end{Proposition}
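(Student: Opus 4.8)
The plan is to read off both identities from the structural facts about $P_{\mathsf{I}}$ proved just before the statement, in particular the ``projectors project onto their face'' property: for every normalised state $\rho$ one has $P_{\mathsf{I}}\rho=\lambda\sigma$ with $\lambda=\left(a_{\mathsf{I}}\middle|\rho\right)=\left(u\middle|P_{\mathsf{I}}\middle|\rho\right)$ and, when $\lambda\neq0$, $\sigma\in F_{\mathsf{I}}$; when $\lambda=0$ one simply has $\rho\in F_{\mathsf{I}}^{\perp}$ and $P_{\mathsf{I}}\rho=0$. Since $\doteq$ only requires equality of the action on states, and every state is a non-negative multiple of a normalised one while transformations act linearly, it suffices to verify both equations on an arbitrary normalised state $\rho$.

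For idempotence, I would apply $P_{\mathsf{I}}$ a second time to the decomposition $P_{\mathsf{I}}\rho=\lambda\sigma$. If $\lambda\neq0$, then $P_{\mathsf{I}}^{2}\rho=\lambda\,P_{\mathsf{I}}\sigma=\lambda\sigma=P_{\mathsf{I}}\rho$, where the middle equality uses that $P_{\mathsf{I}}$ acts as the identity on $F_{\mathsf{I}}$ (Definition~\ref{def:An-orthogonal-projector}) together with $\sigma\in F_{\mathsf{I}}$. If $\lambda=0$, both $P_{\mathsf{I}}^{2}\rho$ and $P_{\mathsf{I}}\rho$ vanish. Hence $P_{\mathsf{I}}^{2}\doteq P_{\mathsf{I}}$.

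For the orthogonality relation, the key intermediate step is the face inclusion $F_{\mathsf{J}}\subseteq F_{\mathsf{I}}^{\perp}$ whenever $\mathsf{I}\cap\mathsf{J}=\varnothing$. To see this, take $\tau\in F_{\mathsf{J}}$, so $\left(a_{\mathsf{J}}\middle|\tau\right)=1$ and therefore $\left(a_{\mathsf{J}}^{\perp}\middle|\tau\right)=\left(u\middle|\tau\right)-\left(a_{\mathsf{J}}\middle|\tau\right)=0$. Disjointness gives $\mathsf{I}\subseteq\left\{1,\ldots,d\right\}\setminus\mathsf{J}$, so $a_{\mathsf{J}}^{\perp}-a_{\mathsf{I}}=\sum_{i\notin\mathsf{I}\cup\mathsf{J}}\alpha_{i}^{\dagger}$ is again a legitimate (sub-normalised) effect, hence non-negative on states; evaluating on $\tau$ yields $0\leq\left(a_{\mathsf{I}}\middle|\tau\right)\leq\left(a_{\mathsf{J}}^{\perp}\middle|\tau\right)=0$, i.e.\ $\left(a_{\mathsf{I}}\middle|\tau\right)=0$, so $\tau\in F_{\mathsf{I}}^{\perp}$. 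Now for any normalised state $\rho$ write $P_{\mathsf{J}}\rho=\mu\tau$ with $\mu=\left(a_{\mathsf{J}}\middle|\rho\right)$ and, when $\mu\neq0$, $\tau\in F_{\mathsf{J}}\subseteq F_{\mathsf{I}}^{\perp}$; then $P_{\mathsf{I}}P_{\mathsf{J}}\rho=\mu\,P_{\mathsf{I}}\tau=0$ because $P_{\mathsf{I}}$ annihilates $F_{\mathsf{I}}^{\perp}$. When $\mu=0$ we already have $P_{\mathsf{J}}\rho=0$. Therefore $P_{\mathsf{I}}P_{\mathsf{J}}\doteq0$.

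I do not expect a genuine obstacle here: both statements follow mechanically once the earlier facts about $P_{\mathsf{I}}$ are in place. The only points needing a little care are the degenerate cases $\lambda=0$ (resp.\ $\mu=0$), where one cannot speak of the normalised image state $\sigma$ (resp.\ $\tau$) and must handle the vanishing image separately, and the elementary observation that $a_{\mathsf{J}}^{\perp}-a_{\mathsf{I}}$ is a valid effect, which is precisely what converts the disjointness hypothesis into the inclusion $F_{\mathsf{J}}\subseteq F_{\mathsf{I}}^{\perp}$.
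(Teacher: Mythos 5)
Your proof is correct and follows essentially the same route as the paper's: idempotence from $P_{\mathsf{I}}\rho=\lambda\sigma$ with $\sigma\in F_{\mathsf{I}}$ left invariant, and orthogonality from the face inclusion forced by disjointness via the decomposition $u=\sum_{i=1}^{d}\alpha_{i}^{\dagger}$ and non-negativity of each $\alpha_{i}^{\dagger}$ on states. Your explicit handling of the degenerate cases $\lambda=0$, $\mu=0$ and your derivation of the inclusion $F_{\mathsf{J}}\subseteq F_{\mathsf{I}}^{\perp}$ (the one actually used, rather than $F_{\mathsf{I}}\subseteq F_{\mathsf{J}}^{\perp}$ which the paper proves and then implicitly flips by symmetry) are minor tidyings of the same argument.
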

\begin{proof}
{Recall that for every state $\rho$, $P_{\mathsf{I}}\rho=\lambda\sigma$,
where $\sigma$ is in $F_{\mathsf{I}}$. Now, $P_{\mathsf{I}}$ leaves
$\sigma$ invariant by definition, so}
\[
P_{\mathsf{I}}^{2}\rho=\lambda P_{\mathsf{I}}\sigma=\lambda\sigma,
\]
so $P_{\mathsf{I}}^{2}\doteq P_{\mathsf{I}}$. To prove the other
property, note that if $\mathsf{I}$ and $\mathsf{J}$ are disjoint,
they define orthogonal faces. Indeed, suppose $\rho\in F_{\mathsf{I}}$,
then
\[
1=\mathrm{Tr}\:\rho=\left(a_{\mathsf{I}}\middle|\rho\right)+\left(a_{\mathsf{J}}\middle|\rho\right)+\sum_{i\notin\mathsf{I}\cup\mathsf{J}}\left(\alpha_{i}^{\dagger}\middle|\rho\right),
\]
which implies $\left(a_{\mathsf{J}}\middle|\rho\right)=0$ because $\left(a_{\mathsf{I}}\middle|\rho\right)=1$.
Hence $\rho\in F_{\mathsf{J}}^{\perp}$. Now, given \emph{any} normalised
state $\rho$, $P_{\mathsf{I}}P_{\mathsf{J}}\rho=0$ because $P_{\mathsf{J}}\rho$
is proportional to a state in $F_{\mathsf{I}}^{\perp}$. This proves
that $P_{\mathsf{I}}P_{\mathsf{J}}\doteq0$.
\end{proof}
This result shows that, once a pure maximal set $\left\{ \alpha_{i}\right\}_{{i=1}}^{{d}} $
is fixed, whenever we have a partition $\left\{ \mathsf{I}_{j}\right\} $
of $\left\{ 1,\ldots,d\right\} $, the test $\left\{ P_{\mathsf{I}_{j}}\right\} $
is a von Neumann measurement. The only thing left to check is that
$\sum_{j}uP_{\mathsf{I}_{j}}=u$, which is a sufficient condition
for a set of transformations to be a test in sharp theories with purification. This is satisfied because, recalling
Proposition~\ref{prop:existence projectors},
\[
\sum_{j}uP_{\mathsf{I}_{j}}=\sum_{j}a_{\mathsf{I}_{j}}=\sum_{i=1}^{d}\alpha_{i}^{\dagger}=u.
\]

Because of the properties proved above, von Neumann measurements are
repeatable and minimally disturbing measurements in the sense of Refs.~\cite{Chiribella-Yuan2014,Chiribella-Yuan2015}.
Indeed, $a_{\mathsf{I}_{j}}P_{\mathsf{I}_{j}}=a_{\mathsf{I}_{j}}$,
and
\[
a_{\mathsf{I}_{j}}\sum_{k}P_{\mathsf{I}_{k}}=a_{\mathsf{I}_{j}}P_{\mathsf{I}_{j}}+\sum_{k\neq j}a_{\mathsf{I}_{j}}P_{\mathsf{I}_{k}}=a_{\mathsf{I}_{j}},
\]
because for $k\neq j$ the $P_{\mathsf{I}_{k}}$'s project on faces
orthogonal to $F_{\mathsf{I}_{j}}$.

{The next proposition concerns the interplay between orthogonal projectors and the dagger.}
\begin{Proposition}\label{prop:self-adjoint}
For every normalised state $\rho$, and for every projector $P_{\mathsf{I}}$
on a face $F_{\mathsf{I}}$, one has $\left(P_{\mathsf{I}}\rho\right)^{\dagger}=\rho^{\dagger}P_{\mathsf{I}}$.
\end{Proposition}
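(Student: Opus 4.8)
The plan is to derive the identity from the observation that the orthogonal projector $P_{\mathsf{I}}$ is \emph{self-adjoint} with respect to the self-dualising inner product $\left\langle \xi,\eta\right\rangle :=\left(\xi^{\dagger}\middle|\eta\right)$ of Lemma~\ref{lem:inner product}. Since an effect is completely determined by its action on the states of a single system, it suffices to show that $\left(\left(P_{\mathsf{I}}\rho\right)^{\dagger}\middle|\tau\right)=\left(\rho^{\dagger}P_{\mathsf{I}}\middle|\tau\right)$ for every normalised $\tau\in\mathsf{St}_{1}\left(\mathrm{A}\right)$; rewriting both sides via the inner product, this is precisely $\left\langle P_{\mathsf{I}}\rho,\tau\right\rangle =\left\langle \rho,P_{\mathsf{I}}\tau\right\rangle$.

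First I would record the face-support property: as noted after Proposition~\ref{prop:existence projectors}, $P_{\mathsf{I}}\rho=\lambda\sigma$ with $\sigma\in F_{\mathsf{I}}$ and $\lambda=\left(a_{\mathsf{I}}\middle|\rho\right)\geq0$, while the linearly extended dual property gives $\sigma^{\dagger}P_{\mathsf{I}}=\sigma^{\dagger}$ for any $\sigma\in F_{\mathsf{I}}$; hence $\left(P_{\mathsf{I}}\rho\right)^{\dagger}P_{\mathsf{I}}=\left(P_{\mathsf{I}}\rho\right)^{\dagger}$ (trivially so when $\lambda=0$). From this I would compute $\left\langle P_{\mathsf{I}}\rho,\tau\right\rangle =\left(\left(P_{\mathsf{I}}\rho\right)^{\dagger}P_{\mathsf{I}}\middle|\tau\right)=\left(\left(P_{\mathsf{I}}\rho\right)^{\dagger}\middle|P_{\mathsf{I}}\tau\right)=\left\langle P_{\mathsf{I}}\rho,P_{\mathsf{I}}\tau\right\rangle$. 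Applying the same computation with $\rho$ and $\tau$ interchanged, and then using the symmetry of $\left\langle \bullet,\bullet\right\rangle$ (Equation~\eqref{eq:dagger symmetric}, applied to the possibly subnormalised vectors $P_{\mathsf{I}}\rho,P_{\mathsf{I}}\tau\in\mathsf{St}_{+}\left(\mathrm{A}\right)\subseteq\mathsf{St}_{\mathbb{R}}\left(\mathrm{A}\right)$), I would obtain $\left\langle \rho,P_{\mathsf{I}}\tau\right\rangle =\left\langle P_{\mathsf{I}}\tau,\rho\right\rangle =\left\langle P_{\mathsf{I}}\tau,P_{\mathsf{I}}\rho\right\rangle =\left\langle P_{\mathsf{I}}\rho,P_{\mathsf{I}}\tau\right\rangle$. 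Comparing the two, $\left\langle P_{\mathsf{I}}\rho,\tau\right\rangle =\left\langle \rho,P_{\mathsf{I}}\tau\right\rangle$, and unwinding the inner product gives $\left(\left(P_{\mathsf{I}}\rho\right)^{\dagger}\middle|\tau\right)=\left(\rho^{\dagger}\middle|P_{\mathsf{I}}\tau\right)=\left(\rho^{\dagger}P_{\mathsf{I}}\middle|\tau\right)$ for all $\tau$, hence $\left(P_{\mathsf{I}}\rho\right)^{\dagger}=\rho^{\dagger}P_{\mathsf{I}}$.

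I do not expect a genuine obstacle here: the argument is essentially a repackaging of self-duality together with the face-support properties of $P_{\mathsf{I}}$ established above. The only delicate points will be the bookkeeping ones — checking that the linear extensions of the dagger and of the symmetry identity~\eqref{eq:dagger symmetric} may legitimately be applied to the subnormalised states $P_{\mathsf{I}}\rho$ and $P_{\mathsf{I}}\tau$, and dealing separately with the degenerate cases in which $\rho$ or $\tau$ lies in $F_{\mathsf{I}}^{\perp}$, so that $P_{\mathsf{I}}\rho=0$ or $P_{\mathsf{I}}\tau=0$ and both sides of the claimed identity vanish.
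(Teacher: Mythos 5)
Your proposal is correct, and it takes a genuinely different route from the paper's proof. The paper first reduces to normalised \emph{pure} states $\psi$: it invokes Purity Preservation to see that $P_{\mathsf{I}}\psi/\left\Vert P_{\mathsf{I}}\psi\right\Vert$ is a normalised pure state and $\psi^{\dagger}P_{\mathsf{I}}/\left\Vert P_{\mathsf{I}}\psi\right\Vert$ a pure effect, then uses the uniqueness clause of Theorem~\ref{thm:duality}, so that the whole proof collapses to checking the single number $\left(\psi^{\dagger}P_{\mathsf{I}}\middle|P_{\mathsf{I}}\psi\right)=\left\Vert P_{\mathsf{I}}\psi\right\Vert^{2}$ (via $P_{\mathsf{I}}^{2}\doteq P_{\mathsf{I}}$ from Proposition~\ref{prop:properties projectors} and the symmetry of the inner product), before extending by linearity. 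You instead work with an arbitrary normalised $\rho$ throughout: from $P_{\mathsf{I}}\rho=\lambda\sigma$ with $\sigma\in F_{\mathsf{I}}$ and the dual absorption property $\sigma^{\dagger}P_{\mathsf{I}}=\sigma^{\dagger}$ you get the ``sandwiching'' identity $\left\langle P_{\mathsf{I}}\rho,\tau\right\rangle=\left\langle P_{\mathsf{I}}\rho,P_{\mathsf{I}}\tau\right\rangle$, and combining it with its $\rho\leftrightarrow\tau$ counterpart and the symmetry of Lemma~\ref{lem:inner product} yields $\left\langle P_{\mathsf{I}}\rho,\tau\right\rangle=\left\langle\rho,P_{\mathsf{I}}\tau\right\rangle$ for all $\tau$, which pins down the effect since effects are determined by their action on single-system states. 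Your route avoids the pure-state reduction, the normalisation bookkeeping, and any direct appeal to Purity Preservation or idempotence inside this proof (though it still rests on the dual statements derived after Proposition~\ref{prop:existence projectors}, which themselves used Purity Preservation); the paper's route is a more direct exercise of the uniqueness-of-dagger machinery. Both arguments are sound and use only material established before the proposition, so there is no circularity; the one bookkeeping point you flag---applying the linearly extended dagger and symmetry to the subnormalised vectors $P_{\mathsf{I}}\rho$, $P_{\mathsf{I}}\tau$---is indeed legitimate, since Theorem~\ref{thm:duality} and Equation~\eqref{eq:dagger symmetric} are both extended to all of $\mathsf{St}_{\mathbb{R}}\left(\mathrm{A}\right)$.
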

\begin{proof}
First of all, note that $0\leq\left\Vert P_{\mathsf{I}}\rho\right\Vert \leq1$,
and it vanishes if and only if $\rho\in F_{\mathsf{I}}^{\perp}$.
If $\rho\in F_{\mathsf{I}}^{\perp}$, then $\rho^{\dagger}P_{\mathsf{I}}=0$,
so the statement is trivially true. Now suppose $\left\Vert P_{\mathsf{I}}\rho\right\Vert >0$.
We will first prove the statement for normalised pure states $\psi$, then
it is sufficient to extend it by linearity to all states. We
will make use of the uniqueness of the dagger for normalised pure
states. Then the statement is equivalent to proving
\[
\left(\frac{P_{\mathsf{I}}\psi}{\left\Vert P_{\mathsf{I}}\psi\right\Vert }\right)^{\dagger}=\frac{\psi^{\dagger}P_{\mathsf{I}}}{\left\Vert P_{\mathsf{I}}\psi\right\Vert },
\]
Noting that the term in brackets is a \emph{normalised} pure state
(by Purity Preservation), and that the RHS is a pure effect (again
by Purity Preservation), by the uniqueness of the dagger for normalised
pure states (cf.\ Theorem~\ref{thm:duality}), it is enough to prove
that
\[
\frac{\left(\psi^{\dagger}P_{\mathsf{I}}\middle|P_{\mathsf{I}}\psi\right)}{\left\Vert P_{\mathsf{I}}\psi\right\Vert ^{2}}=1;
\]
in other words that $\left(\psi^{\dagger}P_{\mathsf{I}}\middle|P_{\mathsf{I}}\psi\right)=\left\Vert P_{\mathsf{I}}\psi\right\Vert ^{2}$.
Recall that $P_{\mathsf{I}}^{2}\doteq P_{\mathsf{I}}$ (Proposition~\ref{prop:properties projectors}),
so $\left(\psi^{\dagger}P_{\mathsf{I}}\middle|P_{\mathsf{I}}\psi\right)=\left(\psi^{\dagger}\middle|P_{\mathsf{I}}\middle|\psi\right)$.
Now, $P_{\mathsf{I}}\psi=\left\Vert P_{\mathsf{I}}\psi\right\Vert \psi'$,
where $\psi'$ is a pure state in $F_{\mathsf{I}}$. We have $\left(\psi^{\dagger}P_{\mathsf{I}}\middle|P_{\mathsf{I}}\psi\right)=\left\Vert P_{\mathsf{I}}\psi\right\Vert \left(\psi^{\dagger}\middle|\psi'\right)$.
We only need to prove that $\left(\psi^{\dagger}\middle|\psi'\right)=\left\Vert P_{\mathsf{I}}\psi\right\Vert $.
Recall that $\left(\psi^{\dagger}\middle|\psi'\right)=\left(\psi^{'\dagger}\middle|\psi\right)$
by Lemma~\ref{lem:inner product}, and that $\psi^{'\dagger}P_{\mathsf{I}}=\psi^{'\dagger}$
as $\psi'\in F_{\mathsf{I}}$, thus
\[
\left(\psi^{\dagger}\middle|\psi'\right)=\left(\psi^{'\dagger}\middle|P_{\mathsf{I}}\middle|\psi\right)=\left\Vert P_{\mathsf{I}}\psi\right\Vert \left(\psi^{'\dagger}\middle|\psi'\right)=\left\Vert P_{\mathsf{I}}\psi\right\Vert .
\]
By the uniqueness of the dagger for normalised pure states we conclude
that $\left(\frac{P_{\mathsf{I}}\psi}{\left\Vert P_{\mathsf{I}}\psi\right\Vert }\right)^{\dagger}=\frac{\psi^{\dagger}P_{\mathsf{I}}}{\left\Vert P_{\mathsf{I}}\psi\right\Vert }$,
namely $\left(P_{\mathsf{I}}\psi\right)^{\dagger}=\psi^{\dagger}P_{\mathsf{I}}$.
\end{proof}
A consequence of this proposition is that orthogonal projectors
play nicely with the inner product of Lemma~\ref{lem:inner product},
namely for every $\xi,\eta\in\mathsf{St}_{\mathbb{R}}\left(\mathrm{A}\right)$
one has
\begin{equation} \label{self-dual projector}
\left\langle P_{\mathsf{I}}\xi,\eta\right\rangle =\left\langle \xi,P_{\mathsf{I}}\eta\right\rangle.\end{equation}  In other words, projections are symmetric with respect to the inner product.

The last property we need is a generalisation of the results of Proposition~\ref{prop:properties projectors}.
\begin{Proposition}\label{prop:product projectors}
Fixing a pure maximal set $\left\{ \alpha_{i}\right\} _{i=1}^{d}$,
and considering $\mathsf{I},\mathsf{J}\subseteq\left\{ 1,\ldots,d\right\} $,
we have $P_{\mathsf{I}}P_{\mathsf{J}}\doteq P_{\mathsf{I}\cap\mathsf{J}}$.
\end{Proposition}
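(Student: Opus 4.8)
The plan is to verify that the transformation $P_{\mathsf{I}}P_{\mathsf{J}}$ acts on every state exactly as $P_{\mathsf{I}\cap\mathsf{J}}$ does. If $\mathsf{I}\cap\mathsf{J}=\varnothing$ this is Proposition~\ref{prop:properties projectors} (both sides are $\doteq 0$), so I would assume $\mathsf{I}\cap\mathsf{J}\neq\varnothing$. Two bookkeeping facts would be recorded first. (i) $F_{\mathsf{I}\cap\mathsf{J}}\subseteq F_{\mathsf{I}}$ and $F_{\mathsf{I}\cap\mathsf{J}}\subseteq F_{\mathsf{J}}$: indeed $a_{\mathsf{I}}=a_{\mathsf{I}\cap\mathsf{J}}+\sum_{i\in\mathsf{I}\setminus\mathsf{J}}\alpha_{i}^{\dagger}$, so on a state $\rho$ of $F_{\mathsf{I}\cap\mathsf{J}}$ one has $\left(a_{\mathsf{I}}\middle|\rho\right)\ge 1$, hence $\left(a_{\mathsf{I}}\middle|\rho\right)=1$ since $\left(a_{\mathsf{I}}\middle|\rho\right)\le\mathrm{Tr}\:\rho=1$; the same argument with $\mathsf{J}$ in place of $\mathsf{I}$ handles the other inclusion. (ii) For a pure maximal state $\alpha_{m}$ and any $\mathsf{K}\subseteq\left\{1,\ldots,d\right\}$, the dual statements following Proposition~\ref{prop:existence projectors} give $\alpha_{m}^{\dagger}P_{\mathsf{K}}=\alpha_{m}^{\dagger}$ when $m\in\mathsf{K}$ and $\alpha_{m}^{\dagger}P_{\mathsf{K}}=0$ when $m\notin\mathsf{K}$, because $\alpha_{m}\in F_{\mathsf{K}}$ in the first case and $\alpha_{m}\in F_{\mathsf{K}}^{\perp}$ in the second.

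The key step is to show that $P_{\mathsf{I}}P_{\mathsf{J}}$ already lands states inside the cone over $F_{\mathsf{I}\cap\mathsf{J}}$. Writing $a_{\mathsf{I}\cap\mathsf{J}}^{\perp}=u-a_{\mathsf{I}\cap\mathsf{J}}=\sum_{m\notin\mathsf{I}\cap\mathsf{J}}\alpha_{m}^{\dagger}$, I would push $P_{\mathsf{I}}$ and then $P_{\mathsf{J}}$ through this sum using~(ii): every term with $m\notin\mathsf{I}$ is annihilated by $P_{\mathsf{I}}$, while the surviving terms have $m\in\mathsf{I}\setminus\mathsf{J}$, so they pass $P_{\mathsf{I}}$ unchanged and are then annihilated by $P_{\mathsf{J}}$ since $m\notin\mathsf{J}$; hence $a_{\mathsf{I}\cap\mathsf{J}}^{\perp}P_{\mathsf{I}}P_{\mathsf{J}}=0$, and similarly $uP_{\mathsf{I}}P_{\mathsf{J}}=a_{\mathsf{I}\cap\mathsf{J}}$. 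Since $P_{\mathsf{I}}P_{\mathsf{J}}$ is a transformation, $P_{\mathsf{I}}P_{\mathsf{J}}\rho$ is a sub-normalised state for every normalised $\rho$; a sub-normalised state annihilated by $a_{\mathsf{I}\cap\mathsf{J}}^{\perp}$ must lie in $\mathrm{cone}\!\left(F_{\mathsf{I}\cap\mathsf{J}}\right)$, so $P_{\mathsf{I}}P_{\mathsf{J}}\rho=\left(a_{\mathsf{I}\cap\mathsf{J}}\middle|\rho\right)\sigma$ with $\sigma\in F_{\mathsf{I}\cap\mathsf{J}}$. As $P_{\mathsf{I}\cap\mathsf{J}}\rho$ likewise lies in $\mathrm{cone}\!\left(F_{\mathsf{I}\cap\mathsf{J}}\right)$ (projectors project onto their face, as noted after Proposition~\ref{prop:existence projectors}), the difference $\delta:=P_{\mathsf{I}}P_{\mathsf{J}}\rho-P_{\mathsf{I}\cap\mathsf{J}}\rho$ lies in the subspace $V:=\mathrm{span}\!\left(F_{\mathsf{I}\cap\mathsf{J}}\right)$.

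It then remains to show $\delta=0$. Since $\delta\in V$ and $\left\langle\bullet,\bullet\right\rangle$ is a positive-definite inner product (Lemma~\ref{lem:inner product}), it suffices to check $\left\langle\tau,\delta\right\rangle=0$ for every $\tau\in F_{\mathsf{I}\cap\mathsf{J}}$, as these span $V$. Here I would use the symmetry of projectors for the self-dualising inner product, Equation~\eqref{self-dual projector}, together with $P_{\mathsf{I}}\tau=P_{\mathsf{J}}\tau=P_{\mathsf{I}\cap\mathsf{J}}\tau=\tau$ for $\tau\in F_{\mathsf{I}\cap\mathsf{J}}$ (by fact~(i) and the definition of a projector); the computation would read
\[
\left\langle\tau,P_{\mathsf{I}}P_{\mathsf{J}}\rho\right\rangle=\left\langle P_{\mathsf{I}}\tau,P_{\mathsf{J}}\rho\right\rangle=\left\langle\tau,P_{\mathsf{J}}\rho\right\rangle=\left\langle P_{\mathsf{J}}\tau,\rho\right\rangle=\left\langle\tau,\rho\right\rangle=\left\langle P_{\mathsf{I}\cap\mathsf{J}}\tau,\rho\right\rangle=\left\langle\tau,P_{\mathsf{I}\cap\mathsf{J}}\rho\right\rangle,
\]
so $\left\langle\tau,\delta\right\rangle=0$ and hence $\delta=0$. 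As $\rho$ was arbitrary, $P_{\mathsf{I}}P_{\mathsf{J}}\doteq P_{\mathsf{I}\cap\mathsf{J}}$; incidentally $P_{\mathsf{I}}P_{\mathsf{J}}$ is pure by Purity Preservation, as a pure projector should be. The main obstacle is exactly the cone step of the previous paragraph: without knowing $\delta\in V$ one can only test against arbitrary states, and the same manipulation then gives $\left\langle\tau,\delta\right\rangle=\left\langle\left(P_{\mathsf{J}}P_{\mathsf{I}}-P_{\mathsf{I}\cap\mathsf{J}}\right)\tau,\rho\right\rangle$, which merely reduces the claim for $\left(\mathsf{I},\mathsf{J}\right)$ to the claim for $\left(\mathsf{J},\mathsf{I}\right)$; it is the positivity of $P_{\mathsf{I}}P_{\mathsf{J}}\rho$ that breaks this symmetry and confines $\delta$ to $V$.
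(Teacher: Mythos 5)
Your proof is correct, and its second half takes a genuinely different (and leaner) route than the paper's. Both arguments share the first step: showing that $P_{\mathsf{I}}P_{\mathsf{J}}$ sends every state into the cone over $F_{\mathsf{I}\cap\mathsf{J}}$ with normalisation $\left(a_{\mathsf{I}\cap\mathsf{J}}\middle|\rho\right)$ (you phrase this via annihilation by $a_{\mathsf{I}\cap\mathsf{J}}^{\perp}$, the paper via $\left(a_{\mathsf{I}\cap\mathsf{J}}\middle|P_{\mathsf{I}}P_{\mathsf{J}}\middle|\rho\right)=\left(a_{\mathsf{I}\cap\mathsf{J}}\middle|\rho\right)$; these are equivalent). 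From there the paper goes on to prove two further structural facts about the \emph{product}: that $\left(P_{\mathsf{I}}P_{\mathsf{J}}\right)^{2}\doteq P_{\mathsf{I}}P_{\mathsf{J}}$, and that $P_{\mathsf{I}}P_{\mathsf{J}}$ is itself symmetric for the self-dualising inner product (which requires rerunning the argument of Proposition~\ref{prop:self-adjoint} for the composite), and then concludes by an image/kernel argument: two symmetric idempotents with the same image $\mathrm{span}\:F_{\mathsf{I}\cap\mathsf{J}}$ have the same orthogonal kernel and hence coincide. You bypass both of those intermediate facts entirely: since the difference $\delta$ is already confined to $\mathrm{span}\:F_{\mathsf{I}\cap\mathsf{J}}$ by the cone step, it suffices to test against states of $F_{\mathsf{I}\cap\mathsf{J}}$, and there the symmetry of the \emph{individual} projectors (Equation~\eqref{self-dual projector}, already available) plus the fact that they fix such states does all the work. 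Your closing observation is also the right diagnosis: without the cone step the orthogonality computation only trades $\left(\mathsf{I},\mathsf{J}\right)$ for $\left(\mathsf{J},\mathsf{I}\right)$, and it is positivity that breaks that circularity. The net effect is a shorter proof that needs fewer properties of the composite map; the paper's version has the side benefit of explicitly recording the idempotence and symmetry of $P_{\mathsf{I}}P_{\mathsf{J}}$, which are of independent interest when matching the premises of Proposition 29 of the Barnum--M\"uller--Ududec paper.
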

\begin{proof}
First let us prove that
\begin{equation}
P_{\mathsf{I}}P_{\mathsf{J}}\rho=\left\Vert P_{\mathsf{I}}P_{\mathsf{J}}\rho\right\Vert \rho'\label{eq:PIPJ}
\end{equation}
for every normalised state $\rho$, where $\rho'\in F_{\mathsf{I}\cap\mathsf{J}}$.
Let us show that $\left\Vert P_{\mathsf{I}}P_{\mathsf{J}}\rho\right\Vert =\left(a_{\mathsf{I}\cap\mathsf{J}}\middle|\rho\right)$.
By Proposition~\ref{prop:existence projectors}, $\left(u\middle|P_{\mathsf{I}}P_{\mathsf{J}}\middle|\rho\right)=\left(a_{\mathsf{I}}\middle|P_{\mathsf{J}}\middle|\rho\right)$.
Now, recalling that $a_{\mathsf{I}}=\sum_{i\in\mathsf{I}}\alpha_{i}^{\dagger}$,
\[
\left(a_{\mathsf{I}}\middle|P_{\mathsf{J}}\middle|\rho\right)=\sum_{i\in\mathsf{I}\cap\mathsf{J}}\left(\alpha_{i}^{\dagger}\middle|P_{\mathsf{J}}\middle|\rho\right)+\sum_{i\in\mathsf{I}\backslash\mathsf{J}}\left(\alpha_{i}^{\dagger}\middle|P_{\mathsf{J}}\middle|\rho\right)=\sum_{i\in\mathsf{I}\cap\mathsf{J}}\left(\alpha_{i}^{\dagger}\middle|\rho\right)=\left(a_{\mathsf{I}\cap\mathsf{J}}\middle|\rho\right),
\]
where we have used the fact that $\alpha_{i}^{\dagger}P_{\mathsf{J}}=\alpha_{i}^{\dagger}$
if $i\in\mathsf{J}$, and $\alpha_{i}^{\dagger}P_{\mathsf{J}}=0$
if $i\notin\mathsf{J}$. If $\rho\in F_{\mathsf{I}\cap\mathsf{J}}^{\perp}$,
both the LHS and the RHS of Equation~\eqref{eq:PIPJ} vanish, and the
statement is trivially satisfied. Now, let us assume $\rho\notin F_{\mathsf{I}\cap\mathsf{J}}^{\perp}$,
in this case $\left(a_{\mathsf{I}\cap\mathsf{J}}\middle|\rho\right)>0$.
We wish to prove that $\left(a_{\mathsf{I}\cap\mathsf{J}}\middle|P_{\mathsf{I}}P_{\mathsf{J}}\middle|\rho\right)=\left(a_{\mathsf{I}\cap\mathsf{J}}\middle|\rho\right)$.
Recalling the expression of $a_{\mathsf{I}\cap\mathsf{J}}$, we have
\[
\sum_{i\in\mathsf{I}\cap\mathsf{J}}\left(\alpha_{i}^{\dagger}\middle|P_{\mathsf{I}}P_{\mathsf{J}}\middle|\rho\right)=\sum_{i\in\mathsf{I}\cap\mathsf{J}}\left(\alpha_{i}^{\dagger}\middle|P_{\mathsf{J}}\middle|\rho\right)=\sum_{i\in\mathsf{I}\cap\mathsf{J}}\left(\alpha_{i}^{\dagger}\middle|\rho\right)=\left(a_{\mathsf{I}\cap\mathsf{J}}\middle|\rho\right),
\]
again by the properties of $P_{\mathsf{I}}$ and $P_{\mathsf{J}}$.
This means that $P_{\mathsf{I}}P_{\mathsf{J}}$ maps every normalised
state to a state of $F_{\mathsf{I}\cap\mathsf{J}}$, up to normalisation.

Now let us prove that $\left(P_{\mathsf{I}}P_{\mathsf{J}}\right)^{2}\doteq P_{\mathsf{I}}P_{\mathsf{J}}$.
First note that $F_{\mathsf{I}\cap\mathsf{J}}\subseteq F_{\mathsf{I}}$.
Indeed, suppose $\rho\in F_{\mathsf{I}\cap\mathsf{J}}$, then
\[
\left(a_{\mathsf{I}}\middle|\rho\right)=\sum_{i\in\mathsf{I}\cap\mathsf{J}}\left(\alpha_{i}^{\dagger}\middle|\rho\right)+\sum_{i\in\mathsf{I}\backslash\mathsf{J}}\left(\alpha_{i}^{\dagger}\middle|\rho\right)=\left(a_{\mathsf{I}\cap\mathsf{J}}\middle|\rho\right)=1,
\]
where we have used the fact that $\left(\alpha_{i}^{\dagger}\middle|\rho\right)=0$
if $i\notin\mathsf{I}\cap\mathsf{J}$. By a similar argument, $F_{\mathsf{I}\cap\mathsf{J}}\subseteq F_{\mathsf{J}}$.
Now, $P_{\mathsf{I}}P_{\mathsf{J}}\rho=\left\Vert P_{\mathsf{I}}P_{\mathsf{J}}\rho\right\Vert \rho'$,
with $\rho'\in F_{\mathsf{I}\cap\mathsf{J}}$. Then $\left(P_{\mathsf{I}}P_{\mathsf{J}}\right)^{2}\rho=\left\Vert P_{\mathsf{I}}P_{\mathsf{J}}\rho\right\Vert P_{\mathsf{I}}P_{\mathsf{J}}\rho'$.
However, $\rho'\in F_{\mathsf{J}}$, so $P_{\mathsf{J}}\rho'=\rho'$,
and, similarly, $\rho'\in F_{\mathsf{I}}$, so $P_{\mathsf{I}}\rho'=\rho'$.
Consequently,
\[
\left(P_{\mathsf{I}}P_{\mathsf{J}}\right)^{2}\rho=\left\Vert P_{\mathsf{I}}P_{\mathsf{J}}\rho\right\Vert \rho'=P_{\mathsf{I}}P_{\mathsf{J}}\rho,
\]
proving that $\left(P_{\mathsf{I}}P_{\mathsf{J}}\right)^{2}\doteq P_{\mathsf{I}}P_{\mathsf{J}}$.

Now let us prove that for every $\xi\in\mathsf{St}_{\mathbb{R}}\left(\mathrm{A}\right)$,
we have $\left(P_{\mathsf{I}}P_{\mathsf{J}}\xi\right)^{\dagger}=\xi^{\dagger}P_{\mathsf{I}}P_{\mathsf{J}}$.
Following the lines of proof of Proposition~\ref{prop:self-adjoint},
let us show that this is true when $\xi$ is a normalised pure state
$\psi$. This boils down to showing that
\[
\left(\psi^{\dagger}P_{\mathsf{I}}P_{\mathsf{J}}\middle|P_{\mathsf{I}}P_{\mathsf{J}}\psi\right)=\left\Vert P_{\mathsf{I}}P_{\mathsf{J}}\psi\right\Vert ^{2}.
\]

The proof goes on as for Proposition~\ref{prop:self-adjoint}, noting
that if $\psi'\in F_{\mathsf{I}\cap\mathsf{J}}$, then $\psi'^{\dagger}P_{\mathsf{I}}P_{\mathsf{J}}=\psi'^{\dagger}$
because $\psi'^{\dagger}P_{\mathsf{I}}=\psi'^{\dagger}$ as $\psi'\in F_{\mathsf{I}}$,
and, similarly, $\psi'^{\dagger}P_{\mathsf{J}}=\psi'^{\dagger}$ as
$\psi'\in F_{\mathsf{J}}$. Eventually we find that for pure states
$\left(P_{\mathsf{I}}P_{\mathsf{J}}\psi\right)^{\dagger}=\psi^{\dagger}P_{\mathsf{I}}P_{\mathsf{J}}$,
and by linearity this means that $\left(P_{\mathsf{I}}P_{\mathsf{J}}\xi\right)^{\dagger}=\xi^{\dagger}P_{\mathsf{I}}P_{\mathsf{J}}$.

A consequence of this property is that $\left\langle P_{\mathsf{I}}P_{\mathsf{J}}\xi,\eta\right\rangle =\left\langle \xi,P_{\mathsf{I}}P_{\mathsf{J}}\eta\right\rangle $,
for all $\xi,\eta\in\mathsf{St}_{\mathbb{R}}\left(\mathrm{A}\right)$.
These linear maps on $\mathsf{St}_{\mathbb{R}}\left(\mathrm{A}\right)$
are such that $\mathsf{St}_{\mathbb{R}}\left(\mathrm{A}\right)=\mathrm{im}\:P_{\mathsf{I}}P_{\mathsf{J}}\oplus\ker P_{\mathsf{I}}P_{\mathsf{J}}$,
and $\ker P_{\mathsf{I}}P_{\mathsf{J}}$ is the orthogonal subspace
to $\mathrm{im}\:P_{\mathsf{I}}P_{\mathsf{J}}$, hence it is uniquely
defined once $\mathrm{im}\:P_{\mathsf{I}}P_{\mathsf{J}}$ is fixed.
Note that for any projector $P_{\mathsf{I}}$ we have $\mathrm{im}\:P_{\mathsf{I}}=\mathrm{span}\:F_{\mathsf{I}}$,
and we have just proved that $\mathrm{im}\:P_{\mathsf{I}}P_{\mathsf{J}}=\mathrm{span}\:F_{\mathsf{I}\cap\mathsf{J}}=\mathrm{im}\:P_{\mathsf{I}\cap\mathsf{J}}$.
Having the same image, and consequently the same kernel, $P_{\mathsf{I}}P_{\mathsf{J}}$
and $P_{\mathsf{I}\cap\mathsf{J}}$ agree on a basis of $\mathsf{St}_{\mathbb{R}}\left(\mathrm{A}\right)$,
therefore they agree also on all states of $\mathrm{A}$, meaning
that $P_{\mathsf{I}}P_{\mathsf{J}}\doteq P_{\mathsf{I}\cap\mathsf{J}}$.
\end{proof}

\subsection{Main result}

Proposition $29$ of \cite{Barnum-interference} asserts that theories satisfying two postulates, Strong Symmetry and Projectivity, have higher-order interference if and only if their projectors (in our terminology here) preserve purity.  A close examination of its proof, and those of all lemmas and propositions used in its proof---notably {L}emma 22 and {P}ropositions 18, 25, 26, and 28 of \cite{Barnum-interference}---reveals that only premises weaker than the conjunction of Strong Symmetry and Projectivity are used:  self-duality, the ``spectral-like decomposition'' of effects as in Lemma~\ref{lem:normalised effects} above, the fact that faces are determined by subsets of maximal distinguishable sets of states as in Section~\ref{orthogonal projectors} above, the existence of projectors onto each face in the sense of Definition~\ref{def:An-orthogonal-projector} above, and the fact that these are symmetric with respect to the self-dualising inner product (i.e.,\ orthogonal projectors), and satisfy Proposition~\ref{prop:product projectors} above. We have established these weaker premises for sharp theories with purification, and moreover, we have established in {P}roposition~\ref{prop:existence projectors} that their projectors preserve purity, so we have proved:

\begin{thm}\label{theorem: main}
In any sharp theory with purification there can be no $n$th order interference for $n \geq 3$.
\end{thm}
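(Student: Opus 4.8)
The plan is to obtain the statement as an application of Proposition~29 of \cite{Barnum-interference}, which says that a theory equipped with self-duality and a suitably well-behaved system of orthogonal projectors onto faces has no $n$th-order interference for $n\geq 3$ as soon as those projectors preserve purity; all of the needed structural facts have already been assembled above for sharp theories with purification. Concretely, I would first fit the multi-slit scenario of Section~\ref{higher-order interference} into such a theory. By Theorem~\ref{thm:diagonalisation} and the discussion opening Section~\ref{orthogonal projectors}, any system of $n$ mutually orthogonal faces is obtained by partitioning a pure maximal set $\left\{\alpha_i\right\}_{i=1}^{d}$ into blocks, so that, after relabelling, the faces $F_{\mathsf{I}}=\bigvee_{i\in\mathsf{I}}F_i$ of Definition~\ref{def:faces interference} are exactly those cut out by the effects $a_{\mathsf{I}}=\sum_{i\in\mathsf{I}}\alpha_i^{\dagger}$ of Section~\ref{orthogonal projectors}. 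The physical transformation ``leave the slits in $\mathsf{I}$ open and block the rest'' is then the pure orthogonal projector $P_{\mathsf{I}}$ of Proposition~\ref{prop:existence projectors}, and I would check that setting $e_{\mathsf{I}}:=E\,P_{\mathsf{I}}$ (so $e_{\left\{1,\ldots,n\right\}}=E$ since $P_{\left\{1,\ldots,d\right\}}\doteq\mathcal{I}$) meets Definition~\ref{def:faces interference}: on $\rho\in F_{\mathsf{I}}$ one has $P_{\mathsf{I}}\rho=\rho$, hence $\left(e_{\mathsf{I}}\middle|\rho\right)=\left(E\middle|\rho\right)$, and on $\rho\perp F_{\mathsf{I}}$ one has $P_{\mathsf{I}}\rho=0$, hence $\left(e_{\mathsf{I}}\middle|\rho\right)=0$.

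With the scenario in place, I would verify that every hypothesis entering Proposition~29 of \cite{Barnum-interference} --- together with Lemma~22 and Propositions~18, 25, 26 and 28 that feed into it --- is available here. Although those results are stated under Strong Symmetry and Projectivity, their proofs invoke these postulates only through the following consequences, each established above for sharp theories with purification: self-duality of every state space; the spectral-like decomposition of normalised effects of Lemma~\ref{lem:normalised effects}; the fact that every face is cut out by a subset of a maximal set of perfectly distinguishable pure states (Section~\ref{orthogonal projectors}); the existence of an orthogonal projector onto each face in the sense of Definition~\ref{def:An-orthogonal-projector} (Proposition~\ref{prop:existence projectors}); the symmetry of these projectors with respect to the self-dualising inner product, i.e.\ Equation~\eqref{self-dual projector} (Proposition~\ref{prop:self-adjoint}); and the composition law $P_{\mathsf{I}}P_{\mathsf{J}}\doteq P_{\mathsf{I}\cap\mathsf{J}}$ of Proposition~\ref{prop:product projectors}. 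Finally, Purity Preservation together with Proposition~\ref{prop:existence projectors} makes each $P_{\mathsf{I}}$ a \emph{pure} transformation, so composing it with a pure state and invoking Purity Preservation once more shows that $P_{\mathsf{I}}$ carries pure states to (subnormalised) pure states --- precisely the ``projectors preserve purity'' condition. Feeding all of this into Proposition~29 of \cite{Barnum-interference} gives $I_3=0$ for every state and effect, and the Sorkin-type monotonicity $I_n=0\Rightarrow I_{n+1}=0$ recalled after Equation~\eqref{HOI} \cite{Sorkin1} (whose argument, as noted there, applies in our setting) propagates this to all $n\geq 3$.

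I expect the main obstacle to be not a computation but the careful bookkeeping of the previous paragraph: one must actually trace the chain of lemmas in \cite{Barnum-interference} to confirm that Strong Symmetry and Projectivity are used nowhere beyond the listed consequences, and one must reconcile the definitions --- that our $P_{\mathsf{I}}$ really is an ``orthogonal projector'' in their sense, that ``pure transformation'' in our sense entails ``preserves purity'' in theirs, and that the effects $e_{\mathsf{I}}=E\,P_{\mathsf{I}}$ reproduce their (and our Definition~\ref{def:faces interference}'s) multi-slit effects. Once that reconciliation is done the theorem follows with no further work. As an independent sanity check one may note that the same projector data makes each system a Euclidean Jordan algebra, and in the Peirce decomposition relative to $\left\{\alpha_i\right\}_{i=1}^{d}$ the operator $\sum_{\varnothing\neq\mathsf{I}\subseteq\left\{1,\ldots,n\right\}}(-1)^{n-\left|\mathsf{I}\right|}P_{\mathsf{I}}$ annihilates every Peirce component when $n\geq 3$, so that $I_n$ vanishes directly.
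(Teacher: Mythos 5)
Your proposal is correct and follows essentially the same route as the paper: it reduces the theorem to Proposition~29 of \cite{Barnum-interference} by checking that the weaker premises actually used there (self-duality, the spectral decomposition of effects, faces cut out by subsets of pure maximal sets, existence of orthogonal projectors symmetric for the self-dualising inner product, and $P_{\mathsf{I}}P_{\mathsf{J}}\doteq P_{\mathsf{I}\cap\mathsf{J}}$) all hold, and that the pure projectors of Proposition~\ref{prop:existence projectors} preserve purity. The extra bookkeeping you add --- the identification $e_{\mathsf{I}}=E\,P_{\mathsf{I}}$, the Sorkin propagation to $n>3$, and the Jordan-algebraic cross-check --- matches remarks the paper makes elsewhere and does not change the argument.
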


\subsection{Jordan-algebraic structure}
Our results also imply that systems, and therefore also the ``subsystems'' associated with their faces, are operationally equivalent to finite-dimensional Jordan-algebraic systems.  These are systems $\rA$ for which
$\mathsf{St}_{+}\left(\rA\right)$ is the cone of squares in a finite-dimensional Euclidean Jordan algebra (EJA) and
$\mathsf{Eff}_{+}\left(\rA\right)$ is identified with the same cone, with evaluation of effects on states given by the inner product and
the Jordan unit as the deterministic effect.  (See
\cite{BarnumGraydonWilceCCEJA} for more on Jordan algebraic operational systems, and \cite{Alfsen-Shultz} for a mathematical treatment.)

\begin{thm}\label{theorem:Jordan}
In a sharp theory with purification, every system $A$ has both $\mathsf{St}_{+}\left({\rA}\right)$ and
$\mathsf{Eff}_{+}\left({\rA}\right)$ isomorphic to the cone of squares in a Euclidean Jordan algebra (EJA) via isomorphisms
$S$ and $T$ such that $\left(a\middle|\rho\right) = \left\langle Ta , S\rho \right\rangle$, where $\left\langle \bullet , \bullet \right\rangle$ is the canonical inner product on the EJA, and  $T$ takes the deterministic effect to the Jordan unit.
\end{thm}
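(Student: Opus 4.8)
The plan is to recognise $\mathsf{St}_{+}\left(\mathrm{A}\right)$ as the cone of squares of a Euclidean Jordan algebra by verifying that the structure assembled in Section~\ref{orthogonal projectors} fits a known ``EJA recognition'' theorem. First I would transport everything onto the single space $V:=\mathsf{St}_{\mathbb{R}}\left(\mathrm{A}\right)$ equipped with the self-dualising inner product $\left\langle \xi,\eta\right\rangle :=\left(\xi^{\dagger}\middle|\eta\right)$ of Lemma~\ref{lem:inner product}. The (linearly extended) dagger of Theorem~\ref{thm:duality} is a vector-space isomorphism $\mathsf{Eff}_{\mathbb{R}}\left(\mathrm{A}\right)\to V$, $a\mapsto a^{\dagger}$, which by self-duality carries $\mathsf{Eff}_{+}\left(\mathrm{A}\right)$ onto $\mathsf{St}_{+}\left(\mathrm{A}\right)$, satisfies $\left(a\middle|\rho\right)=\left\langle a^{\dagger},\rho\right\rangle$, and sends the deterministic effect $u=\sum_{i}\alpha_{i}^{\dagger}$ to $\sum_{i}\alpha_{i}=d\chi$. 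So it suffices to exhibit a Euclidean Jordan product on $V$ with unit $\sum_{i}\alpha_{i}$, with cone of squares $\mathsf{St}_{+}\left(\mathrm{A}\right)$, and with canonical trace form $\left\langle \bullet,\bullet\right\rangle$; then $S:=\mathrm{id}_{V}$ and $T:=\left(\bullet\right)^{\dagger}$ on $\mathsf{Eff}_{\mathbb{R}}\left(\mathrm{A}\right)$ give $\left(a\middle|\rho\right)=\left\langle Ta,S\rho\right\rangle$ and $Tu=$ Jordan unit.

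To produce the product I would invoke the characterisation of Jordan-algebraic cones in ``spectral''/compression form (Alfsen--Shultz \cite{Alfsen-Shultz}, packaged for probabilistic models in \cite{BarnumGraydonWilceCCEJA}; cf.\ also the machinery behind Proposition~29 of \cite{Barnum-interference}), whose hypotheses are exactly the facts already proved: $\mathsf{St}_{+}\left(\mathrm{A}\right)$ is self-dual (Section~\ref{subsec:Self-duality}); every vector of $V$ has a spectral decomposition $\xi=\sum_{i=1}^{d}\lambda_{i}\alpha_{i}$ into perfectly distinguishable pure states (Theorem~\ref{thm:diagonalisation} and its extension to $\mathsf{St}_{\mathbb{R}}\left(\mathrm{A}\right)$); faces correspond to subsets of such sets (Section~\ref{orthogonal projectors}); each face carries an orthogonal projector $P_{\mathsf{I}}$ that is idempotent (Proposition~\ref{prop:properties projectors}), self-adjoint for $\left\langle \bullet,\bullet\right\rangle$ (Equation~\eqref{self-dual projector}, via Proposition~\ref{prop:self-adjoint}), and composes as $P_{\mathsf{I}}P_{\mathsf{J}}\doteq P_{\mathsf{I}\cap\mathsf{J}}$ (Proposition~\ref{prop:product projectors}); and these projectors preserve purity, since by Proposition~\ref{prop:existence projectors} each $P_{\mathsf{I}}$ is a \emph{pure} transformation, hence maps pure states to multiples of pure states, i.e.\ extreme rays of the cone to extreme rays. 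The recognition theorem then supplies a Euclidean Jordan structure on $V$ whose primitive idempotents are the pure states, whose Jordan frames are the pure maximal sets (so its unit is $\sum_{i}\alpha_{i}=d\chi$, matching $Tu$), whose cone of squares is $\mathsf{St}_{+}\left(\mathrm{A}\right)$, and whose canonical inner product agrees with $\left\langle \bullet,\bullet\right\rangle$ — both are self-dualising and both give $\left\langle \alpha_{i},\alpha_{j}\right\rangle =\delta_{ij}$ on a frame, and self-dualising inner products are unique up to a positive scale on each indecomposable summand. Reading this back through the first paragraph yields the stated $S$, $T$, the identity $\left(a\middle|\rho\right)=\left\langle Ta,S\rho\right\rangle$, and $Tu=$ Jordan unit; running the same argument on the Peirce-$1$ subalgebra of the idempotent $\sum_{i\in\mathsf{I}}\alpha_{i}$ (whose cone is $P_{\mathsf{I}}\,\mathsf{St}_{+}\left(\mathrm{A}\right)$) gives the statement for the ``subsystems'' attached to faces.

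The main obstacle is matching our premises to a clean recognition theorem rather than re-deriving everything. If one instead tried the Koecher--Vinberg route (symmetric cone $\Leftrightarrow$ self-dual $+$ homogeneous), self-duality is in hand but homogeneity is not immediate: every reversible channel fixes the invariant state $\chi$, so reversible channels cannot act transitively on the interior of $\mathsf{St}_{+}\left(\mathrm{A}\right)$, and the cone maps one can form directly from the projectors, e.g.\ $\sum_{i}t_{i}P_{\{i\}}$ with $t_{i}>0$, fail to be invertible because $\sum_{i}P_{\{i\}}\ne\mathcal{I}$ in general (the ``off-diagonal'' Peirce components are annihilated). Building the missing boost-type automorphisms essentially amounts to reconstructing the quadratic representation, i.e.\ the Jordan product itself, which is why I would lean on the compression-based characterisation. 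A secondary point to check carefully is that the EJA's canonical trace form really is $\left\langle \bullet,\bullet\right\rangle$ on the nose rather than a summand-wise rescaling of it, which follows from the normalisation $\left\langle \alpha_{i},\alpha_{i}\right\rangle =1$ together with uniqueness of self-dualising inner products on simple Euclidean Jordan algebras.
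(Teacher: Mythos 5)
Your proposal is correct and follows essentially the same route as the paper: both reduce the claim to the Alfsen--Shultz compression-based characterisation of Euclidean Jordan algebras, feeding in symmetry of transition probabilities (the symmetric self-dualising inner product of Lemma~\ref{lem:inner product}), the existence of orthogonal projectors on every face with the properties of Propositions~\ref{prop:existence projectors}--\ref{prop:product projectors}, and purity preservation by those projectors. Your extra care in pinning down $S$, $T$, and the normalisation of the canonical inner product via $\left\langle \alpha_{i},\alpha_{j}\right\rangle =\delta_{ij}$ is a welcome elaboration of details the paper leaves implicit, but it is not a different argument.
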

\begin{proof}
The proof uses results of Alfsen and Shultz
\cite{Alfsen-Shultz78}, for which we refer
to \cite{Alfsen-Shultz}.  Theorem 9.33 in~\cite{Alfsen-Shultz} implies that finite-dimensional systems with symmetry of transition probabilities (STP), a type of projection operator they call  ``compression'' associated with every face, and whose compressions preserve purity, have state spaces affinely isomorphic to the state spaces of Euclidean Jordan algebras.  Sharp theories with purification satisfy STP, as noted following {L}emma~\ref{lem:inner product} above. Our projectors are easily shown to be examples of compressions by the same argument as in {T}heorem 17 of \cite{Barnum-interference};  this argument uses only properties satisfied by our projectors (the same ones needed in the proof of {T}heorem
\ref{theorem: main}, except for {P}urity {P}reservation) and does not need {S}trong {S}ymmetry.  As shown above, our projectors also preserve purity.
\end{proof}

Since faces of Jordan-algebraic systems are also Jordan-algebraic
({to see this, combine a result of Iochum \cite{Iochum} ({T}heorem 5.32 in \cite{Alfsen-Shultz}), whose finite dimensional case is that all faces of EJAs are the positive part of the images of compressions, with the facts (cf.\ pp.\ 22--26 of \cite{Alfsen-Shultz}) that every face of the cone of squares is the image of such a compression $P$ (\cite{Alfsen-Shultz}, {L}emma 1.39), and also a Jordan subalgebra whose unit is the image of the order unit under $P$ (\cite{Alfsen-Shultz}, {P}roposition 1.43).}),
so are the faces of state spaces in sharp theories with purification.  However, it is not the case that in sharp theories with purification, each face of a system is necessarily isomorphic to a stand-alone system of the theory (an \emph{object} of the category, in the categorical formulation), { but, it is always possible to extend the theory such that they are.} Every category has a \emph{Cauchy completion}: this is a minimal extension of the category such that every idempotent morphism $\pi: \rA \rightarrow \rA$ can be written as a retraction-section pair, i.e.,\ as the composition $\pi = \sigma \circ \rho$,
with $\rho: \rA \rightarrow \rB$ and
$\sigma: \rB \rightarrow\rA$, such that the reverse composition $\rho \circ \sigma$ is the identity morphism on $B$.  When the idempotents are projectors $P$ like the ones we consider here, $B$ will be a system isomorphic to the face $\mathrm{im}_+(P)$.  Of course, since there may be idempotents beyond the projectors onto faces (for example, decoherence of a set
of orthogonal subspaces, or damping to a fixed state, in quantum theory), Cauchy completion of an operational theory $T$ may add many objects in addition to ones isomorphic to faces of systems of $T$; indeed, for many operational theories (e.g.,\ ones possessing idempotent decoherence maps) this will add some classical systems.
{{This is indeed the case for quantum theory where the Cauchy completion leads to the category of finite-dimensional C*-algebras and completely positive maps  \cite{coecke2017two}.}}  {The Cauchy completion can be thought of as adding in all operationally accessible systems that
 can be simulated on the physical system via a consistent restriction on the allowed states, effects and transformations.} The Cauchy completion of a sharp theory with purification will likely satisfy the Ideal Compression postulate by virtue of containing the faces that are images of orthogonal projectors; but there are also non-Cauchy complete theories that satisfy it, e.g.,\ the category CPM of finite-dimensional quantum systems and CP maps, in which all systems, and also all images of orthogonal projectors as defined above, are fully coherent quantum systems, but there are no classical systems.

In \cite{BarnumGraydonWilceCCEJA}, some categories, including dagger-compact-closed categories, of Jordan algebraic systems were constructed; these categories are equivalent to operational theories as we use the term here.  Although sharp theories with purification also have Jordan algebraic state and effect spaces, it is interesting to note that some of the explicit examples in
\cite{TowardsThermo,Purity} involve composites different from those that would be obtained in the categories considered in \cite{BarnumGraydonWilceCCEJA} for systems with the same state spaces.  On the other hand, the category combining real and quaternionic systems in \cite{BarnumGraydonWilceCCEJA} does \emph{not} satisfy Purity Preservation by parallel composition and hence falls outside the class of sharp theories with purification, although its filters do preserve purity.  Of course, the failure of Purity Preservation by parallel composition seems likely to allow phenomena like the nonextensiveness of entropy when products of states are taken, which could warrant focusing on sharp theories with purification in thermodynamically motivated work such as \cite{TowardsThermo}.

That Jordan-algebraic systems lack higher-order interference was shown by Barnum and Ududec (\cite{UBEinterferenceUnpublished}; announced in \cite{BarnumReconstructionTalk}) and by Niestegge \cite{NiesteggeJordanInterference}; combining this with {T}heorem \ref{theorem:Jordan} gives another way to see that our results on sharp theories with purification imply the absence of higher-order interference. {Moreover, as not all EJAs satisfy our postulates, it is clear that our postulates are sufficient but not necessary conditions for ruling out higher-order interfence.}

\section{Discussion and conclusions} \label{end}

{We proved that in sharp theories with purification multi-slit experiments must have a pure projector structure and, moreover, such theories exhibit at most second-order interference. Hence these theories are, at least conceptually, very ``close'' to quantum theory. Moreover, recent work has shown that sharp theories with purification are close to quantum theory in terms of other physical and information processing features. Indeed, such theories possess quantum-like contextuality behaviour \cite{Chiribella-Yuan2014,Chiribella-Yuan2015}, quantum-like computation \cite{Lee-Selby-Grover,Control-reversible}, and quantum-like thermodynamic properties \cite{Diagonalization,TowardsThermo,Purity}. Recall from Section~\ref{sharp theories with purification}} that quantum theory is not the only example of a generalised probabilistic theory satisfying these principles. Hence {C}ausality, {P}urity {P}reservation, {P}ure {S}harpness, and {P}urification do not recover the entire quantum formalism.

{However, if one were to introduce the Ideal Compression and Local Discriminability principles of the reconstruction of quantum theory due to Chiribella, D`Ariano, and Perinotti \cite{Chiribella-informational}, one would indeed regain the entire quantum formalism. Indeed, both additional principles are necessary: Local Discriminability to preclude real quantum theory and Ideal Compression to preclude the contrived---yet admissible---example of the theory in which all systems are composites of qubits.  Sharp theories with purification thus serve as a fertile test-bed for physics that is conceptually quite close to that predicted by the quantum world, but which may diverge from it in certain small,} yet interesting, ways.

\subsection{Finding higher-order interference}

To date there has been no experiment that has found higher-order interference,
at least, none that cannot be explained by taking into account the fact that the ``sets of histories are not mutually exclusive'' \cite{Sorkin1, sinha2015superposition}. However, this might be due to the specific experimental set-up employed, rather than a fundamental preclusion of higher-order interference in nature. We show here that many of the properties needed to rule out observing higher-order interference are in fact quite natural assumptions which appear to be suggested by the experimental set-up employed. This suggests that the experimental set-up itself may implicitly rule out observing higher-order interference from the outset.

The main result of the current work is that sharp theories with purification can never exhibit higher-order interference in any experiment. However, in a wider class of theories, we still will not observe higher-order interference in a particular experiment if the following three conditions are met{;} hence, to have any chance of observing higher-order interference{,} experiments must be designed in order to try to violate these conditions.
\begin{enumerate}
\item The transformations corresponding to blocking slits   satisfy: $T_{\mathsf{I}}T_{\mathsf{J}}=T_{\mathsf{I}\cap \mathsf{J}}$. By this we mean that they share several properties with the projectors $P_{\mathsf{I}}$ of Section~\ref{proofs}: if we define the effects $a_{\mathsf{I}} = {uT_{\mathsf{I}}}$ and the faces $F_{\mathsf{I}}$ and $F_{\mathsf{I}}^\perp$ as in {S}ection~\ref{orthogonal projectors}, i.e.,\ as the $1$-set and $0$-set of $a_{\mathsf{I}}$, then the $T_{\mathsf{I}}$ are assumed to be
  	\emph{orthogonal projectors} in the sense of {D}efinition~\ref{def:An-orthogonal-projector}, and to be both idempotent and ``orthogonal'' ($T_{\mathsf{I}}T_{\mathsf{J}} = 0$) if $\mathsf{I}$ and $\mathsf{J}$ are disjoint (as in Proposition~\ref{prop:properties projectors}).
\item The $T_{\mathsf{I}}${'s} map pure states to pure states
\item The $T_{\mathsf{I}}${'s} are self-adjoint.
\end{enumerate}

 The first of these is generally expected as only those slits belonging to both $\mathsf{I}$ and $\mathsf{J}$ will not be blocked by either $T_{\mathsf{I}}$ or $T_{\mathsf{J}}${,} and so should hold in this experimental set-up for any theory that can describe it.

The second assumption, which is also natural given the multi-slit set-up, is that, in an idealised scenario, the slits should not introduce fundamental noise. That is, if an input state $\rho$ is pure, i.e.,\ has no classical noise associated with it, then $T_{\mathsf{I}}\rho$ should also be pure. Hence it appears natural to assume that $T_{\mathsf{I}}$ maps pure states to pure states. Violating this principle by just adding noise to the experiment does not seem likely to demonstrate higher-order interference{. A} more plausible way to violate this however would be if the particle passing through the slits were to become entangled with some degree of freedom associated with them, if we do not have access to this degree of freedom then this would send a pure input to a mixed state.

The final assumption is far less general than the others, as it places a constraint on the theory. That is, to even discuss whether a transformation is self-adjoint {(cf.\ also Appendix~\ref{sec:dagger all})}, one requires that the theory itself be self-dual. To fully understand what this assumption entails{,} one needs an operational or physical interpretation of the self-dualising inner product ({see \cite{Selby-dagger} for an example of such an interpretation}). However, intuitively this notion reflects the inherent symmetry of the experimental set{-}up. Here one could consider propagation from the source to the effect or from the effect to the source as being ``dual'' to one another and, moreover, that the physical blocking of slits has an equivalent effect in either situation. That is, the assumption of self-adjointness corresponds to the statement that the projector has an equivalent action on the effects associated with a particular slit as it does on the states which can pass through them.

If an experiment satisfies these assumptions then for any self-dual theory
it was shown in  \cite{Barnum-interference} (Proposition 29) that we will not see higher-order interference in this experiment. Hence any set of physical principles which ensure these assumptions hold will rule out higher-order interference. Because the mathematical assumptions involved in formalising a multi-slit experiment are so natural when interpreted operationally, perhaps one should search for higher-order interference in set-ups that don't seem to preclude it from the outset. This could involve ``asymmetric'' multi-slit set-ups that are {not} obviously time-symmetric in an arbitrary generalised probabilistic theory. One could also consider experiments that search for higher-order phases \cite{Control-reversible}, a reformulation of higher-order interference that makes no reference to projectors and hence does not preclude certain generalised theories from the outset. The assumption that nature is self-dual could also be rejected{;} this poses the question as to whether it is possible to find {a} direct experimental test of this principle.

\vspace{6pt}


\section*{Acknowledgments}{The authors thank J.\ van de Wetering for pointing out an error in Lemma~\ref{lem:normalised effects} in the previous version of the paper. The authors also thank J.\ Barrett,  and J.\ J.\ Barry for encouragement while writing the current paper. This work was supported by EPSRC grants through the Controlled Quantum Dynamics Centre for Doctoral Training, the UCL Doctoral Prize Fellowship, and an Oxford doctoral training scholarship, and also by Oxford-Google DeepMind Graduate Scholarship. We also acknowledge financial support from the European Research Council (ERC Grant Agreement no 337603), the Danish Council for Independent Research (Sapere Aude) and VILLUM FONDEN via the QMATH Centre of Excellence (Grant No. 10059).  This work
began while the authors were attending the ``Formulating
and Finding Higher-order Interference'' workshop at
the Perimeter Institute. Research at Perimeter Institute
is supported by the Government of Canada through the
Department of Innovation, Science and Economic Development
Canada and by the Province of Ontario through
the Ministry of Research, Innovation and Science.}


\appendix

\setcounter{equation}{0}
\renewcommand{\theequation}{A\arabic{equation}}

\section{Norms and fidelity}
\vspace{-6pt}
\subsection{Operational norm and dagger norm\label{subsec:operational vs dagger}}

In Ref.~\cite{Chiribella-purification} the operational norm for
every vector $\xi\in\mathsf{St}_{\mathbb{R}}\left(\mathrm{A}\right)$
was introduced:
\[
\left\Vert \xi\right\Vert :=\sup_{a\in\mathsf{Eff}\left(\mathrm{A}\right)}\left(a\middle|\xi\right)-\inf_{a\in\mathsf{Eff}\left(\mathrm{A}\right)}\left(a\middle|\xi\right)
\]

As pointed out in \cite{Chiribella-purification}, {in quantum theory} the operational
norm coincides with the trace norm. The analogy
is apparent also in sharp theories with purification.
\begin{Proposition}
Let $\xi\in\mathsf{St}_{\mathbb{R}}\left(\mathrm{A}\right)$ be diagonalised
as $\xi=\sum_{i=1}^{d}x_{i}\alpha_{i}$. Then $\left\Vert \xi\right\Vert =\sum_{i=1}^{d}\left|x_{i}\right|$.
\end{Proposition}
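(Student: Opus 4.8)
The plan is to prove the two inequalities $\left\Vert\xi\right\Vert\le\sum_{i}\left|x_{i}\right|$ and $\left\Vert\xi\right\Vert\ge\sum_{i}\left|x_{i}\right|$ separately, using the given diagonalisation $\xi=\sum_{i=1}^{d}x_{i}\alpha_{i}$ together with the orthogonality relation $\left(\alpha_{i}^{\dagger}\middle|\alpha_{j}\right)=\delta_{ij}$ valid for any pure maximal set (from \cite{TowardsThermo}, used repeatedly above). Throughout I write $\mathsf{I}_{+}:=\left\{i:x_{i}>0\right\}$ and $\mathsf{I}_{-}:=\left\{i:x_{i}<0\right\}$, so that $\sum_{i}\left|x_{i}\right|=\sum_{i\in\mathsf{I}_{+}}x_{i}-\sum_{i\in\mathsf{I}_{-}}x_{i}$.

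For the upper bound I would take an arbitrary effect $a\in\mathsf{Eff}\left(\mathrm{A}\right)$ and expand $\left(a\middle|\xi\right)=\sum_{i}x_{i}\left(a\middle|\alpha_{i}\right)$. Since each $\alpha_{i}$ is a normalised state, $\left(a\middle|\alpha_{i}\right)\in\left[0,1\right]$. Discarding the non-positive terms and bounding $\left(a\middle|\alpha_{i}\right)\le1$ on the rest gives $\left(a\middle|\xi\right)\le\sum_{i\in\mathsf{I}_{+}}x_{i}$; symmetrically, discarding the non-negative terms and using $x_{i}\left(a\middle|\alpha_{i}\right)\ge x_{i}$ when $x_{i}<0$ gives $\left(a\middle|\xi\right)\ge\sum_{i\in\mathsf{I}_{-}}x_{i}$. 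Taking the supremum and the infimum over $a$ then yields $\left\Vert\xi\right\Vert=\sup_{a}\left(a\middle|\xi\right)-\inf_{a}\left(a\middle|\xi\right)\le\sum_{i\in\mathsf{I}_{+}}x_{i}-\sum_{i\in\mathsf{I}_{-}}x_{i}=\sum_{i}\left|x_{i}\right|$.

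For the lower bound I would exhibit explicit effects saturating these bounds. Put $a_{+}:=\sum_{i\in\mathsf{I}_{+}}\alpha_{i}^{\dagger}$ and $a_{-}:=\sum_{i\in\mathsf{I}_{-}}\alpha_{i}^{\dagger}$; these are legitimate effects of the theory because $0\le a_{\pm}\le\sum_{i=1}^{d}\alpha_{i}^{\dagger}=u$, so $\left\{a_{\pm},u-a_{\pm}\right\}$ is a valid observation-test by the no-restriction hypothesis recalled before Section~\ref{proofs}. Using $\left(\alpha_{i}^{\dagger}\middle|\alpha_{j}\right)=\delta_{ij}$ one computes $\left(a_{+}\middle|\xi\right)=\sum_{i\in\mathsf{I}_{+}}x_{i}$ and $\left(a_{-}\middle|\xi\right)=\sum_{i\in\mathsf{I}_{-}}x_{i}$, so $\sup_{a}\left(a\middle|\xi\right)\ge\sum_{i\in\mathsf{I}_{+}}x_{i}$ and $\inf_{a}\left(a\middle|\xi\right)\le\sum_{i\in\mathsf{I}_{-}}x_{i}$. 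Together with the upper bound this forces equalities, whence $\left\Vert\xi\right\Vert=\sum_{i}\left|x_{i}\right|$. The degenerate cases where $\mathsf{I}_{+}$ or $\mathsf{I}_{-}$ is empty are absorbed by taking the zero effect (which is valid), and $\xi=0$ is trivial.

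I do not anticipate a real obstacle: the argument is essentially the operational analogue of the quantum computation $\left\Vert\xi\right\Vert_{1}=\sum_{i}\left|\lambda_{i}\right|$ for a self-adjoint operator with eigenvalues $\lambda_{i}$. The only point deserving explicit care is the verification that $a_{+}$ and $a_{-}$ are genuine effects of the theory, which is exactly where the identity $\sum_{i=1}^{d}\alpha_{i}^{\dagger}=u$ for pure maximal sets and the no-restriction result for observation-tests enter; I would state this step rather than leave it implicit.
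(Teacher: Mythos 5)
Your proof is correct and follows essentially the same route as the paper's: bound $\left(a\middle|\xi\right)$ term by term using $\left(a\middle|\alpha_{i}\right)\in\left[0,1\right]$, then saturate the bounds with the effects $\sum_{i\in\mathsf{I}_{\pm}}\alpha_{i}^{\dagger}$ via $\left(\alpha_{i}^{\dagger}\middle|\alpha_{j}\right)=\delta_{ij}$. If anything, your write-up is slightly more careful than the paper's (which asserts without argument that the supremum requires $\left(a\middle|\xi_{-}\right)=0$), and your explicit check that $a_{\pm}$ are legitimate effects via $\sum_{i=1}^{d}\alpha_{i}^{\dagger}=u$ and the no-restriction result is a welcome addition.
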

\begin{proof}
{Let us separate the terms with non-negative eigenvalues from the terms
with negative eigenvalues, so that we can write $\xi=\xi_{+}-\xi_{-}$,
where $\xi_{+}:=\sum_{x_{i}\geq0}x_{i}\alpha_{i}$, and $\xi_{-}=\sum_{x_{i}<0}\left(-x_{i}\right)\alpha_{i}$.
Clearly, $\xi_{+},\xi_{-}\in\mathsf{St}_{+}\left(\mathrm{A}\right)$.
In order to achieve the supremum of $\left(a\middle|\xi\right)$ we must
have $\left(a\middle|\xi_{-}\right)=0$. Moreover},
\[
\left(a\middle|\xi_{+}\right)=\sum_{x_{i}\geq0}x_{i}\left(a\middle|\alpha_{i}\right)\leq\sum_{x_{i}\geq0}x_{i}
\]
since $\left(a\middle|\alpha_{i}\right)\leq1$ for every $i$. The supremum
of $\left(a\middle|\xi_{+}\right)$ is achieved by $a=\sum_{x_{i}\geq0}\alpha_{i}^{\dagger}$.
Hence $\sup_{a}\left(a\middle|\xi\right)=\sum_{x_{i}\geq0}x_{i}$. By a similar
argument, one shows that $\inf_{a}\left(a\middle|\xi\right)=\sum_{x_{i}<0}x_{i}$.
Therefore
\[
\left\Vert \xi\right\Vert =\sum_{x_{i}\geq0}x_{i}+\sum_{x_{i}<0}\left(-x_{i}\right)=\sum_{i=1}^{d}\left|x_{i}\right|.
\]
\end{proof}
For $p\geq1$, the $p$-norm of a vector $\mathbf{x}\in\mathbb{R}^{d}$
	is defined as $\left\Vert \mathbf{x}\right\Vert _{p}:=\left(\sum_{i=1}^{d}\left|x_{i}\right|^{p}\right)^{\frac{1}{p}}$, thus we have $\left\Vert \xi\right\Vert =\left\Vert \mathbf{x}\right\Vert _{1}$, {where $\mathbf{x}$ is} the spectrum of $\xi$.

In sharp theories with purification we have an additional norm, the
dagger norm, defined in Section~\ref{subsec:Self-duality}. The dagger norm of a vector ${\xi\in\mathsf{St}_{\mathbb{R}}\left( \rA\right)}$ is $\left\Vert {\xi}\right\Vert _{\dagger}=\sqrt{\sum_{i=1}^{d}{x}_{i}^{2}}$,
where the ${x}_{i}$'s are the eigenvalues of ${\xi}$. {It is obvious} from the very definition that $\left\Vert \xi\right\Vert _{\dagger}=\left\Vert \mathbf{x}\right\Vert _{2}$. Thanks to these results following from diagonalisation, we can derive
the standard bound{s} between the two norms, by making use of the well-known
bounds $\left\Vert \mathbf{x}\right\Vert _{2}\leq\left\Vert \mathbf{x}\right\Vert _{1}\leq\sqrt{d}\left\Vert \mathbf{x}\right\Vert _{2}$,
which impl{y}
\begin{equation}\label{eq:bounds norms}
\left\Vert \xi\right\Vert _{\dagger}\leq\left\Vert \xi\right\Vert \leq\sqrt{d}\left\Vert \xi\right\Vert _{\dagger}.
\end{equation}
Note that, unlike Ref.~\cite{Muller-blackhole}, here the bound{s are}
derived without assuming Bit Symmetry \cite{Muller-self-duality,Barnum-interference}.

{If we take $ \xi $ to be a normalised state $ \rho $, its eigenvalues form a probability distribution, and} we have
$\left\Vert \rho\right\Vert _{\dagger}\leq1$, with equality
if and only if $\rho$ is pure. Note that $\left\Vert \rho\right\Vert _{\dagger}$
is a Schur-convex function \cite{Olkin} of the eigenvalues of $\rho$,
so it is a purity monotone \cite{TowardsThermo}. As such, it attains
its minimum on the invariant state, which is $\left\Vert \chi\right\Vert _{\dagger}=\frac{1}{\sqrt{d}}$,
so for every normalised state one has
\[
\frac{1}{\sqrt{d}}\leq\left\Vert \rho\right\Vert _{\dagger}\leq1,
\]
{consistently with the bounds~\eqref{eq:bounds norms}.}
The square of the dagger norm, still a Schur-convex function, {was} called \emph{purity} in Refs.~\cite{Muller-blackhole,Dahlsten}. Consequently $1-\left\Vert \rho\right\Vert _{\dagger}^{2}$
is a measure of mixedness, sometimes called the \emph{impurity} $I\left(\rho\right)$
of $\rho$. The impurity can be extended to subnormalised states by
defining it as $I\left(\rho\right):=\left(\mathrm{Tr}\:\rho\right)^{2}-\left\Vert \rho\right\Vert _{\dagger}^{2}$
\cite{Barnum-interference}.

The two norms behave differently under channels applied to states.
In Ref.~\cite{Chiribella-purification} it was shown that in causal
theories the operational norm of a state $\rho$ is preserved by channels:
$\left\Vert \mathcal{C}\rho\right\Vert =\left\Vert \rho\right\Vert $
for every channel $\mathcal{C}$, because channels are such that $u\mathcal{C}=u$.

{Instead} the dagger norm shows a different behaviour. To
describe it, it is useful to divide channels into two classes: unital
and non-unital channels \cite{Purity}.
\begin{Definition}
A channel $\mathcal{D}\in\mathsf{Transf}\left(\mathrm{A},\mathrm{B}\right)$
is \emph{unital} if $\mathcal{D}\chi_{\mathrm{A}}=\chi_{\mathrm{B}}$.
\end{Definition}
Unital channels do not increase the dagger norm of states.
\begin{Proposition}
If $\mathcal{D}$ is a unital channel, then $\left\Vert \mathcal{D}\rho\right\Vert _{\dagger}\leq\left\Vert \rho\right\Vert _{\dagger}$,
for every normalised state $\rho$.
\end{Proposition}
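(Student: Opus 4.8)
The plan is to reduce the claim to the elementary fact that a doubly stochastic matrix does not increase the Euclidean norm of a probability vector, mirroring the standard quantum argument. First I would diagonalise both states: write $\rho=\sum_{i=1}^{d_{\mathrm{A}}}p_i\alpha_i$ with $\{\alpha_i\}$ a pure maximal set of $\mathrm{A}$; since $\mathcal{D}$ is a channel, $\mathcal{D}\rho$ is again normalised, so by Theorem~\ref{thm:diagonalisation} it diagonalises as $\mathcal{D}\rho=\sum_{j=1}^{d_{\mathrm{B}}}q_j\beta_j$ with $\{\beta_j\}$ a pure maximal set of $\mathrm{B}$. Using linearity of the dagger (Theorem~\ref{thm:duality}) together with $(\alpha_i^\dagger|\alpha_k)=\delta_{ik}$ and $(\beta_j^\dagger|\beta_k)=\delta_{jk}$, one gets $\|\rho\|_\dagger^2=(\rho^\dagger|\rho)=\sum_i p_i^2$ and $\|\mathcal{D}\rho\|_\dagger^2=((\mathcal{D}\rho)^\dagger|\mathcal{D}\rho)=\sum_j q_j^2$. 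I would then introduce the transition matrix $M_{ji}:=(\beta_j^\dagger|\mathcal{D}|\alpha_i)\geq 0$ (an effect evaluated on a state), so that $q_j=(\beta_j^\dagger|\mathcal{D}\rho)=\sum_i M_{ji}p_i$.

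The crux is to show that $M$ is doubly stochastic. Column sums are handled by $u\mathcal{D}=u$ (valid for channels) together with $\sum_j\beta_j^\dagger=u$: $\sum_j M_{ji}=(u|\mathcal{D}|\alpha_i)=(u|\alpha_i)=1$. Row sums are where unitality enters: from $\sum_i\alpha_i=d_{\mathrm{A}}\chi_{\mathrm{A}}$, $\mathcal{D}\chi_{\mathrm{A}}=\chi_{\mathrm{B}}$, and $(\beta_j^\dagger|\chi_{\mathrm{B}})=1/d_{\mathrm{B}}$, one gets $\sum_i M_{ji}=d_{\mathrm{A}}(\beta_j^\dagger|\chi_{\mathrm{B}})=d_{\mathrm{A}}/d_{\mathrm{B}}$, which equals $1$ precisely when $d_{\mathrm{A}}=d_{\mathrm{B}}$ (in particular whenever $\mathrm{A}=\mathrm{B}$). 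Then convexity of $t\mapsto t^2$ with $\sum_i M_{ji}=1$ gives $q_j^2=(\sum_i M_{ji}p_i)^2\leq\sum_i M_{ji}p_i^2$, and summing over $j$ and using $\sum_j M_{ji}=1$ yields $\sum_j q_j^2\leq\sum_i p_i^2$, i.e.\ $\|\mathcal{D}\rho\|_\dagger\leq\|\rho\|_\dagger$. Equivalently, this says the spectrum of $\mathcal{D}\rho$ is majorised by that of $\rho$, and the conclusion follows from the fact, noted just above, that $\|\cdot\|_\dagger$ is Schur-convex.

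I expect the row-stochasticity step to be the main obstacle, since it is the only place the unitality hypothesis is genuinely used, and it also exposes a dimensional caveat: for a truly dimension-lowering unital channel the factor $d_{\mathrm{A}}/d_{\mathrm{B}}$ can exceed $1$ and the bound can fail (e.g.\ the partial trace sends $\chi_{\mathrm{A}\mathrm{B}}$ to the strictly purer $\chi_{\mathrm{A}}$), so the statement should be understood with $d_{\mathrm{A}}=d_{\mathrm{B}}$. Everything else is routine bookkeeping with the diagonalisation results of Section~\ref{sharp theories with purification}, the identities $\sum_i\alpha_i^\dagger=u$ and $(\alpha_i^\dagger|\chi)=1/d$, and the linearity of the dagger.
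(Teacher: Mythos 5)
Your proof is correct, and it takes a genuinely different route from the paper's. The paper's own proof is essentially a citation: it invokes the result from the resource theory of purity \cite{Purity} that a unital channel maps a state to one whose spectrum is majorised by the input's, and then applies Schur-convexity of the dagger norm. You instead unpack that black box: the transition matrix $M_{ji}=\left(\beta_j^{\dagger}\middle|\mathcal{D}\middle|\alpha_i\right)$ between the two eigenbases is column-stochastic by causality ($u\mathcal{D}=u$ together with $\sum_j\beta_j^{\dagger}=u$) and row-stochastic by unitality, and Jensen's inequality for $t\mapsto t^2$ then gives $\sum_j q_j^2\leq\sum_i p_i^2$ directly. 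This is a self-contained proof of exactly the majorisation fact the paper outsources (restricted to the $2$-norm, which is all that is needed), and it makes transparent where each hypothesis enters. Your dimensional caveat is also a genuine and correct observation that the paper glosses over: with the paper's definition of unitality for a channel $\mathcal{D}\in\mathsf{Transf}\left(\mathrm{A},\mathrm{B}\right)$, the row sums come out to $d_{\mathrm{A}}/d_{\mathrm{B}}$, and the partial trace (which is unital in this sense and sends $\chi_{\mathrm{A}\mathrm{B}}$ to the strictly less mixed $\chi_{\mathrm{A}}$) shows the inequality really can fail when $d_{\mathrm{B}}<d_{\mathrm{A}}$; the proposition must be read with $d_{\mathrm{A}}=d_{\mathrm{B}}$, as is implicit in the resource-theoretic setting of \cite{Purity} where free operations act on a fixed system.
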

\begin{proof}
Unital channels can be chosen as free operations for the resource
theory of purity \cite{Purity}. In Ref.~\cite{Purity} it was shown
that the spectrum of $\mathcal{D}\rho$ is majorised
by the spectrum of $\rho$ (see Ref.~\cite{Olkin} for a definition of majorisation and Schur-convex functions). Since the dagger norm is a Schur-convex
function, we have $\left\Vert \mathcal{D}\rho\right\Vert _{\dagger}\leq\left\Vert \rho\right\Vert _{\dagger}$.
\end{proof}
Clearly if $\mathcal{D}$ is reversible, the dagger norm is preserved,
by Proposition~\ref{prop:invariance reversible}.

For non-unital channels there is at least one state\textemdash the
invariant state $\chi$\textemdash for which the dagger norm increases.
Indeed, if $\mathcal{C}$ is non-unital, $\chi$ is majorised by $\mathcal{C}\chi$,
whence $\left\Vert \chi\right\Vert _{\dagger}\leq\left\Vert \mathcal{C}\chi\right\Vert _{\dagger}$.
Is it true, then, that non-unital channels increase the dagger norm
of all states? The answer is clearly negative. Consider the non-unital
channel mapping all states to a fixed \emph{mixed} state $\rho_{0}\neq\chi$.
For some states, e.g.,\ the invariant state, the dagger norm will
increase, for others, e.g.,\ pure states, the dagger norm will decrease
because it is a purity monotone. In short, for non-unital channels
there is no uniform behaviour of the dagger norm.

\subsection{Dagger fidelity}\label{subsec:dagger fidelity}

The inner product defined in Section~\ref{subsec:Self-duality}
allows us to define a fidelity-like quantity, called the \emph{dagger
fidelity}.
\begin{Definition}
Given two normalised states $\rho$ and $\sigma$, the \emph{dagger
fidelity} is defined as
\[
F_{\dagger}\left(\rho,\sigma\right)=\frac{\left\langle \rho,\sigma\right\rangle }{\left\Vert \rho\right\Vert _{\dagger}\left\Vert \sigma\right\Vert _{\dagger}}.
\]
\end{Definition}
The dagger fidelity measures the overlap between two states. It shares
some properties with the fidelity in quantum theory (cf.\ for instance
Ref.~\cite{Wilde}){, despite \emph{not} coinciding with it.} The first, obvious one, is that $F_{\dagger}\left(\rho,\sigma\right)=F_{\dagger}\left(\sigma,\rho\right)$.

To prove the other properties we need the following lemma, generalising
one of the results of Ref.~\cite{TowardsThermo}.
\begin{Lemma}
\label{lem:perfectly distinguishable}Let $\left\{ \rho_{i}\right\} _{i=1}^{n}$
be perfectly distinguishable states. Then $\left(\rho_{i}^{\dagger}\middle|\rho_{j}\right)=\left\Vert \rho_{i}\right\Vert _{\dagger}^{2}\delta_{ij}$.
\end{Lemma}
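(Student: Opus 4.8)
The plan is to split into the diagonal case $i=j$ and the off-diagonal case $i\neq j$. The diagonal case is immediate: by the definition of the dagger norm given in Section~\ref{subsec:Self-duality}, $\left(\rho_i^\dagger\middle|\rho_i\right)=\left\langle\rho_i,\rho_i\right\rangle=\left\Vert\rho_i\right\Vert_\dagger^2$. So the whole content is to show $\left(\rho_i^\dagger\middle|\rho_j\right)=0$ whenever $i\neq j$.

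First I would fix the observation-test $\left\{a_m\right\}$ witnessing perfect distinguishability, so that $\left(a_m\middle|\rho_l\right)=\delta_{ml}$ and $\sum_m a_m=u$; since the state space is closed and $\left(a_i\middle|\rho_i\right)=1$, the effect $a_i$ is normalised, and $\left(a_i\middle|\rho_j\right)=0$ for $j\neq i$. Next I would diagonalise $\rho_i=\sum_k p_k^{(i)}\alpha_k^{(i)}$ into perfectly distinguishable \emph{pure} states with $p_k^{(i)}>0$; because $p_k^{(i)}\in\left(0,1\right]$, each $\alpha_k^{(i)}$ is contained in $\rho_i$, so any transformation that is equal to $\mathcal I$ upon input of $\rho_i$ fixes every $\alpha_k^{(i)}$. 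The key step is then to apply Proposition~\ref{prop:non-disturbing} to $a_i$ and $\rho_i$, obtaining a pure transformation $\mathcal T_i$ with $\mathcal T_i=_{\rho_i}\mathcal I$ and $\left(u\middle|\mathcal T_i\middle|\sigma\right)\le\left(a_i\middle|\sigma\right)$ for all normalised $\sigma$. Two observations finish the argument. On one hand $\left(u\middle|\mathcal T_i\middle|\rho_j\right)\le\left(a_i\middle|\rho_j\right)=0$, so $\mathcal T_i\rho_j=0$. On the other hand $\rho_i^\dagger\mathcal T_i=\rho_i^\dagger$: for each $k$, $\alpha_k^{(i)\dagger}\mathcal T_i$ is pure by Purity Preservation, it is a physical effect hence of norm at most $1$, and it attains the value $1$ on $\alpha_k^{(i)}$ (as $\mathcal T_i\alpha_k^{(i)}=\alpha_k^{(i)}$), so it is normalised and, by the uniqueness of the dagger on normalised pure states (Theorem~\ref{thm:duality}), equals $\alpha_k^{(i)\dagger}$; summing against the $p_k^{(i)}$ gives $\rho_i^\dagger\mathcal T_i=\rho_i^\dagger$. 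Therefore $\left(\rho_i^\dagger\middle|\rho_j\right)=\left(\rho_i^\dagger\mathcal T_i\middle|\rho_j\right)=\left(\rho_i^\dagger\middle|\mathcal T_i\middle|\rho_j\right)=0$, which combined with the diagonal case yields $\left(\rho_i^\dagger\middle|\rho_j\right)=\left\Vert\rho_i\right\Vert_\dagger^2\delta_{ij}$. (Alternatively, $\mathcal T_i$ can be replaced by the orthogonal projector onto the face of $\rho_i$: writing $a_i=\sum_{m\in\mathsf I}\beta_m^\dagger+\sum_{l\in\mathsf J}\mu_l\beta_l^\dagger$ via Lemma~\ref{lem:normalised effects} one checks $\rho_i\in F_{\mathsf I}$ and $\rho_j\in F_{\mathsf I}^\perp$, and then applies $\rho_i^\dagger P_{\mathsf I}=\rho_i^\dagger$ and $P_{\mathsf I}\rho_j=0$ from Section~\ref{orthogonal projectors}.)

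The main obstacle is the identity $\rho_i^\dagger\mathcal T_i=\rho_i^\dagger$ — dually, the statement that the non-disturbing transformation associated with $\rho_i$ acts trivially on the whole face of $\rho_i$ as seen from the effect side. Once the uniqueness clause of the states–effects duality is invoked on each pure eigeneffect this is routine, but it is precisely the step that genuinely combines Purity Preservation with Theorem~\ref{thm:duality}; everything else is bookkeeping with linearity of $\dagger$ and associativity of composition. One should also take a small amount of care that a positive subnormalised state of trace zero is the zero vector, so that $\left(u\middle|\mathcal T_i\middle|\rho_j\right)=0$ really does force $\mathcal T_i\rho_j=0$, exactly as in the proof of Proposition~\ref{prop:existence projectors}.
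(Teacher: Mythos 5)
Your proposal is correct and follows essentially the same route as the paper: diagonalise $\rho_i$, invoke Proposition~\ref{prop:non-disturbing} to obtain the pure non-disturbing transformation $\mathcal{T}_i$, deduce $\mathcal{T}_i\rho_j=0$ from $\left(u\middle|\mathcal{T}_i\middle|\rho_j\right)\le\left(a_i\middle|\rho_j\right)=0$, and establish $\rho_i^{\dagger}\mathcal{T}_i=\rho_i^{\dagger}$ eigeneffect by eigeneffect via Purity Preservation and the uniqueness clause of Theorem~\ref{thm:duality}. The extra care you take (that each eigenstate is contained in $\rho_i$, and that a positive vector of zero trace is the zero vector) only makes explicit steps the paper leaves implicit.
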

\begin{proof}
Clearly what we need to prove is that $\left(\rho_{i}^{\dagger}\middle|\rho_{j}\right)=0$
if $i\neq j$. Let $\left\{ a_{i}\right\} _{i=1}^{n}$ be the perfectly
distinguishing test, and let $\rho_{i}$ be diagonalised as $\rho_{i}=\sum_{k=1}^{{r_i}}{p_{k,i}}\alpha_{k,i}$,
where ${p_{k,i}}>0$ for all $k=1,\ldots,r_i$. We have $\left(a_{i}\middle|\rho_{i}\right)=1$,
hence by Proposition~\ref{prop:non-disturbing} there exists a non-disturbing
pure transformation $\mathcal{T}_{i}$ such that $\mathcal{T}_{i}=_{\rho_{i}}\mathcal{I}$.
Specifically, we have that $\mathcal{T}_{i}\alpha_{k,i}=\alpha_{k,i}$.
Moreover if $i\neq j$, we have $\left(u\middle|\mathcal{T}_{i}\middle|\rho_{j}\right)\leq\left(a_{i}\middle|\rho_{j}\right)=0$,
whence $\left(u\middle|\mathcal{T}_{i}\middle|\rho_{j}\right)=0$. This means that
$\mathcal{T}_{i}\rho_{j}=0$ for all $j\neq i$.

Now, consider
\[
\left(\alpha_{k,i}^{\dagger}\middle|\mathcal{T}_{i}\middle|\alpha_{k,i}\right)=\left(\alpha_{k,i}^{\dagger}\middle|\alpha_{k,i}\right)=1,
\]
where we have used the fact that $\mathcal{T}_{i}\alpha_{k,i}=\alpha_{k,i}$.
Since $\alpha_{k,i}^{\dagger}\mathcal{T}_{i}$ is a pure effect, it
must be $\alpha_{k,i}^{\dagger}\mathcal{T}_{i}=\alpha_{k,i}^{\dagger}$
by Theorem~\ref{thm:duality}. By linearity we have $\rho_{i}^{\dagger}\mathcal{T}_{i}=\rho_{i}^{\dagger}$.
Now, using this fact, for all $j\neq i$
\[
\left(\rho_{i}^{\dagger}\middle|\rho_{j}\right)=\left(\rho_{i}^{\dagger}\middle|\mathcal{T}_{i}\middle|\rho_{j}\right)=0,
\]
because $\mathcal{T}_{i}\rho_{j}=0$.
\end{proof}
Recalling that $\left(\rho^{\dagger}\middle|\sigma\right)=\left\langle \rho,\sigma\right\rangle $,
this lemma means that perfectly distinguishable states form an orthogonal
set. Specifically, if the states are pure, the set is orthonormal.

The following proposition extends and generalises the properties of
the self-dualising inner product of Ref.~\cite{Muller-self-duality}.
\begin{Proposition}
The dagger fidelity has the following properties, for all normalised
states {$ \rho $ and $ \sigma $.}
\begin{enumerate}
\item $0\leq F_{\dagger}\left(\rho,\sigma\right)\leq1$;
\item $F_{\dagger}\left(\rho,\sigma\right)=0$ if and only if $\rho$ and
$\sigma$ are perfectly distinguishable;
\item $F_{\dagger}\left(\rho,\sigma\right)=1$ if and only if $\rho=\sigma$;
\item $F_{\dagger}\left(\mathcal{U}\rho,\mathcal{U}\sigma\right)=F_{\dagger}\left(\rho,\sigma\right)$,
for every reversible channel $\mathcal{U}$.
\end{enumerate}
\end{Proposition}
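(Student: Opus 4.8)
The plan is to derive all four properties from standard inner-product facts about $\left\langle\bullet,\bullet\right\rangle$ (Lemma~\ref{lem:inner product}), together with self-duality and Lemma~\ref{lem:perfectly distinguishable}. Property~4 is immediate from Proposition~\ref{prop:invariance reversible}: a reversible channel $\mathcal{U}$ preserves $\left\langle\mathcal{U}\rho,\mathcal{U}\sigma\right\rangle=\left\langle\rho,\sigma\right\rangle$ and hence the dagger norms $\left\Vert\mathcal{U}\rho\right\Vert_{\dagger}=\left\Vert\rho\right\Vert_{\dagger}$, $\left\Vert\mathcal{U}\sigma\right\Vert_{\dagger}=\left\Vert\sigma\right\Vert_{\dagger}$, so the ratio defining $F_{\dagger}$ is unchanged. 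For property~1, the bound $F_{\dagger}\leq 1$ is exactly the Cauchy--Schwarz inequality for the inner product, while $F_{\dagger}\geq 0$ holds because $\left\langle\rho,\sigma\right\rangle=\left(\rho^{\dagger}\middle|\sigma\right)\geq 0$ by self-duality ($\rho,\sigma\in\mathsf{St}_{+}\left(\mathrm{A}\right)$), and the denominator $\left\Vert\rho\right\Vert_{\dagger}\left\Vert\sigma\right\Vert_{\dagger}$ is strictly positive for normalised states.

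For property~3, if $\rho=\sigma$ then trivially $F_{\dagger}=\left\Vert\rho\right\Vert_{\dagger}^{2}/\left\Vert\rho\right\Vert_{\dagger}^{2}=1$. Conversely, $F_{\dagger}\left(\rho,\sigma\right)=1$ is the equality case of Cauchy--Schwarz, forcing $\sigma=\lambda\rho$ for some scalar $\lambda$; positivity of $\left\langle\rho,\sigma\right\rangle$ gives $\lambda>0$, and applying the deterministic effect $u$ with $\mathrm{Tr}\,\rho=\mathrm{Tr}\,\sigma=1$ yields $\lambda=1$, hence $\rho=\sigma$.

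Property~2 carries the only real content. If $\rho$ and $\sigma$ are perfectly distinguishable, Lemma~\ref{lem:perfectly distinguishable} gives $\left\langle\rho,\sigma\right\rangle=\left(\rho^{\dagger}\middle|\sigma\right)=0$, so $F_{\dagger}=0$. For the converse, suppose $\left(\rho^{\dagger}\middle|\sigma\right)=0$ and diagonalise $\rho=\sum_{i=1}^{d}p_{i}\alpha_{i}$ over a pure maximal set $\left\{\alpha_{i}\right\}_{i=1}^{d}$, with (after reordering) $p_{1},\dots,p_{r}>0$ and $p_{i}=0$ for $i>r$; by linearity of the dagger, $\rho^{\dagger}=\sum_{i=1}^{r}p_{i}\alpha_{i}^{\dagger}$. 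Since each $\left(\alpha_{i}^{\dagger}\middle|\sigma\right)\geq 0$ by self-duality and each $p_{i}>0$, the identity $\sum_{i=1}^{r}p_{i}\left(\alpha_{i}^{\dagger}\middle|\sigma\right)=0$ forces $\left(\alpha_{i}^{\dagger}\middle|\sigma\right)=0$ for every $i\leq r$. Hence the face-picking effect $a:=\sum_{i=1}^{r}\alpha_{i}^{\dagger}$ (cf.\ Section~\ref{orthogonal projectors}) satisfies $\left(a\middle|\rho\right)=\sum_{i}p_{i}=1$ and $\left(a\middle|\sigma\right)=0$, and since $u-a=\sum_{i=r+1}^{d}\alpha_{i}^{\dagger}$ is also an effect, $\left\{a,u-a\right\}$ is a valid observation-test by the no-restriction property for tests; it perfectly distinguishes $\rho$ from $\sigma$. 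The main obstacle is precisely this converse implication --- turning the analytic identity $\left\langle\rho,\sigma\right\rangle=0$ into the operational statement ``perfectly distinguishable'' --- and the key that unlocks it is that self-duality makes each summand $p_{i}\left(\alpha_{i}^{\dagger}\middle|\sigma\right)$ non-negative, so a vanishing sum forces every term to vanish. Everything else reduces to Cauchy--Schwarz and the already-established invariance of the inner product.
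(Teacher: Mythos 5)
Your proposal is correct and follows essentially the same route as the paper's own proof: Cauchy--Schwarz plus non-negativity of $\left(\rho^{\dagger}\middle|\sigma\right)$ for property 1 and the equality case for property 3, Lemma~\ref{lem:perfectly distinguishable} together with the diagonalisation of $\rho$ and the test $\left\{ a,u-a\right\}$ with $a=\sum_{i=1}^{r}\alpha_{i}^{\dagger}$ for property 2, and Proposition~\ref{prop:invariance reversible} for property 4. The only cosmetic difference is that you spell out a few minor points (positivity of the denominator, the sign of the proportionality constant) that the paper leaves implicit.
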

\begin{proof}
Let us prove the various properties.
\begin{enumerate}
\item Recall that $\left\langle \rho,\sigma\right\rangle =\left(\rho^{\dagger}\middle|\sigma\right)\geq0$,
whence $F_{\dagger}\left(\rho,\sigma\right)\geq0$. Moreover, by Schwarz
inequality, $\left\langle \rho,\sigma\right\rangle \leq\left\Vert \rho\right\Vert _{\dagger}\left\Vert \sigma\right\Vert _{\dagger}$,
so $F_{\dagger}\left(\rho,\sigma\right)\leq1$.
\item Suppose $\rho$ and $\sigma$ are perfectly distinguishable, then
by Lemma~\ref{lem:perfectly distinguishable} $\left\langle \rho,\sigma\right\rangle =0$,
implying $F_{\dagger}\left(\rho,\sigma\right)=0$.
Now suppose $F_{\dagger}\left(\rho,\sigma\right)=0$; then $\left\langle \rho,\sigma\right\rangle =0$.
Let $\rho=\sum_{i=1}^{r}p_{i}\alpha_{i}$ be a diagonalisation of
$\rho$, with $p_{i}>0$, for all $i=1,\ldots,r$, and $r\leq d$.
We have $\sum_{i=1}^{r}p_{i}\left(\alpha_{i}^{\dagger}\middle|\sigma\right)=0$,
which means that $\left(\alpha_{i}^{\dagger}\middle|\sigma\right)=0$ for
$i=1,\ldots,r$. This means that we can build an observation-test
that distinguishes $\rho$ and $\sigma$ perfectly by taking $\left\{ a,u-a\right\} $,
where $a=\sum_{i=1}^{r}\alpha_{i}^{\dagger}$.
\item Clearly, if $\rho=\sigma$, $\left\langle \rho,\sigma\right\rangle =\left\Vert \rho\right\Vert _{\dagger}^{2}$,
whence $F_{\dagger}\left(\rho,\sigma\right)=1$. Conversely, suppose
$F_{\dagger}\left(\rho,\sigma\right)=1$. This means that $\left\langle \rho,\sigma\right\rangle =\left\Vert \rho\right\Vert _{\dagger}\left\Vert \sigma\right\Vert _{\dagger}$.
By Schwarz inequality, this is true if and only if $\rho=\lambda\sigma$,
for some $\lambda\in\mathbb{R}$. Since both states are normalised,
$\lambda=1$, yielding $\rho=\sigma$.
\item This property follows by Proposition~\ref{prop:invariance reversible},
because the inner product and the dagger norm are invariant under
reversible channels.
\end{enumerate}
\end{proof}
Note that Property 3 captures the sharpness of the dagger for all
normalised states \cite{Selby-dagger}.

A property involving tensor product of states is the following.
\begin{Proposition}
\label{prop:fidelity product}For all normalised states $\rho_{1}$,
$\rho_{2}$, $\sigma_{1}$, $\sigma_{2}$ one has
\[
F_{\dagger}\left(\rho_{1}\otimes\rho_{2},\sigma_{1}\otimes\sigma_{2}\right)=F_{\dagger}\left(\rho_{1},\sigma_{1}\right)F_{\dagger}\left(\rho_{2},\sigma_{2}\right)
\]
\end{Proposition}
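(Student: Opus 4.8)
The plan is to reduce everything to a single fact---that the dagger of Theorem~\ref{thm:duality} distributes over tensor products, i.e.\ $\left(\rho_{1}\otimes\rho_{2}\right)^{\dagger}=\rho_{1}^{\dagger}\otimes\rho_{2}^{\dagger}$ for all normalised states---and then to unwind the definition of $F_{\dagger}$.

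First I would establish distributivity of $\dagger$ for normalised \emph{pure} states $\psi_{1},\psi_{2}$. By Purity Preservation, $\psi_{1}\otimes\psi_{2}$ is a normalised pure state and $\psi_{1}^{\dagger}\otimes\psi_{2}^{\dagger}$ is a pure effect; since scalars multiply under parallel composition, $\left(\psi_{1}^{\dagger}\otimes\psi_{2}^{\dagger}\middle|\psi_{1}\otimes\psi_{2}\right)=\left(\psi_{1}^{\dagger}\middle|\psi_{1}\right)\left(\psi_{2}^{\dagger}\middle|\psi_{2}\right)=1$, so by the uniqueness clause of Theorem~\ref{thm:duality} we get $\psi_{1}^{\dagger}\otimes\psi_{2}^{\dagger}=\left(\psi_{1}\otimes\psi_{2}\right)^{\dagger}$. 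I would then extend this to arbitrary normalised states purely by linearity: diagonalising $\rho_{1}=\sum_{i}p_{i}\alpha_{i}$ and $\rho_{2}=\sum_{j}q_{j}\beta_{j}$ via Theorem~\ref{thm:diagonalisation} and using that $\dagger$ is linear on $\mathsf{St}_{\mathbb{R}}\left(\mathrm{A}\otimes\mathrm{B}\right)$ and on $\mathsf{St}_{\mathbb{R}}\left(\mathrm{A}\right)$, $\mathsf{St}_{\mathbb{R}}\left(\mathrm{B}\right)$,
\[
\left(\rho_{1}\otimes\rho_{2}\right)^{\dagger}=\sum_{i,j}p_{i}q_{j}\left(\alpha_{i}\otimes\beta_{j}\right)^{\dagger}=\sum_{i,j}p_{i}q_{j}\,\alpha_{i}^{\dagger}\otimes\beta_{j}^{\dagger}=\rho_{1}^{\dagger}\otimes\rho_{2}^{\dagger}.
\]

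With this in hand the remainder is bookkeeping. Using the multiplicativity of scalars under $\otimes$ once more,
\[
\left\langle \rho_{1}\otimes\rho_{2},\sigma_{1}\otimes\sigma_{2}\right\rangle =\left(\left(\rho_{1}\otimes\rho_{2}\right)^{\dagger}\middle|\sigma_{1}\otimes\sigma_{2}\right)=\left(\rho_{1}^{\dagger}\otimes\rho_{2}^{\dagger}\middle|\sigma_{1}\otimes\sigma_{2}\right)=\left\langle \rho_{1},\sigma_{1}\right\rangle \left\langle \rho_{2},\sigma_{2}\right\rangle ,
\]
and taking $\sigma_{k}=\rho_{k}$ gives $\left\Vert \rho_{1}\otimes\rho_{2}\right\Vert _{\dagger}^{2}=\left\Vert \rho_{1}\right\Vert _{\dagger}^{2}\left\Vert \rho_{2}\right\Vert _{\dagger}^{2}$, hence $\left\Vert \rho_{1}\otimes\rho_{2}\right\Vert _{\dagger}=\left\Vert \rho_{1}\right\Vert _{\dagger}\left\Vert \rho_{2}\right\Vert _{\dagger}$, and likewise for $\sigma_{1}\otimes\sigma_{2}$. (All dagger norms here are nonzero, since normalised states satisfy $\left\Vert \rho\right\Vert _{\dagger}\geq 1/\sqrt{d}$.) Dividing the factored numerator by the factored denominators yields $F_{\dagger}\left(\rho_{1}\otimes\rho_{2},\sigma_{1}\otimes\sigma_{2}\right)=F_{\dagger}\left(\rho_{1},\sigma_{1}\right)F_{\dagger}\left(\rho_{2},\sigma_{2}\right)$.

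The one genuinely non-routine step is the pure-state case of the distributivity of the dagger, which relies essentially on Purity Preservation (to know that both $\psi_{1}\otimes\psi_{2}$ and $\psi_{1}^{\dagger}\otimes\psi_{2}^{\dagger}$ are pure) together with the uniqueness of the pure dagger; everything after that is linearity plus the elementary multiplicativity of scalars under $\otimes$. I would also take care that the extension from pure to general states invokes only linearity of $\dagger$, so that no extra input about the dimension of composite systems or maximality of $\left\{\alpha_{i}\otimes\beta_{j}\right\}$ is needed.
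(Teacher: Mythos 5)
Your proposal is correct and follows essentially the same route as the paper: the paper isolates the key step as Lemma~\ref{lem:parallel composition} ($\left(\rho\otimes\sigma\right)^{\dagger}=\rho^{\dagger}\otimes\sigma^{\dagger}$), proved for pure states via Purity Preservation and the uniqueness clause of Theorem~\ref{thm:duality} and then extended by linearity, exactly as you do, and then factors the inner product and the dagger norms in the same way. Your write-up merely spells out the linearity extension via diagonalisation, which the paper leaves implicit.
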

The proof needs the following easy lemma.
\begin{Lemma}
\label{lem:parallel composition}Let $\rho,\sigma\in\mathsf{St}_{1}\left(\mathrm{A}\right)$,
then $\left(\rho\otimes\sigma\right)^{\dagger}=\rho^{\dagger}\otimes\sigma^{\dagger}$.
\end{Lemma}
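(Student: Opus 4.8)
The plan is to follow the familiar two-step pattern used for tensor-product properties of the dagger: prove the identity first for pure states, using Purity Preservation together with the uniqueness clause of the states-effects duality (Theorem~\ref{thm:duality}), and then extend to arbitrary normalised states by diagonalising and using that $\dagger$ is linear.

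\emph{Pure case.} Suppose first that $\rho$ and $\sigma$ are normalised \emph{pure} states (on systems $\mathrm{A}$ and $\mathrm{B}$, say; this covers the stated case). By Purity Preservation, $\rho\otimes\sigma\in\mathsf{PurSt}_{1}\left(\mathrm{A}\otimes\mathrm{B}\right)$ and $\rho^{\dagger}\otimes\sigma^{\dagger}$ is a pure effect on $\mathrm{A}\otimes\mathrm{B}$. Since scalars factor over tensor products, $\left(\rho^{\dagger}\otimes\sigma^{\dagger}\middle|\rho\otimes\sigma\right)=\left(\rho^{\dagger}\middle|\rho\right)\left(\sigma^{\dagger}\middle|\sigma\right)=1$, so $\rho^{\dagger}\otimes\sigma^{\dagger}$ is in fact a \emph{normalised} pure effect occurring with unit probability on $\rho\otimes\sigma$. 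By Theorem~\ref{thm:duality} such an effect is unique, whence $\left(\rho\otimes\sigma\right)^{\dagger}=\rho^{\dagger}\otimes\sigma^{\dagger}$.

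\emph{General case.} Diagonalise $\rho=\sum_{i}p_{i}\alpha_{i}$ and $\sigma=\sum_{j}q_{j}\beta_{j}$ via Theorem~\ref{thm:diagonalisation}. One then appeals to the fact that $\left\{\alpha_{i}\otimes\beta_{j}\right\}_{i,j}$ is a pure maximal set, so that $\rho\otimes\sigma=\sum_{i,j}p_{i}q_{j}\,\alpha_{i}\otimes\beta_{j}$ is a diagonalisation of $\rho\otimes\sigma$ (cf.\ \cite{Purity}); concretely, each $\alpha_{i}\otimes\beta_{j}$ is pure by Purity Preservation, the tensor product of the two perfectly distinguishing observation-tests perfectly distinguishes them, and their number equals $d_{\mathrm{A}\otimes\mathrm{B}}=d_{\mathrm{A}}d_{\mathrm{B}}$, giving maximality. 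Using linearity of $\dagger$ (Theorem~\ref{thm:duality}) and the pure case,
\[
\left(\rho\otimes\sigma\right)^{\dagger}=\sum_{i,j}p_{i}q_{j}\left(\alpha_{i}\otimes\beta_{j}\right)^{\dagger}=\sum_{i,j}p_{i}q_{j}\,\alpha_{i}^{\dagger}\otimes\beta_{j}^{\dagger}=\rho^{\dagger}\otimes\sigma^{\dagger}.
\]

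The only nontrivial ingredient is the claim that the product of the two diagonalising bases is again a pure maximal set, and hence that a diagonalisation of $\rho\otimes\sigma$ is obtained termwise; this is the step I would expect to take the most care over, but it is standard in sharp theories with purification (Purity Preservation for purity of the factors, multiplicativity of the dimension under $\otimes$ for maximality, cf.\ \cite{Purity,TowardsThermo}). Everything else is a direct application of Purity Preservation, the uniqueness in Theorem~\ref{thm:duality}, and linearity, so I anticipate no real obstacle.
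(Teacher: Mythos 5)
Your proof is correct and follows essentially the same route as the paper: establish the pure case via Purity Preservation plus the uniqueness clause of Theorem~\ref{thm:duality}, then extend by linearity of the dagger using the termwise decomposition $\rho\otimes\sigma=\sum_{i,j}p_iq_j\,\alpha_i\otimes\beta_j$. The paper's proof simply compresses the second step into ``the general result will follow by linearity''; your elaboration (including the maximality of $\{\alpha_i\otimes\beta_j\}$, which is in fact more than linearity requires) is consistent with it.
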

\begin{proof}
Let us prove the result for $\rho$ and $\sigma$ pure, the general
result will follow by linearity. By Purity Preservation, $\rho\otimes\sigma$
and $\rho^{\dagger}\otimes\sigma^{\dagger}$ are pure, and one has
$\left(\rho^{\dagger}\otimes\sigma^{\dagger}\middle|\rho\otimes\sigma\right)=1$. By Theorem~\ref{thm:duality}, \linebreak $\left(\rho\otimes\sigma\right)^{\dagger}=\rho^{\dagger}\otimes\sigma^{\dagger}$.
\end{proof}
Now comes the actual proof.
\begin{proof}[Proof of Proposition~\ref{prop:fidelity product}]
We have
\[
F_{\dagger}\left(\rho_{1}\otimes\rho_{2},\sigma_{1}\otimes\sigma_{2}\right)=\frac{\left\langle \rho_{1}\otimes\rho_{2},\sigma_{1}\otimes\sigma_{2}\right\rangle }{\left\Vert \rho_{1}\otimes\rho_{2}\right\Vert _{\dagger}\left\Vert \sigma_{1}\otimes\sigma_{2}\right\Vert _{\dagger}}.
\]
Now, by Lemma~\ref{lem:parallel composition},
\[
\left\langle \rho_{1}\otimes\rho_{2},\sigma_{1}\otimes\sigma_{2}\right\rangle =\left(\rho_{1}^{\dagger}\otimes\rho_{2}^{\dagger}\middle|\sigma_{1}\otimes\sigma_{2}\right)=\left(\rho_{1}^{\dagger}\middle|\sigma_{1}\right)\left(\rho_{2}^{\dagger}\middle|\sigma_{2}\right)=\left\langle \rho_{1},\sigma_{1}\right\rangle \left\langle \rho_{2},\sigma_{2}\right\rangle .
\]
Furthermore,
\[
\left\Vert \rho_{1}\otimes\rho_{2}\right\Vert _{\dagger}=\sqrt{\left\langle \rho_{1}\otimes\rho_{2},\rho_{1}\otimes\rho_{2}\right\rangle }=\sqrt{\left\langle \rho_{1},\rho_{1}\right\rangle \left\langle \rho_{2},\rho_{2}\right\rangle }=\left\Vert \rho_{1}\right\Vert _{\dagger}\left\Vert \rho_{2}\right\Vert _{\dagger}.
\]
Putting everything together,
\[
F_{\dagger}\left(\rho_{1}\otimes\rho_{2},\sigma_{1}\otimes\sigma_{2}\right)=\frac{\left\langle \rho_{1},\sigma_{1}\right\rangle }{\left\Vert \rho_{1}\right\Vert _{\dagger}\left\Vert \sigma_{1}\right\Vert _{\dagger}}\cdot\frac{\left\langle \rho_{2},\sigma_{2}\right\rangle }{\left\Vert \rho_{2}\right\Vert _{\dagger}\left\Vert \sigma_{2}\right\Vert _{\dagger}}=F_{\dagger}\left(\rho_{1},\sigma_{1}\right)F_{\dagger}\left(\rho_{2},\sigma_{2}\right).
\]
\end{proof}

\section{Dagger of all transformations\label{sec:dagger all}}
\setcounter{equation}{0}
\renewcommand{\theequation}{B\arabic{equation}}

Inspired by the results of Lemma~\ref{lem:inner product}, in sharp
theories with purification, we can extend the dagger to all transformations,
a feature often present in process {theories} \cite{selinger2007dagger,Coecke2016,Selby-dagger,Coeckebook}. 
\begin{Definition}
\label{def:dagger transformation}Given the transformation $\mathcal{A}\in\mathsf{Transf}\left(\mathrm{A},\mathrm{B}\right)$,
its \emph{dagger} (or \emph{adjoint}) is a linear transformation $\mathcal{A}^{\dagger}$
from $\mathrm{B}$ to $\mathrm{A}$ defined as\begin{equation}\label{eq:dagger transformation}
\begin{aligned}\Qcircuit @C=1em @R=.7em @!R { & \multiprepareC{1}{\rho} & \qw \poloFantasmaCn{\rB} & \gate{\cA^{\dagger}} & \qw \poloFantasmaCn{\rA} & \qw \\ & \pureghost{\rho} & \qw \poloFantasmaCn{\rS} & \qw &\qw &\qw }\end{aligned}~=\left(\begin{aligned}\Qcircuit @C=1em @R=.7em @!R {  &  \qw \poloFantasmaCn{\rA} & \gate{\cA} & \qw \poloFantasmaCn{\rB} & \multimeasureD{1}{\rho^{\dagger}} \\   & \qw \poloFantasmaCn{\rS} &\qw &\qw  &\ghost{\rho^{\dagger}}}\end{aligned}\right)^{\dagger}~,
\end{equation}for every system $\mathrm{S}$, {and} every state $\rho\in\mathsf{St}_{1}\left(\mathrm{B}\otimes\mathrm{S}\right)$.
\end{Definition}
This definition specifies the dagger of a transformation completely,
thanks to Equation~\eqref{eq:equality transformations}.  Note that Lemma~\ref{lem:inner product}
allows us to formulate Equation~\eqref{eq:dagger symmetric} in term of
effects and their dagger:
\[
\left(a\middle|b^{\dagger}\right)=\left(b\middle|a^{\dagger}\right)
\]
for all effects $a$, and $b$. In this way, Definition~\ref{def:dagger transformation}
can be recast in equivalent terms by taking $b$ as the term in round
brackets in the RHS of Equation~\eqref{eq:dagger transformation}. This
yields\begin{equation}\label{eq:dagger transformation effect}
\begin{aligned}\Qcircuit @C=1em @R=.7em @!R { & \multiprepareC{1}{\rho} & \qw \poloFantasmaCn{\rB} & \gate{\cA^{\dagger}} & \qw \poloFantasmaCn{\rA} & \multimeasureD{1}{E} \\ & \pureghost{\rho} & \qw \poloFantasmaCn{\rS} &\qw &\qw & \ghost{E} }\end{aligned}~=\!\!\!\!\begin{aligned}\Qcircuit @C=1em @R=.7em @!R { & \multiprepareC{1}{E^\dagger} & \qw \poloFantasmaCn{\rA} & \gate{\cA} & \qw \poloFantasmaCn{\rB} & \multimeasureD{1}{\rho^\dagger} \\ & \pureghost{E^\dagger} & \qw \poloFantasmaCn{\rS} &\qw &\qw & \ghost{\rho^\dagger} }\end{aligned}~,
\end{equation}for every system $\mathrm{S}$, every state $\rho\in\mathsf{St}_{1}\left(\mathrm{B}\otimes\mathrm{S}\right)$,
and every effect $E\in\mathsf{Eff}\left(\mathrm{A}\otimes\mathrm{S}\right)$.

The dagger of a transformation may not be a physical transformation,
i.e.,\ it may send physical states to non-physical ones. Indeed, the
action of $\mathcal{A}^{\dagger}\otimes\mathcal{I}$ on a generic
state (the LHS of Equation~\eqref{eq:dagger transformation}) is defined
as the dagger of an effect. However, not all daggers of effects are
physical states. For instance, take the deterministic effect $u=\sum_{i=1}^{d}\alpha_{i}^{\dagger}$,
where $\left\{ \alpha_{i}\right\} _{i=1}^{d}$ is a pure maximal set.
Its dagger is $u^{\dagger}=\sum_{i=1}^{d}\alpha_{i}=d\chi$, which
is a supernormalised (and hence non-physical) state.

For channels, we can give a necessary condition for the existence
of a physical dagger of the channel.
\begin{Proposition}
Let $\mathcal{C}\in\mathsf{Transf}\left(\mathrm{A},\mathrm{B}\right)$
be a channel. If $\mathcal{C}^{\dagger}$ is a physical transformation,
then $\mathcal{C}$ is unital, and $\mathcal{C}^{\dagger}$ itself
is a {unital} channel.
\end{Proposition}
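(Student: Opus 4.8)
The plan is to pin down the two quantities $u_{\rA}\mathcal{C}^{\dagger}$ (an effect on $\rB$) and $\mathcal{C}^{\dagger}\chi_{\rB}$ (a state on $\rA$) completely, using only the adjoint relation of Definition~\ref{def:dagger transformation} (equivalently Equation~\eqref{eq:dagger transformation effect}), the fact that $\mathcal{C}$ is a channel, and the two elementary identities $u_{\rA}^{\dagger}=d_{\rA}\chi_{\rA}$ and $\chi_{\rB}^{\dagger}=\tfrac{1}{d_{\rB}}u_{\rB}$, which follow at once from the diagonalisations $u=\sum_{i}\alpha_{i}^{\dagger}$ and $\chi=\tfrac{1}{d}\sum_{i}\alpha_{i}$ together with $\dagger\dagger=\mathcal{I}$. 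Once these quantities are written down, physicality of $\mathcal{C}^{\dagger}$ forces them to be a legitimate deterministic effect and a normalised state, and this rigidity collapses $\mathcal{C}\chi_{\rA}$ onto the invariant state.

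First I would establish $u_{\rA}\mathcal{C}^{\dagger}=d_{\rA}\left(\mathcal{C}\chi_{\rA}\right)^{\dagger}$. Taking $E=u_{\rA}$ and $\rS$ trivial in Equation~\eqref{eq:dagger transformation effect} gives $\left(u_{\rA}\middle|\mathcal{C}^{\dagger}\middle|\rho\right)=\left(\rho^{\dagger}\middle|\mathcal{C}\middle|u_{\rA}^{\dagger}\right)$ for every $\rho\in\mathsf{St}_{1}\left(\rB\right)$; substituting $u_{\rA}^{\dagger}=d_{\rA}\chi_{\rA}$ and using that $\mathcal{C}\chi_{\rA}$ is a normalised state of $\rB$ (as $\mathcal{C}$ is a channel), the right-hand side equals $d_{\rA}\left(\rho^{\dagger}\middle|\mathcal{C}\chi_{\rA}\right)=d_{\rA}\left\langle \rho,\mathcal{C}\chi_{\rA}\right\rangle =d_{\rA}\left\langle \mathcal{C}\chi_{\rA},\rho\right\rangle =d_{\rA}\left(\left(\mathcal{C}\chi_{\rA}\right)^{\dagger}\middle|\rho\right)$, where I have used the symmetry of the self-dualising inner product (Lemma~\ref{lem:inner product}). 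Since an effect is determined by its action on states, the identity follows. An entirely analogous computation, now fixing $\rho=\chi_{\rB}$ and varying the effect $a$ on $\rA$, and using $\chi_{\rB}^{\dagger}=\tfrac{1}{d_{\rB}}u_{\rB}$, $u_{\rB}\mathcal{C}=u_{\rA}$ and $\mathrm{Tr}\,a^{\dagger}=d_{\rA}\left(a\middle|\chi_{\rA}\right)$, gives $\left(a\middle|\mathcal{C}^{\dagger}\middle|\chi_{\rB}\right)=\left(\chi_{\rB}^{\dagger}\middle|\mathcal{C}\middle|a^{\dagger}\right)=\tfrac{1}{d_{\rB}}\left(u_{\rA}\middle|a^{\dagger}\right)=\tfrac{d_{\rA}}{d_{\rB}}\left(a\middle|\chi_{\rA}\right)$, hence $\mathcal{C}^{\dagger}\chi_{\rB}=\tfrac{d_{\rA}}{d_{\rB}}\chi_{\rA}$.

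Now I would bring in physicality. Diagonalise $\mathcal{C}\chi_{\rA}=\sum_{j=1}^{d_{\rB}}q_{j}\beta_{j}$ over a pure maximal set $\left\{ \beta_{j}\right\} _{j=1}^{d_{\rB}}$ of $\rB$, so that $\left(\mathcal{C}\chi_{\rA}\right)^{\dagger}=\sum_{j}q_{j}\beta_{j}^{\dagger}$ and $u_{\rA}\mathcal{C}^{\dagger}=d_{\rA}\sum_{j}q_{j}\beta_{j}^{\dagger}$. Since $\mathcal{C}^{\dagger}$ is a physical transformation its action on the normalised state $\chi_{\rB}$ must again be normalised, so $\tfrac{d_{\rA}}{d_{\rB}}=\mathrm{Tr}\,\mathcal{C}^{\dagger}\chi_{\rB}=1$, whence $d_{\rA}=d_{\rB}=:d$ and $u_{\rA}\mathcal{C}^{\dagger}$ is a \emph{deterministic} effect, i.e.\ $u_{\rA}\mathcal{C}^{\dagger}=u_{\rB}$. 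Evaluating $d\sum_{j}q_{j}\beta_{j}^{\dagger}=u_{\rB}=\sum_{j}\beta_{j}^{\dagger}$ on each $\beta_{k}$ (using $\left(\beta_{j}^{\dagger}\middle|\beta_{k}\right)=\delta_{jk}$) gives $d\,q_{k}=1$ for all $k$; equivalently, $1=\sum_{k=1}^{d}q_{k}$ with each $q_{k}\le\tfrac1d$ forces every $q_{k}=\tfrac1d$. Therefore $\mathcal{C}\chi_{\rA}=\tfrac1d\sum_{j}\beta_{j}=\chi_{\rB}$, i.e.\ $\mathcal{C}$ is unital. Feeding this back, $u_{\rA}\mathcal{C}^{\dagger}=d_{\rA}\left(\mathcal{C}\chi_{\rA}\right)^{\dagger}=d\,\chi_{\rB}^{\dagger}=u_{\rB}$ shows $\mathcal{C}^{\dagger}$ is a channel, and the identity $\mathcal{C}^{\dagger}\chi_{\rB}=\tfrac{d_{\rA}}{d_{\rB}}\chi_{\rA}=\chi_{\rA}$ from the previous paragraph shows it is unital.

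The individual steps are routine once the right objects are written down; the part requiring care is the bookkeeping in the adjoint relation—keeping straight which $\dagger$ acts on which system, and using the symmetry of $\left\langle \bullet,\bullet\right\rangle$ to transfer the dagger between the two factors of an inner product—together with the observation that it is precisely the normalisation count $1=\sum_{k}q_{k}\le\sum_{k}\tfrac1d$, tight because a physical $\mathcal{C}^{\dagger}$ must preserve normalisation, that pins $\mathcal{C}\chi_{\rA}$ to $\chi_{\rB}$. Note that neither Purity Preservation nor Purification nor anything about interference is invoked: the argument uses only Causality (to have a unique deterministic effect and invariant state), diagonalisation (Theorem~\ref{thm:diagonalisation}), and the states--effects duality of Theorem~\ref{thm:duality}.
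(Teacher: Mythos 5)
Your overall route is the same as the paper's: push $u_{\rA}\mathcal{C}^{\dagger}$ through the adjoint relation using $u_{\rA}^{\dagger}=d_{\rA}\chi_{\rA}$, invoke physicality of $\mathcal{C}^{\dagger}$ to bound the resulting quantity, and read off unitality from the diagonalisation of $\mathcal{C}\chi_{\rA}$; your closing computations for the two remaining claims also match the paper's. However, there is one genuinely wrong step, and it is the one carrying the weight of the argument: you assert that because $\mathcal{C}^{\dagger}$ is a \emph{physical transformation}, it must map the normalised state $\chi_{\rB}$ to a normalised state, giving $\mathrm{Tr}\,\mathcal{C}^{\dagger}\chi_{\rB}=1$ and hence $d_{\rA}=d_{\rB}$ and $u_{\rA}\mathcal{C}^{\dagger}=u_{\rB}$. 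In this framework a physical transformation is merely an element of some test, so it satisfies $\sum_{i}u\mathcal{A}_{i}=u$ together with its siblings and is only norm-\emph{non-increasing}: $\left(u_{\rA}\middle|\mathcal{C}^{\dagger}\middle|\rho\right)\leq1$. Normalisation is preserved exactly when the transformation is deterministic, i.e.\ a channel --- which is part of what you are trying to prove. From your (correct) identity $\mathcal{C}^{\dagger}\chi_{\rB}=\tfrac{d_{\rA}}{d_{\rB}}\chi_{\rA}$ you may only conclude $d_{\rA}\leq d_{\rB}$, and from $u_{\rA}\mathcal{C}^{\dagger}=d_{\rA}\left(\mathcal{C}\chi_{\rA}\right)^{\dagger}$ only $d_{\rA}q_{k}\leq1$; summing $1=\sum_{k=1}^{d_{\rB}}q_{k}\leq d_{\rB}/d_{\rA}$ does not pin down $q_{k}=1/d_{\rB}$ unless $d_{\rA}=d_{\rB}$.

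The good news is that the damage is localised. Your second, ``equivalently'' phrased route --- each $q_{k}\leq\tfrac{1}{d}$ while $\sum_{k}q_{k}=1$ over $d$ terms forces $q_{k}=\tfrac{1}{d}$ --- uses only the correct inequality $\left(u_{\rA}\middle|\mathcal{C}^{\dagger}\middle|\beta_{k}\right)\leq1$ and is precisely the paper's contradiction argument; it closes the proof once $d_{\rA}=d_{\rB}$ is in hand. The paper itself works throughout with a single $d$, i.e.\ it tacitly treats $\rA$ and $\rB$ as having equal dimension (its own contradiction step $p_{1}>\tfrac{1}{d_{\rB}}$ versus $p_{1}\leq\tfrac{1}{d_{\rA}}$ likewise needs $d_{\rA}\geq d_{\rB}$). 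Under that reading your argument, with ``normalised'' weakened to ``subnormalised'' and the equalities $u_{\rA}\mathcal{C}^{\dagger}=u_{\rB}$, $\mathrm{Tr}\,\mathcal{C}^{\dagger}\chi_{\rB}=1$ deferred until after unitality is established (at which point they follow, as in your last paragraph), is correct and essentially identical to the paper's. But as written, the claim that physicality alone forces normalisation preservation is false, and it cannot be repaired for $d_{\rA}<d_{\rB}$: there the statement itself fails (in quantum theory an isometric channel $V\left(\cdot\right)V^{*}$ has the physical, non-unit-preserving adjoint $V^{*}\left(\cdot\right)V$), so no proof could establish $d_{\rA}=d_{\rB}$ from the hypotheses.
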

\begin{proof}
If $\mathcal{C}^{\dagger}$ is a physical transformation, then, for
every normalised state $\rho\in\mathsf{St}_{1}\left(\mathrm{B}\right)$,
we have $\left\Vert \mathcal{C}^{\dagger}\rho\right\Vert \leq1$,
or in other words, $\left(u\middle|\mathcal{C}^{\dagger}\middle|\rho\right)\leq1$.
By Equation~\eqref{eq:dagger transformation effect}, $\left(u\middle|\mathcal{C}^{\dagger}\middle|\rho\right)=\left(\rho^{\dagger}\middle|\mathcal{C}\middle|u^{\dagger}\right)$,
so the condition $\left\Vert \mathcal{C}^{\dagger}\rho\right\Vert \leq1$
is equivalent to

\begin{equation}
\left(\rho^{\dagger}\middle|\mathcal{C}\middle|\chi\right)\leq\frac{1}{d},\label{eq:dagger inequality}
\end{equation}
with equality if and only if $\mathcal{C}^{\dagger}$ is a channel.
Suppose by contradiction that $\mathcal{C}$ is not unital, then $\mathcal{C}\chi=\rho_{0}\neq\chi$.
Diagonalise $\rho_{0}$ as $\rho_{0}=\sum_{i=1}^{d}p_{i}\alpha_{i}$,
where $p_{1}\geq p_{2}\geq\ldots\geq p_{d}\geq0$, and $p_{1}>\frac{1}{d}$.
Then taking $\rho$ to be $\alpha_{1}$ in $\left(\rho^{\dagger}\middle|\mathcal{C}\middle|\chi\right)$
yields $p_{1}$, but $p_{1}>\frac{1}{d}$, contradicting Equation~\eqref{eq:dagger inequality}.

Being $\mathcal{C}$ unital, we have that
\[
\left(\rho^{\dagger}\middle|\mathcal{C}\middle|\chi\right)=\left(\rho^{\dagger}\middle|\chi\right)=\frac{1}{d}\mathrm{Tr}\:\rho=\frac{1}{d},
\]
showing that $\mathcal{C}^{\dagger}$ is itself a channel. {Let us prove it is unital. The action of $\mathcal{C}^{\dagger}$ on $ \chi $ is defined in Equation~\eqref{eq:dagger transformation}, so
\[
\mathcal{C}^{\dagger}\chi=\left( \chi^{\dagger}\mathcal{C}\right)^{\dagger}=\frac{1}{d}\left( u\mathcal{C}\right)^{\dagger}=\frac{1}{d}u^\dagger=\chi,
\]
where we have used the fact that $ \mathcal{C} $ is a channel, so $ u\mathcal{C}=u $. This proves that $\mathcal{C}^{\dagger}$ is unital.}
\end{proof}
We can prove that the dagger of a transformation has some nice properties.
\begin{Proposition}
For every transformation $\mathcal{A}\in\mathsf{Transf}\left(\mathrm{A},\mathrm{B}\right)$,
one has $\left(\mathcal{A}^{\dagger}\right)^{\dagger}=\mathcal{A}$.
\end{Proposition}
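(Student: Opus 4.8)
The plan is to apply the defining identity \eqref{eq:dagger transformation effect} twice and then invoke the involutivity of $\dagger$ on states and effects. Recall that \eqref{eq:dagger transformation effect} says $\left(E\middle|\left(\mathcal{A}^{\dagger}\otimes\mathcal{I}_{\mathrm{S}}\right)\middle|\rho\right)=\left(\rho^{\dagger}\middle|\left(\mathcal{A}\otimes\mathcal{I}_{\mathrm{S}}\right)\middle|E^{\dagger}\right)$ for every system $\mathrm{S}$, every $\rho\in\mathsf{St}_{1}\left(\mathrm{B}\otimes\mathrm{S}\right)$ and every $E\in\mathsf{Eff}\left(\mathrm{A}\otimes\mathrm{S}\right)$; since both sides are linear in $\rho$ and in $E$, this identity in fact holds for all $\rho\in\mathsf{St}_{\mathbb{R}}\left(\mathrm{B}\otimes\mathrm{S}\right)$ and all $E\in\mathsf{Eff}_{\mathbb{R}}\left(\mathrm{A}\otimes\mathrm{S}\right)$.

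First I would unfold $\left(\mathcal{A}^{\dagger}\right)^{\dagger}$ using \eqref{eq:dagger transformation effect} applied to the transformation $\mathcal{A}^{\dagger}\in\mathsf{Transf}\left(\mathrm{B},\mathrm{A}\right)$ (whose dagger runs $\mathrm{A}\to\mathrm{B}$): for any $\mathrm{S}$, any $\sigma\in\mathsf{St}_{\mathbb{R}}\left(\mathrm{A}\otimes\mathrm{S}\right)$ and any $F\in\mathsf{Eff}_{\mathbb{R}}\left(\mathrm{B}\otimes\mathrm{S}\right)$,
\[
\left(F\middle|\left(\left(\mathcal{A}^{\dagger}\right)^{\dagger}\otimes\mathcal{I}_{\mathrm{S}}\right)\middle|\sigma\right)=\left(\sigma^{\dagger}\middle|\left(\mathcal{A}^{\dagger}\otimes\mathcal{I}_{\mathrm{S}}\right)\middle|F^{\dagger}\right).
\]
Then I would apply \eqref{eq:dagger transformation effect} a second time, now to $\mathcal{A}$ itself, reading its right-hand side with the (possibly non-normalised) arguments $\rho:=F^{\dagger}\in\mathsf{St}_{\mathbb{R}}\left(\mathrm{B}\otimes\mathrm{S}\right)$ and $E:=\sigma^{\dagger}\in\mathsf{Eff}_{\mathbb{R}}\left(\mathrm{A}\otimes\mathrm{S}\right)$; this yields
\[
\left(\sigma^{\dagger}\middle|\left(\mathcal{A}^{\dagger}\otimes\mathcal{I}_{\mathrm{S}}\right)\middle|F^{\dagger}\right)=\left(\left(F^{\dagger}\right)^{\dagger}\middle|\left(\mathcal{A}\otimes\mathcal{I}_{\mathrm{S}}\right)\middle|\left(\sigma^{\dagger}\right)^{\dagger}\right).
\]

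The last ingredient is that $\dagger$ is an involution on $\mathsf{St}_{\mathbb{R}}$ and on $\mathsf{Eff}_{\mathbb{R}}$, i.e.\ $\left(\xi^{\dagger}\right)^{\dagger}=\xi$. This is immediate from Theorem~\ref{thm:duality} together with the convention adopted just after it that $\dagger$ on pure effects denotes the \emph{inverse} of $\dagger$ on pure states: involutivity then holds on pure states by definition and extends by linearity, at the level of any system, in particular $\mathrm{A}\otimes\mathrm{S}$ and $\mathrm{B}\otimes\mathrm{S}$. Substituting $\left(F^{\dagger}\right)^{\dagger}=F$ and $\left(\sigma^{\dagger}\right)^{\dagger}=\sigma$ into the previous display and chaining the two equalities gives
\[
\left(F\middle|\left(\left(\mathcal{A}^{\dagger}\right)^{\dagger}\otimes\mathcal{I}_{\mathrm{S}}\right)\middle|\sigma\right)=\left(F\middle|\left(\mathcal{A}\otimes\mathcal{I}_{\mathrm{S}}\right)\middle|\sigma\right)
\]
for every $\mathrm{S}$, every $\sigma$, and every $F$. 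By the characterisation of transformations via their action on states of composite systems (Equation~\eqref{eq:equality transformations}, together with the fact that effects are determined by their action on states), this forces $\left(\mathcal{A}^{\dagger}\right)^{\dagger}=\mathcal{A}$.

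There is no genuine obstacle here; the only points that need care are (i) observing that \eqref{eq:dagger transformation effect}, although stated for normalised $\rho$, extends bilinearly and may therefore be applied with the arguments $F^{\dagger}$, $\sigma^{\dagger}$, and (ii) using the involutivity of $\dagger$ on states and effects at the level of the composite systems $\mathrm{A}\otimes\mathrm{S}$ and $\mathrm{B}\otimes\mathrm{S}$. Both follow at once from the results already established.
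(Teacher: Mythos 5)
Your proof is correct and follows essentially the same route as the paper's: both unfold $\left(\mathcal{A}^{\dagger}\right)^{\dagger}$ via the defining identity \eqref{eq:dagger transformation effect}, apply that identity a second time with the (linearly extended, possibly non-physical) arguments $F^{\dagger}$ and $\sigma^{\dagger}$, and conclude using the involutivity of $\dagger$ on states and effects. Your write-up is merely more explicit than the paper about the bilinear extension and the involutivity step, both of which the paper invokes in passing.
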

\begin{proof}
By Equation~\eqref{eq:dagger transformation effect} given any system
$\mathrm{S}$, any state $\rho\in\mathsf{St}_{1}\left({\mathrm{A}}\otimes\mathrm{S}\right)$,
and any effect $E\in\mathsf{Eff}\left({\mathrm{B}}\otimes\mathrm{S}\right)$,
we have\begin{equation}\label{eq:dagger dagger}
\begin{aligned}\Qcircuit @C=1em @R=.7em @!R { & \multiprepareC{1}{\rho} & \qw \poloFantasmaCn{\rA} & \gate{\left(\cA^{\dagger}\right)^{\dagger}} & \qw \poloFantasmaCn{\rB} & \multimeasureD{1}{E} \\ & \pureghost{\rho} & \qw \poloFantasmaCn{\rS} &\qw &\qw & \ghost{E} }\end{aligned}~=\!\!\!\!\begin{aligned}\Qcircuit @C=1em @R=.7em @!R { & \multiprepareC{1}{E^\dagger} & \qw \poloFantasmaCn{\rB} & \gate{\cA^\dagger} & \qw \poloFantasmaCn{\rA} & \multimeasureD{1}{\rho^\dagger} \\ & \pureghost{E^\dagger} & \qw \poloFantasmaCn{\rS} &\qw &\qw & \ghost{\rho^\dagger} }\end{aligned}~.
\end{equation}A linear extension of Equation~\eqref{eq:dagger transformation effect} to cover
the case when $E^{\dagger}$ is not a physical state, applied to the
RHS of Equation~\eqref{eq:dagger dagger} yields\[
\begin{aligned}\Qcircuit @C=1em @R=.7em @!R { & \multiprepareC{1}{E^\dagger} & \qw \poloFantasmaCn{\rB} & \gate{\cA^\dagger} & \qw \poloFantasmaCn{\rA} & \multimeasureD{1}{\rho^\dagger} \\ & \pureghost{E^\dagger} & \qw \poloFantasmaCn{\rS} &\qw &\qw & \ghost{\rho^\dagger} }\end{aligned}~=\!\!\!\!\begin{aligned}\Qcircuit @C=1em @R=.7em @!R { & \multiprepareC{1}{\rho} & \qw \poloFantasmaCn{\rA} & \gate{\cA} & \qw \poloFantasmaCn{\rB} & \multimeasureD{1}{E} \\ & \pureghost{\rho} & \qw \poloFantasmaCn{\rS} &\qw &\qw & \ghost{E} }\end{aligned}.
\]Comparing this with Equation~\eqref{eq:dagger dagger}, we get the thesis.
\end{proof}
We can give a characterisation of the dagger of reversible channels,
which are unital channels.
\begin{Proposition}
If $\mathcal{U}\in\mathsf{Transf}\left(\mathrm{A},\mathrm{B}\right)$
is a reversible channel, $\mathcal{U}^{\dagger}=\mathcal{U}^{-1}$.
\end{Proposition}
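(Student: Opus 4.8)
The plan is to read off $\mathcal{U}^{\dagger}$ from its defining relation, Equation~\eqref{eq:dagger transformation}, and to match it against $\mathcal{U}^{-1}$ using how the state--effect dagger behaves under reversible channels. The key preliminary observation is an effect-side version of Proposition~\ref{prop:invariance reversible}: for every reversible channel $\mathcal{V}\in\mathsf{Transf}\left(\mathrm{A},\mathrm{B}\right)$ and every vector $X\in\mathsf{Eff}_{\mathbb{R}}\left(\mathrm{B}\right)$ one has $\left(X\mathcal{V}\right)^{\dagger}=\mathcal{V}^{-1}X^{\dagger}$. This is proved exactly as Proposition~\ref{prop:invariance reversible}: if $a\in\mathsf{PurEff}_{1}\left(\mathrm{B}\right)$, then $\mathcal{V}^{-1}a^{\dagger}$ is a normalised pure state (using that $\mathcal{V}^{-1}$ is pure and Purity Preservation), and $\left(a\mathcal{V}\middle|\mathcal{V}^{-1}a^{\dagger}\right)=\left(a\middle|a^{\dagger}\right)=1$, so uniqueness of the dagger for normalised pure states (Theorem~\ref{thm:duality}) gives $\left(a\mathcal{V}\right)^{\dagger}=\mathcal{V}^{-1}a^{\dagger}$; diagonalising an arbitrary effect as $X=\sum_{i}\lambda_{i}\alpha_{i}^{\dagger}$ and using linearity of $\dagger$ extends this to all $X$. (Note the dual statement $\left(\mathcal{V}\xi\right)^{\dagger}=\xi^{\dagger}\mathcal{V}^{-1}$ for $\xi\in\mathsf{St}_{\mathbb{R}}\left(\mathrm{A}\right)$ is already established inside the proof of Proposition~\ref{prop:invariance reversible}.)

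Next I would fix an arbitrary system $\mathrm{S}$ and a normalised state $\rho\in\mathsf{St}_{1}\left(\mathrm{B}\otimes\mathrm{S}\right)$, and set $\mathcal{W}:=\mathcal{U}\otimes\mathcal{I}_{\mathrm{S}}$, which is a reversible channel from $\mathrm{A}\otimes\mathrm{S}$ to $\mathrm{B}\otimes\mathrm{S}$ with $\mathcal{W}^{-1}=\mathcal{U}^{-1}\otimes\mathcal{I}_{\mathrm{S}}$. Equation~\eqref{eq:dagger transformation} then reads $\left(\mathcal{U}^{\dagger}\otimes\mathcal{I}_{\mathrm{S}}\right)\rho=\left(\rho^{\dagger}\mathcal{W}\right)^{\dagger}$, where $\rho^{\dagger}\in\mathsf{Eff}_{\mathbb{R}}\left(\mathrm{B}\otimes\mathrm{S}\right)$. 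Applying the effect-side identity above with $\mathcal{V}=\mathcal{W}$ and $X=\rho^{\dagger}$, and using that the dagger is an involution between $\mathsf{St}_{\mathbb{R}}$ and $\mathsf{Eff}_{\mathbb{R}}$ (so $\left(\rho^{\dagger}\right)^{\dagger}=\rho$), I obtain $\left(\rho^{\dagger}\mathcal{W}\right)^{\dagger}=\mathcal{W}^{-1}\left(\rho^{\dagger}\right)^{\dagger}=\mathcal{W}^{-1}\rho=\left(\mathcal{U}^{-1}\otimes\mathcal{I}_{\mathrm{S}}\right)\rho$. Hence $\left(\mathcal{U}^{\dagger}\otimes\mathcal{I}_{\mathrm{S}}\right)\rho=\left(\mathcal{U}^{-1}\otimes\mathcal{I}_{\mathrm{S}}\right)\rho$ for every $\mathrm{S}$ and every $\rho$, and Equation~\eqref{eq:equality transformations} then yields $\mathcal{U}^{\dagger}=\mathcal{U}^{-1}$. (Alternatively one can work from Equation~\eqref{eq:dagger transformation effect} and the symmetry $\left(a\middle|b^{\dagger}\right)=\left(b\middle|a^{\dagger}\right)$ of effects on $\mathrm{B}\otimes\mathrm{S}$ together with the state-side identity of Proposition~\ref{prop:invariance reversible}; the two routes are essentially the same.)

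I do not expect a genuine obstacle here; the only point needing a little care is well-definedness of the intermediate objects, since $\rho^{\dagger}$ need not be a physical effect and $\mathcal{W}^{-1}$ applied to it need not be a physical state. But all of the maps involved ($\dagger$, $\mathcal{W}$, $\mathcal{W}^{-1}$, and sequential composition) are linear on the finite-dimensional real vector spaces $\mathsf{St}_{\mathbb{R}}$ and $\mathsf{Eff}_{\mathbb{R}}$, so every identity established on normalised pure states and effects extends by linearity, exactly as in the proof of Proposition~\ref{prop:invariance reversible} and the involutivity of $\dagger$. The real content of the argument is simply that the dagger of transformations has been set up (via Equation~\eqref{eq:dagger transformation}) so that it dovetails with Proposition~\ref{prop:invariance reversible}.
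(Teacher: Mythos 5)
Your proof is correct and follows essentially the same route as the paper: both arguments unfold the defining equation of $\mathcal{U}^{\dagger}$ and reduce everything to the identity $\left(\mathcal{U}\alpha\right)^{\dagger}=\alpha^{\dagger}\mathcal{U}^{-1}$ established in Proposition~\ref{prop:invariance reversible}. The only cosmetic difference is that the paper phrases the final step via invariance and symmetry of the inner product, whereas you use the (equivalent) effect-side covariance of the dagger together with its involutivity.
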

\begin{proof}
We have\[
\begin{aligned}\Qcircuit @C=1em @R=.7em @!R { & \multiprepareC{1}{\rho} & \qw \poloFantasmaCn{\rB} & \gate{\cU^{\dagger}} & \qw \poloFantasmaCn{\rA} & \multimeasureD{1}{E} \\ & \pureghost{\rho} & \qw \poloFantasmaCn{\rS} &\qw &\qw & \ghost{E} }\end{aligned}~=\!\!\!\!\begin{aligned}\Qcircuit @C=1em @R=.7em @!R { & \multiprepareC{1}{E^\dagger} & \qw \poloFantasmaCn{\rA} & \gate{\cU} & \qw \poloFantasmaCn{\rB} & \multimeasureD{1}{\rho^\dagger} \\ & \pureghost{E^\dagger} & \qw \poloFantasmaCn{\rS} &\qw &\qw & \ghost{\rho^\dagger} }\end{aligned}~,
\]for any $\mathrm{S}$, $\rho$, $E$. Recalling Lemma~\ref{lem:inner product},
the RHS is $\left\langle \rho,\left(\mathcal{U}\otimes\mathcal{I}\right)E^{\dagger}\right\rangle $.
By Proposition~\ref{prop:invariance reversible} $\left\langle \rho,\left(\mathcal{U}\otimes\mathcal{I}\right)E^{\dagger}\right\rangle =\left\langle \left(\mathcal{U}^{-1}\otimes\mathcal{I}\right)\rho,E^{\dagger}\right\rangle ,$
and by symmetry of the inner product we have that

\[
\left\langle\left(\mathcal{U}^{-1}\otimes\mathcal{I}\right)\rho,E^{\dagger}\right\rangle =\left\langle E^{\dagger},\left(\mathcal{U}^{-1}\otimes\mathcal{I}\right)\rho\right\rangle  =\!\!\!\!\begin{aligned}\Qcircuit @C=1em @R=.7em @!R { & \multiprepareC{1}{\rho} & \qw \poloFantasmaCn{\rB} & \gate{\cU^{-1}} & \qw \poloFantasmaCn{\rA} & \multimeasureD{1}{E} \\ & \pureghost{\rho} & \qw \poloFantasmaCn{\rS} &\qw &\qw & \ghost{E} }\end{aligned}~,
\]whence the thesis follows.
\end{proof}
In particular we have {that} the dagger of the $\mathtt{SWAP}$ channel between
two systems is the $\mathtt{SWAP}$ with the input and output systems
reversed.

The orthogonal projectors of Section~\ref{orthogonal projectors},
on the other hand, are self-adjoint on single system.
\begin{Proposition}
Given the orthogonal projector $P_{\mathsf{I}}$ on a face $F_{\mathsf{I}}$,
we have $P_{\mathsf{I}}^{\dagger}\doteq P_{\mathsf{I}}$.
\end{Proposition}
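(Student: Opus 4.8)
The plan is to show that, restricted to a single system, the dagger of a transformation is exactly the adjoint with respect to the self-dualising inner product $\left\langle\bullet,\bullet\right\rangle$ of Lemma~\ref{lem:inner product}, and then to invoke the symmetry of $P_{\mathsf{I}}$ already recorded in Equation~\eqref{self-dual projector}.

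First I would specialise Definition~\ref{def:dagger transformation} to the trivial ancilla $\mathrm{S}=\mathrm{I}$. Equation~\eqref{eq:dagger transformation effect} then reads $\left(E\middle|P_{\mathsf{I}}^{\dagger}\middle|\rho\right)=\left(\rho^{\dagger}\middle|P_{\mathsf{I}}\middle|E^{\dagger}\right)$ for every state $\rho\in\mathsf{St}_{1}\left(\mathrm{A}\right)$ and every effect $E\in\mathsf{Eff}\left(\mathrm{A}\right)$. Choosing $E=\eta^{\dagger}$ for $\eta\in\mathsf{St}_{1}\left(\mathrm{A}\right)$ and recalling $\left(\eta^{\dagger}\right)^{\dagger}=\eta$, the right-hand side is $\left(\rho^{\dagger}\middle|P_{\mathsf{I}}\middle|\eta\right)=\left\langle\rho,P_{\mathsf{I}}\eta\right\rangle$, while the left-hand side is $\left(\eta^{\dagger}\middle|P_{\mathsf{I}}^{\dagger}\middle|\rho\right)=\left\langle\eta,P_{\mathsf{I}}^{\dagger}\rho\right\rangle=\left\langle P_{\mathsf{I}}^{\dagger}\rho,\eta\right\rangle$, using symmetry of the inner product. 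Extending by bilinearity to all of $\mathsf{St}_{\mathbb{R}}\left(\mathrm{A}\right)$, this gives $\left\langle P_{\mathsf{I}}^{\dagger}\xi,\eta\right\rangle=\left\langle\xi,P_{\mathsf{I}}\eta\right\rangle$ for all $\xi,\eta\in\mathsf{St}_{\mathbb{R}}\left(\mathrm{A}\right)$; that is, on a single system $P_{\mathsf{I}}^{\dagger}$ is the inner-product adjoint of $P_{\mathsf{I}}$.

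Second, by Equation~\eqref{self-dual projector} (itself a consequence of Proposition~\ref{prop:self-adjoint}) we already know $\left\langle P_{\mathsf{I}}\xi,\eta\right\rangle=\left\langle\xi,P_{\mathsf{I}}\eta\right\rangle$ for all $\xi,\eta$. Comparing the two identities yields $\left\langle P_{\mathsf{I}}^{\dagger}\xi,\eta\right\rangle=\left\langle P_{\mathsf{I}}\xi,\eta\right\rangle$ for every $\eta$, and since $\left\langle\bullet,\bullet\right\rangle$ is non-degenerate (it is an inner product by Lemma~\ref{lem:inner product}) we conclude $P_{\mathsf{I}}^{\dagger}\xi=P_{\mathsf{I}}\xi$ for all $\xi\in\mathsf{St}_{\mathbb{R}}\left(\mathrm{A}\right)$, i.e.\ $P_{\mathsf{I}}^{\dagger}\doteq P_{\mathsf{I}}$.

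The only slightly delicate point is the first step: one must check that the daggers appearing when $\mathrm{S}$ is trivial are legitimate (e.g.\ that $\rho^{\dagger}$ and $\rho^{\dagger}P_{\mathsf{I}}$ are genuine effects and $\eta^{\dagger}$ a genuine effect with $\left(\eta^{\dagger}\right)^{\dagger}=\eta$), so that Equation~\eqref{eq:dagger transformation effect} applies verbatim; this is immediate from Theorem~\ref{thm:duality} together with the bound $\left(\rho^{\dagger}\middle|P_{\mathsf{I}}\middle|\sigma\right)\in\left[0,1\right]$. Alternatively one can bypass the inner-product reformulation entirely: take $\mathrm{S}=\mathrm{I}$ in Equation~\eqref{eq:dagger transformation} to get $P_{\mathsf{I}}^{\dagger}\rho=\left(\rho^{\dagger}P_{\mathsf{I}}\right)^{\dagger}$, substitute $\rho^{\dagger}P_{\mathsf{I}}=\left(P_{\mathsf{I}}\rho\right)^{\dagger}$ from Proposition~\ref{prop:self-adjoint}, and use that $\dagger$ is an involution to read off $P_{\mathsf{I}}^{\dagger}\rho=P_{\mathsf{I}}\rho$. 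Either way the real content is Proposition~\ref{prop:self-adjoint}, and no further work is required.
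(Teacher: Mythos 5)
Your proposal is correct and follows essentially the same route as the paper: specialise Equation~\eqref{eq:dagger transformation effect} to a single system to get $\left(E\middle|P_{\mathsf{I}}^{\dagger}\middle|\rho\right)=\left(\rho^{\dagger}\middle|P_{\mathsf{I}}\middle|E^{\dagger}\right)$, then use the symmetry of $P_{\mathsf{I}}$ with respect to the self-dualising inner product (Proposition~\ref{prop:self-adjoint} / Equation~\eqref{self-dual projector}) together with the symmetry of the inner product to identify this with $\left(E\middle|P_{\mathsf{I}}\middle|\rho\right)$. Your shorter alternative via $P_{\mathsf{I}}^{\dagger}\rho=\left(\rho^{\dagger}P_{\mathsf{I}}\right)^{\dagger}=\left(\left(P_{\mathsf{I}}\rho\right)^{\dagger}\right)^{\dagger}=P_{\mathsf{I}}\rho$ is a valid compression of the same argument.
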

\begin{proof}
For every $\rho$ and $E$, we have $\left(E\middle|P_{\mathsf{I}}^{\dagger}\middle|\rho\right)=\left(\rho^{\dagger}\middle|P_{\mathsf{I}}\middle|E^{\dagger}\right)$.
The RHS is $\left\langle \rho,P_{\mathsf{I}}E^{\dagger}\right\rangle $.
By the properties of projectors,
\[
\left\langle \rho,P_{\mathsf{I}}E^{\dagger}\right\rangle =\left\langle P_{\mathsf{I}}\rho,E^{\dagger}\right\rangle =\left\langle E^{\dagger},P_{\mathsf{I}}\rho\right\rangle =\left(E\middle|P_{\mathsf{I}}\middle|\rho\right).
\]
This shows that $P_{\mathsf{I}}^{\dagger}\doteq P_{\mathsf{I}}$.
\end{proof}
Finally we prove some properties of the dagger with respect to compositions.
We need an easy lemma first.
\begin{Lemma}
\label{lem:dagger Axi}For every $\mathcal{A}\in\mathsf{Transf}\left(\mathrm{A},\mathrm{B}\right)$,
every system $\mathrm{S},$ and every vector $\xi\in\mathsf{St}_{\mathbb{R}}\left(\mathrm{A}{\otimes}\mathrm{S}\right)$
we have\[
\left(\!\!\!\!\begin{aligned}\Qcircuit @C=1em @R=.7em @!R { & \multiprepareC{1}{\xi} & \qw \poloFantasmaCn{\rA} & \gate{\cA} & \qw \poloFantasmaCn{\rB} & \qw \\ & \pureghost{\xi} & \qw \poloFantasmaCn{\rS} & \qw &\qw &\qw }\end{aligned}~\right)^\dagger=~\begin{aligned}\Qcircuit @C=1em @R=.7em @!R {  &  \qw \poloFantasmaCn{\rB} & \gate{\cA^\dagger} & \qw \poloFantasmaCn{\rA} & \multimeasureD{1}{\xi^{\dagger}} \\   & \qw \poloFantasmaCn{\rS} &\qw &\qw  &\ghost{\xi^{\dagger}}}\end{aligned}~.
\]
\end{Lemma}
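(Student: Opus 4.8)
The plan is to read the claimed identity as a direct rewriting of the defining relation~\eqref{eq:dagger transformation} of the transformation dagger, applied to $\cA^{\dagger}$ in place of $\cA$. First I would reduce to the case where $\xi$ is a normalised state. Both sides of the asserted equation are real-linear in $\xi$: the left-hand side because $\cA\otimes\mathcal{I}_{\rS}$ is linear and the vector dagger of Theorem~\ref{thm:duality} is linear, and the right-hand side because the vector dagger is linear and precomposition with the linear map $\cA^{\dagger}\otimes\mathcal{I}_{\rS}$ is linear. Diagonalising an arbitrary $\xi\in\mathsf{St}_{\mathbb{R}}\left(\rA\otimes\rS\right)$ as a real combination of normalised pure states (the extension of Theorem~\ref{thm:diagonalisation} to $\mathsf{St}_{\mathbb{R}}$), it therefore suffices to prove the identity for a normalised state $\omega\in\mathsf{St}_{1}\left(\rA\otimes\rS\right)$.

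For such an $\omega$, I would invoke~\eqref{eq:dagger transformation} with the transformation taken to be $\cA^{\dagger}\in\mathsf{Transf}\left(\rB,\rA\right)$ and the input state taken to be $\omega$ (which lives on the $\rA\otimes\rS$ system, as required, since $\cA^{\dagger}$ has input $\rB$ and output $\rA$). Using the involutivity $\left(\cA^{\dagger}\right)^{\dagger}=\cA$ proved above, \eqref{eq:dagger transformation} becomes
\[
\left(\cA\otimes\mathcal{I}_{\rS}\right)\omega=\left(\omega^{\dagger}\left(\cA^{\dagger}\otimes\mathcal{I}_{\rS}\right)\right)^{\dagger},
\]
an equality of vectors in $\mathsf{St}_{\mathbb{R}}\left(\rB\otimes\rS\right)$. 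Applying the vector dagger on $\rB\otimes\rS$ to both sides and using that it squares to the identity (Theorem~\ref{thm:duality}) gives exactly $\left(\left(\cA\otimes\mathcal{I}_{\rS}\right)\omega\right)^{\dagger}=\omega^{\dagger}\left(\cA^{\dagger}\otimes\mathcal{I}_{\rS}\right)$, which is the lemma for $\omega$, and hence, by the linearity reduction, for all $\xi$.

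If one preferred to avoid quoting~\eqref{eq:dagger transformation} verbatim, the same conclusion follows by testing the two effects on $\rB\otimes\rS$ against an arbitrary normalised state $\rho\in\mathsf{St}_{1}\left(\rB\otimes\rS\right)$ (effects being determined by their values on states): the left side pairs to $\left(\rho^{\dagger}\middle|\cA\otimes\mathcal{I}_{\rS}\middle|\omega\right)$ by the symmetry~\eqref{eq:dagger symmetric} of the dagger pairing (Lemma~\ref{lem:inner product}) for the composite $\rB\otimes\rS$, while the right side pairs to the same quantity by~\eqref{eq:dagger transformation effect} with $E:=\omega^{\dagger}$ --- a genuine effect on $\rA\otimes\rS$, since diagonalising $\omega=\sum_{i}p_{i}\alpha_{i}$ gives $0\le\left(\omega^{\dagger}\middle|\sigma\right)=\sum_{i}p_{i}\left(\alpha_{i}^{\dagger}\middle|\sigma\right)\le\max_{i}p_{i}\le1$ for every state $\sigma$, together with $\left(\omega^{\dagger}\right)^{\dagger}=\omega$. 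I do not expect a genuine obstacle here: the argument is essentially bookkeeping, and the only points needing a little care are that the symmetry of the dagger pairing and the defining relations of the transformation dagger are being applied to the \emph{composite} system $\rB\otimes\rS$ (legitimate, since composites are again systems of the theory), and that $\cA^{\dagger}$ is a priori only a linear, possibly non-physical, map --- so wherever necessary one should use the linear extensions of~\eqref{eq:dagger transformation}--\eqref{eq:dagger transformation effect}, exactly as in the proof of $\left(\cA^{\dagger}\right)^{\dagger}=\cA$.
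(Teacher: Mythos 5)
Your proposal is correct and takes essentially the same route as the paper's proof: apply the defining relation \eqref{eq:dagger transformation} to $\cA^{\dagger}$ in place of $\cA$, use the involutivity $\left(\cA^{\dagger}\right)^{\dagger}=\cA$, and then take the vector dagger of both sides. The explicit linearity reduction to normalised states and the alternative verification by pairing against effects are just extra care on top of the identical core argument.
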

\begin{proof}
Recall that $\mathcal{A}=\left(\mathcal{A}^{\dagger}\right)^{\dagger}$;
by Definition~\ref{def:dagger transformation} we have\inputencoding{latin1}{
$\left(\mathcal{A}^{\dagger}\right)^{\dagger}\xi=\left(\xi^{\dagger}\mathcal{A}^{\dagger}\right)^{\dagger}$}\inputencoding{latin9}\[
\begin{aligned}\Qcircuit @C=1em @R=.7em @!R { & \multiprepareC{1}{\xi} & \qw \poloFantasmaCn{\rA} & \gate{\cA} & \qw \poloFantasmaCn{\rB} & \qw \\ & \pureghost{\xi} & \qw \poloFantasmaCn{\rS} & \qw &\qw &\qw }\end{aligned}~=\!\!\!\!\begin{aligned}\Qcircuit @C=1em @R=.7em @!R { & \multiprepareC{1}{\xi} & \qw \poloFantasmaCn{\rA} & \gate{\left(\cA^{\dagger}\right)^\dagger} & \qw \poloFantasmaCn{\rB} & \qw \\ & \pureghost{\xi} & \qw \poloFantasmaCn{\rS} & \qw &\qw &\qw }\end{aligned}~=\left(\begin{aligned}\Qcircuit @C=1em @R=.7em @!R {  &  \qw \poloFantasmaCn{\rB} & \gate{\cA^\dagger} & \qw \poloFantasmaCn{\rA} & \multimeasureD{1}{\xi^{\dagger}} \\   & \qw \poloFantasmaCn{\rS} &\qw &\qw  &\ghost{\xi^{\dagger}}}\end{aligned}\right)^{\dagger}~.
\] Taking the dagger of this equation yields the desired result.
\end{proof}
Now we can state the main results. The first concerns sequential composition.
\begin{Proposition}
\label{prop:dagger sequential}For all transformations $\mathcal{A}\in\mathsf{Transf}\left(\mathrm{A},\mathrm{B}\right)$,
$\mathcal{B}\in\mathsf{Transf}\left(\mathrm{B},\mathrm{C}\right)$, one has $\left(\mathcal{B}\mathcal{A}\right)^{\dagger}=\mathcal{A}^{\dagger}\mathcal{B}^{\dagger}$.
\end{Proposition}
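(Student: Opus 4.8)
The plan is to reduce the claim to a short diagram chase built on the definition of the dagger of a transformation. By the characterisation of transformations via their action on states of composite systems (Equation~\eqref{eq:equality transformations}), two linear transformations from $\mathrm{C}$ to $\mathrm{A}$ coincide if and only if they agree as states of $\mathrm{A}\otimes\mathrm{S}$ when fed an arbitrary state $\rho\in\mathsf{St}\left(\mathrm{C}\otimes\mathrm{S}\right)$, for every ancillary system $\mathrm{S}$; since both $\left(\mathcal{B}\mathcal{A}\right)^{\dagger}$ and $\mathcal{A}^{\dagger}\mathcal{B}^{\dagger}$ are linear it suffices to take $\rho$ normalised, and since a state of $\mathrm{A}\otimes\mathrm{S}$ is determined by its action on the effects of $\mathrm{A}\otimes\mathrm{S}$, everything comes down to proving, for all such $\rho$, all $\mathrm{S}$, and all $E\in\mathsf{Eff}\left(\mathrm{A}\otimes\mathrm{S}\right)$, the scalar identity
\[
\left(E\middle|\left(\mathcal{B}\mathcal{A}\right)^{\dagger}\otimes\mathcal{I}_{\mathrm{S}}\middle|\rho\right)=\left(E\middle|\left(\mathcal{A}^{\dagger}\mathcal{B}^{\dagger}\right)\otimes\mathcal{I}_{\mathrm{S}}\middle|\rho\right).
\]

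For the left-hand side, apply Equation~\eqref{eq:dagger transformation effect} with the composite $\mathcal{B}\mathcal{A}$ in the role of the transformation; since $\left(\mathcal{B}\mathcal{A}\right)\otimes\mathcal{I}=\left(\mathcal{B}\otimes\mathcal{I}\right)\left(\mathcal{A}\otimes\mathcal{I}\right)$, this rewrites it as $\left(\rho^{\dagger}\middle|\left(\mathcal{B}\otimes\mathcal{I}\right)\left(\mathcal{A}\otimes\mathcal{I}\right)\middle|E^{\dagger}\right)$, where $\rho^{\dagger}$ and $E^{\dagger}$ are the images of $\rho$ and $E$ under the linearly extended dagger of Theorem~\ref{thm:duality} (and need not be physical). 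For the right-hand side, factor $\left(\mathcal{A}^{\dagger}\mathcal{B}^{\dagger}\right)\otimes\mathcal{I}=\left(\mathcal{A}^{\dagger}\otimes\mathcal{I}\right)\left(\mathcal{B}^{\dagger}\otimes\mathcal{I}\right)$ and set $\tau:=\left(\mathcal{B}^{\dagger}\otimes\mathcal{I}\right)\rho\in\mathsf{St}_{\mathbb{R}}\left(\mathrm{B}\otimes\mathrm{S}\right)$. Equation~\eqref{eq:dagger transformation}, applied with $\mathcal{B}$ and the normalised input $\rho$, gives $\tau=\bigl(\rho^{\dagger}\left(\mathcal{B}\otimes\mathcal{I}\right)\bigr)^{\dagger}$, hence $\tau^{\dagger}=\rho^{\dagger}\left(\mathcal{B}\otimes\mathcal{I}\right)$ because the dagger of Theorem~\ref{thm:duality} squares to the identity. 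Finally, applying Equation~\eqref{eq:dagger transformation effect} once more---this time for $\mathcal{A}$ and the possibly non-physical input $\tau$, using the same linear extension already invoked in the proof that $\left(\mathcal{A}^{\dagger}\right)^{\dagger}=\mathcal{A}$---turns the right-hand side into $\left(\tau^{\dagger}\middle|\mathcal{A}\otimes\mathcal{I}\middle|E^{\dagger}\right)=\left(\rho^{\dagger}\middle|\left(\mathcal{B}\otimes\mathcal{I}\right)\left(\mathcal{A}\otimes\mathcal{I}\right)\middle|E^{\dagger}\right)$, which is exactly the left-hand side.

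The mathematical content here is light; the one place where a careless proof would go wrong is bookkeeping around non-physical objects: $E^{\dagger}$, $\tau$, and $\tau^{\dagger}$ generically lie in $\mathsf{St}_{\mathbb{R}}$ or $\mathsf{Eff}_{\mathbb{R}}$ but outside the physical cones, so each invocation of Definition~\ref{def:dagger transformation} and Equation~\eqref{eq:dagger transformation effect} must be the linearly extended version rather than the literal ``physical state/effect'' one, just as in the proof of the involution property above. Once one is at ease manipulating these formal vectors, the argument is essentially the observation that both $\left(\mathcal{B}\mathcal{A}\right)^{\dagger}$ and $\mathcal{A}^{\dagger}\mathcal{B}^{\dagger}$, sandwiched between $\rho$ and $E$, unfold to the same diagram $E^{\dagger}$--$\mathcal{A}$--$\mathcal{B}$--$\rho^{\dagger}$. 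An almost identical computation, replacing sequential by parallel composition and using Lemma~\ref{lem:parallel composition}, handles $\left(\mathcal{A}\otimes\mathcal{B}\right)^{\dagger}=\mathcal{A}^{\dagger}\otimes\mathcal{B}^{\dagger}$.
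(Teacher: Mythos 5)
Your proof is correct and follows essentially the same route as the paper's: reduce to the scalar identity $\left(E\middle|\left(\mathcal{B}\mathcal{A}\right)^{\dagger}\otimes\mathcal{I}\middle|\rho\right)$, apply Equation~\eqref{eq:dagger transformation effect} twice, and unfold both sides to $\left(\rho^{\dagger}\middle|\left(\mathcal{B}\otimes\mathcal{I}\right)\left(\mathcal{A}\otimes\mathcal{I}\right)\middle|E^{\dagger}\right)$, with the same care about linearly extending the defining equations to non-physical intermediates. The only (cosmetic) difference is that you peel off $\mathcal{B}^{\dagger}$ from the state side via $\tau:=\left(\mathcal{B}^{\dagger}\otimes\mathcal{I}\right)\rho$ and involutivity, whereas the paper peels off $\mathcal{A}$ from the effect side via $\xi:=\left(\mathcal{A}\otimes\mathcal{I}\right)E^{\dagger}$ and Lemma~\ref{lem:dagger Axi}.
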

\begin{proof}
Take any system $\mathrm{S}$, any state $\rho\in\mathsf{St}_{1}\left(\mathrm{C}\otimes\mathrm{S}\right)$,
and any effect $E\in\mathsf{Eff}\left(\mathrm{A}\otimes\mathrm{S}\right)$.
By Equation~\eqref{eq:dagger transformation effect} we have\[
\begin{aligned}\Qcircuit @C=1em @R=.7em @!R { & \multiprepareC{1}{\rho} & \qw \poloFantasmaCn{\rC} & \gate{\left(\cB\cA\right)^{\dagger}} & \qw \poloFantasmaCn{\rA} & \multimeasureD{1}{E} \\ & \pureghost{\rho} & \qw \poloFantasmaCn{\rS} &\qw &\qw & \ghost{E} }\end{aligned}~=\!\!\!\!\begin{aligned}\Qcircuit @C=1em @R=.7em @!R { & \multiprepareC{1}{E^\dagger} & \qw \poloFantasmaCn{\rA} & \gate{\cB\cA} & \qw \poloFantasmaCn{\rC} & \multimeasureD{1}{\rho^\dagger} \\ & \pureghost{E^\dagger} & \qw \poloFantasmaCn{\rS} &\qw &\qw & \ghost{\rho^\dagger} }\end{aligned}~=\!\!\!\!\begin{aligned}\Qcircuit @C=1em @R=.7em @!R { & \multiprepareC{1}{E^\dagger} & \qw \poloFantasmaCn{\rA} & \gate{\cA} & \qw \poloFantasmaCn{\rB} &\gate{\cB}&\qw \poloFantasmaCn{\rC}& \multimeasureD{1}{\rho^\dagger} \\ & \pureghost{E^\dagger} & \qw \poloFantasmaCn{\rS} &\qw &\qw &\qw &\qw & \ghost{\rho^\dagger} }\end{aligned}~.
\]Define $\xi$ as $\xi:=\left(\mathcal{A}\otimes\mathcal{I}\right)E^{\dagger}$,
so\[
\begin{aligned}\Qcircuit @C=1em @R=.7em @!R { & \multiprepareC{1}{\rho} & \qw \poloFantasmaCn{\rC} & \gate{\left(\cB\cA\right)^{\dagger}} & \qw \poloFantasmaCn{\rA} & \multimeasureD{1}{E} \\ & \pureghost{\rho} & \qw \poloFantasmaCn{\rS} &\qw &\qw & \ghost{E} }\end{aligned}~=\!\!\!\!\begin{aligned}\Qcircuit @C=1em @R=.7em @!R { & \multiprepareC{1}{\xi} & \qw \poloFantasmaCn{\rB} & \gate{\cB} & \qw \poloFantasmaCn{\rC} & \multimeasureD{1}{\rho^\dagger} \\ & \pureghost{\xi} & \qw \poloFantasmaCn{\rS} &\qw &\qw & \ghost{\rho^\dagger} }\end{aligned}~=\!\!\!\!\begin{aligned}\Qcircuit @C=1em @R=.7em @!R { & \multiprepareC{1}{\rho} & \qw \poloFantasmaCn{\rC} & \gate{\cB^\dagger} & \qw \poloFantasmaCn{\rB} & \multimeasureD{1}{\xi^\dagger} \\ & \pureghost{\rho} & \qw \poloFantasmaCn{\rS} &\qw &\qw & \ghost{\xi^\dagger} }\end{aligned}~.
\]By Lemma~\ref{lem:dagger Axi} $\xi^{\dagger}=\left[\left(\mathcal{A}\otimes\mathcal{I}\right)E^{\dagger}\right]^{\dagger}=E\left(\mathcal{A}^{\dagger}\otimes\mathcal{I}\right)$,
then\[
\begin{aligned}\Qcircuit @C=1em @R=.7em @!R { & \multiprepareC{1}{\rho} & \qw \poloFantasmaCn{\rC} & \gate{\left(\cB\cA\right)^{\dagger}} & \qw \poloFantasmaCn{\rA} & \multimeasureD{1}{E} \\ & \pureghost{\rho} & \qw \poloFantasmaCn{\rS} &\qw &\qw & \ghost{E} }\end{aligned}~=\!\!\!\!\begin{aligned}\Qcircuit @C=1em @R=.7em @!R { & \multiprepareC{1}{\rho} & \qw \poloFantasmaCn{\rC} & \gate{\cB^\dagger} & \qw \poloFantasmaCn{\rB} &\gate{\cA^\dagger}&\qw \poloFantasmaCn{\rA}& \multimeasureD{1}{E} \\ & \pureghost{\rho} & \qw \poloFantasmaCn{\rS} &\qw &\qw &\qw &\qw & \ghost{E} }\end{aligned}~,
\]therefore $\left(\mathcal{B}\mathcal{A}\right)^{\dagger}=\mathcal{A}^{\dagger}\mathcal{B}^{\dagger}$.
\end{proof}
Finally the dagger respects parallel composition. Again we need a
lemma.
\begin{Lemma}
\label{lem:dagger IAI}For every $\mathcal{A}\in\mathsf{Transf}\left(\mathrm{A},\mathrm{B}\right)$,
every systems $\mathrm{S}$ and $\mathrm{S}'$, we have $\left(\mathcal{I}_{\mathrm{S}}\otimes\mathcal{A}\otimes\mathcal{I}_{\mathrm{S}'}\right)^{\dagger}=\mathcal{I}_{\mathrm{S}}\otimes\mathcal{A}^{\dagger}\otimes\mathcal{I}_{\mathrm{S}'}$.
\end{Lemma}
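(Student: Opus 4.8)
The plan is to reduce the statement to two special cases --- tensoring $\mathcal{A}$ with an identity \emph{on the right} and tensoring it \emph{on the left} --- and then compose them. Throughout I would use that the defining relation~\eqref{eq:dagger transformation effect}, together with~\eqref{eq:equality transformations} and the fact that effects separate states on a single system, pins down the dagger of a transformation $\mathcal{C}\in\mathsf{Transf}\left(\mathrm{A},\mathrm{B}\right)$ uniquely: a transformation $\mathcal{D}$ from $\mathrm{B}$ to $\mathrm{A}$ equals $\mathcal{C}^{\dagger}$ as soon as $\left(E\middle|\mathcal{D}\otimes\mathcal{I}_{\mathrm{R}}\middle|\rho\right)=\left(\rho^{\dagger}\middle|\mathcal{C}\otimes\mathcal{I}_{\mathrm{R}}\middle|E^{\dagger}\right)$ for every system $\mathrm{R}$, every $\rho\in\mathsf{St}_{1}\left(\mathrm{B}\otimes\mathrm{R}\right)$, and every $E\in\mathsf{Eff}\left(\mathrm{A}\otimes\mathrm{R}\right)$.

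\emph{Step 1 (right identity).} I would first show $\left(\mathcal{A}\otimes\mathcal{I}_{\mathrm{S}'}\right)^{\dagger}=\mathcal{A}^{\dagger}\otimes\mathcal{I}_{\mathrm{S}'}$. Fixing an ancilla $\mathrm{R}$, a state $\rho\in\mathsf{St}_{1}\left(\mathrm{B}\otimes\mathrm{S}'\otimes\mathrm{R}\right)$, and an effect $E\in\mathsf{Eff}\left(\mathrm{A}\otimes\mathrm{S}'\otimes\mathrm{R}\right)$, Equation~\eqref{eq:dagger transformation effect} applied to $\mathcal{A}$ with the single ancilla $\mathrm{S}'\otimes\mathrm{R}$ gives $\left(E\middle|\mathcal{A}^{\dagger}\otimes\mathcal{I}_{\mathrm{S}'\otimes\mathrm{R}}\middle|\rho\right)=\left(\rho^{\dagger}\middle|\mathcal{A}\otimes\mathcal{I}_{\mathrm{S}'\otimes\mathrm{R}}\middle|E^{\dagger}\right)$, and the right-hand side is exactly $\left(\rho^{\dagger}\middle|\left(\mathcal{A}\otimes\mathcal{I}_{\mathrm{S}'}\right)\otimes\mathcal{I}_{\mathrm{R}}\middle|E^{\dagger}\right)$. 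Thus $\mathcal{A}^{\dagger}\otimes\mathcal{I}_{\mathrm{S}'}$ satisfies the relation that characterises $\left(\mathcal{A}\otimes\mathcal{I}_{\mathrm{S}'}\right)^{\dagger}$, so the two coincide. This step is essentially just regrouping ancillas.

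\emph{Step 2 (left identity) and conclusion.} Since~\eqref{eq:dagger transformation effect} only tensors an identity on the right, to move an identity to the left I would use symmetry of composition: for any $\mathcal{B}\in\mathsf{Transf}\left(\mathrm{A}',\mathrm{B}'\right)$ one has $\mathcal{I}_{\mathrm{S}}\otimes\mathcal{B}=\mathtt{SWAP}\circ\left(\mathcal{B}\otimes\mathcal{I}_{\mathrm{S}}\right)\circ\mathtt{SWAP}$, the two $\mathtt{SWAP}$s being the reversible channels exchanging $\mathrm{S}$ with $\mathrm{A}'$ and with $\mathrm{B}'$ respectively. Taking daggers and invoking Proposition~\ref{prop:dagger sequential}, the identity $\mathtt{SWAP}^{\dagger}=\mathtt{SWAP}$ (input and output reversed) noted just above the lemma, and Step 1 in the form $\left(\mathcal{B}\otimes\mathcal{I}_{\mathrm{S}}\right)^{\dagger}=\mathcal{B}^{\dagger}\otimes\mathcal{I}_{\mathrm{S}}$, I obtain $\left(\mathcal{I}_{\mathrm{S}}\otimes\mathcal{B}\right)^{\dagger}=\mathtt{SWAP}\circ\left(\mathcal{B}^{\dagger}\otimes\mathcal{I}_{\mathrm{S}}\right)\circ\mathtt{SWAP}=\mathcal{I}_{\mathrm{S}}\otimes\mathcal{B}^{\dagger}$. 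Applying this with $\mathcal{B}:=\mathcal{A}\otimes\mathcal{I}_{\mathrm{S}'}$ and then Step 1 once more yields $\left(\mathcal{I}_{\mathrm{S}}\otimes\mathcal{A}\otimes\mathcal{I}_{\mathrm{S}'}\right)^{\dagger}=\mathcal{I}_{\mathrm{S}}\otimes\left(\mathcal{A}\otimes\mathcal{I}_{\mathrm{S}'}\right)^{\dagger}=\mathcal{I}_{\mathrm{S}}\otimes\mathcal{A}^{\dagger}\otimes\mathcal{I}_{\mathrm{S}'}$, which is the claim.

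The only genuine subtlety --- hence the step I would be most careful about --- is the left/right asymmetry built into Definition~\ref{def:dagger transformation}: the right-identity case is pure bookkeeping, whereas the left-identity case really needs the self-duality of $\mathtt{SWAP}$ (established just before the lemma) together with functoriality of the dagger over sequential composition to transport the identity across. Everything else is routine.
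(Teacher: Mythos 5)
Your proposal is correct and follows essentially the same route as the paper: first the right-identity case $\left(\mathcal{A}\otimes\mathcal{I}_{\mathrm{S}'}\right)^{\dagger}=\mathcal{A}^{\dagger}\otimes\mathcal{I}_{\mathrm{S}'}$ by regrouping the ancilla in Equation~\eqref{eq:dagger transformation effect}, then the left-identity case by conjugating with $\mathtt{SWAP}$ and invoking Proposition~\ref{prop:dagger sequential} together with the self-duality of $\mathtt{SWAP}$, and finally combining the two. The only (immaterial) difference is the order of grouping in the last step --- the paper writes $\mathcal{I}_{\mathrm{S}}\otimes\mathcal{A}\otimes\mathcal{I}_{\mathrm{S}'}$ as $\left(\mathcal{I}_{\mathrm{S}}\otimes\mathcal{A}\right)\otimes\mathcal{I}_{\mathrm{S}'}$ while you take $\mathcal{I}_{\mathrm{S}}\otimes\left(\mathcal{A}\otimes\mathcal{I}_{\mathrm{S}'}\right)$.
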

\begin{proof}
As a first step, let us prove that, for every system $\mathrm{S}$,
we have $\left(\mathcal{A}\otimes\mathcal{I}_{\mathrm{S}}\right)^{\dagger}=\mathcal{A}^{\dagger}\otimes\mathcal{I}_{\mathrm{S}}$.
Take any system $\mathrm{S}'$, any state $\rho\in\mathsf{St}_{1}\left(\mathrm{B}\otimes\mathrm{S}\otimes\mathrm{S'}\right)$,
and any effect $E\in\mathsf{Eff}\left(\mathrm{A}\otimes\mathrm{S}\otimes\mathrm{S'}\right)$,
Equation~\eqref{eq:dagger transformation effect} yields

\[
\begin{aligned}\Qcircuit @C=1em @R=.7em @!R { & \multiprepareC{2}{\rho} & \qw \poloFantasmaCn{\rB} & \multigate{1}{\left(\cA\otimes\cI\right)^{\dagger}} & \qw \poloFantasmaCn{\rA} & \multimeasureD{2}{E} \\ & \pureghost{\rho} & \qw \poloFantasmaCn{\rS} &\ghost{\left(\cA\otimes\cI\right)^{\dagger}} &\qw \poloFantasmaCn{\rS} & \ghost{E} \\ & \pureghost{\rho} & \qw \poloFantasmaCn{\rS'} &\qw &\qw & \ghost{E}}\end{aligned}~=\!\!\!\!\begin{aligned}\Qcircuit @C=1em @R=.7em @!R { & \multiprepareC{2}{E^\dagger} & \qw \poloFantasmaCn{\rA} & \multigate{1}{\cA\otimes\cI} & \qw \poloFantasmaCn{\rB} & \multimeasureD{2}{\rho^\dagger} \\ & \pureghost{E^\dagger} & \qw \poloFantasmaCn{\rS} &\ghost{\cA\otimes\cI} &\qw \poloFantasmaCn{\rS} & \ghost{\rho^\dagger} \\ & \pureghost{E^\dagger} & \qw \poloFantasmaCn{\rS'} &\qw &\qw & \ghost{\rho^\dagger}}\end{aligned}~=\!\!\!\!\begin{aligned}\Qcircuit @C=1em @R=.7em @!R { & \multiprepareC{2}{E^\dagger} & \qw \poloFantasmaCn{\rA} & \gate{\cA} & \qw \poloFantasmaCn{\rB} & \multimeasureD{2}{\rho^\dagger} \\ & \pureghost{E^\dagger} & \qw \poloFantasmaCn{\rS} &\qw &\qw  & \ghost{\rho^\dagger} \\ & \pureghost{E^\dagger} & \qw \poloFantasmaCn{\rS'} &\qw &\qw & \ghost{\rho^\dagger}}\end{aligned}~.
\]Specialising Equation~\eqref{eq:dagger transformation effect} to the
case {of a composite system}, we have\[
\begin{aligned}\Qcircuit @C=1em @R=.7em @!R { & \multiprepareC{2}{E^\dagger} & \qw \poloFantasmaCn{\rA} & \gate{\cA} & \qw \poloFantasmaCn{\rB} & \multimeasureD{2}{\rho^\dagger} \\ & \pureghost{E^\dagger} & \qw \poloFantasmaCn{\rS} &\qw &\qw  & \ghost{\rho^\dagger} \\ & \pureghost{E^\dagger} & \qw \poloFantasmaCn{\rS'} &\qw &\qw & \ghost{\rho^\dagger}}\end{aligned}~=\!\!\!\!\begin{aligned}\Qcircuit @C=1em @R=.7em @!R { & \multiprepareC{2}{\rho} & \qw \poloFantasmaCn{\rB} & \gate{\cA^\dagger} & \qw \poloFantasmaCn{\rA} & \multimeasureD{2}{E} \\ & \pureghost{\rho} & \qw \poloFantasmaCn{\rS} &\qw &\qw  & \ghost{E} \\ & \pureghost{\rho} & \qw \poloFantasmaCn{\rS'} &\qw &\qw & \ghost{E}}\end{aligned}~,
\]whence we conclude that $\left(\mathcal{A}\otimes\mathcal{I}_{\mathrm{S}}\right)^{\dagger}=\mathcal{A}^{\dagger}\otimes\mathcal{I}_{\mathrm{S}}$.

Now let us prove that, for every system $\mathrm{S}$, $\left(\mathcal{I}_{\mathrm{S}}\otimes\mathcal{A}\right)^{\dagger}=\mathcal{I}_{\mathrm{S}}\otimes\mathcal{A}^{\dagger}$.
Note that \[
\begin{aligned}\Qcircuit @C=1em @R=.7em @!R {&\qw\poloFantasmaCn{\rS}&\qw&\qw&\qw\\&\qw\poloFantasmaCn{\rA}&\gate{\cA}&\qw\poloFantasmaCn{\rB}&\qw}\end{aligned}~=~\begin{aligned}\Qcircuit @C=1em @R=.7em @!R {&\qw\poloFantasmaCn{\rS}&\multigate{1}{\mathtt{SWAP}}&\qw \poloFantasmaCn{\rA} &\gate{\cA} &\qw \poloFantasmaCn{\rB} &\multigate{1}{\mathtt{SWAP}} &\qw \poloFantasmaCn{\rS} &\qw \\ &\qw\poloFantasmaCn{\rA}&\ghost{\mathtt{SWAP}}&\qw \poloFantasmaCn{\rS} &\qw &\qw&\ghost{\mathtt{SWAP}}&\qw \poloFantasmaCn{\rB} &\qw}\end{aligned}~.
\]By Proposition~\ref{prop:dagger sequential}, {and} recalling what we have
just proved, we have\[
\left(\begin{aligned}\Qcircuit @C=1em @R=.7em @!R {&\qw\poloFantasmaCn{\rS}&\qw&\qw&\qw\\&\qw\poloFantasmaCn{\rA}&\gate{\cA}&\qw\poloFantasmaCn{\rB}&\qw}\end{aligned}\right)^\dagger~=~\begin{aligned}\Qcircuit @C=1em @R=.7em @!R {&\qw\poloFantasmaCn{\rS}&\multigate{1}{\mathtt{SWAP}}&\qw \poloFantasmaCn{\rB} &\gate{\cA^\dagger} &\qw \poloFantasmaCn{\rA} &\multigate{1}{\mathtt{SWAP}} &\qw \poloFantasmaCn{\rS} &\qw \\ &\qw\poloFantasmaCn{\rB}&\ghost{\mathtt{SWAP}}&\qw \poloFantasmaCn{\rS} &\qw &\qw&\ghost{\mathtt{SWAP}}&\qw \poloFantasmaCn{\rA} &\qw}\end{aligned}~=~\begin{aligned}\Qcircuit @C=1em @R=.7em @!R {&\qw\poloFantasmaCn{\rS}&\qw&\qw&\qw\\&\qw\poloFantasmaCn{\rB}&\gate{\cA^\dagger}&\qw\poloFantasmaCn{\rA}&\qw}\end{aligned}~.
\]

To get the thesis, note that $\left(\mathcal{I}_{\mathrm{S}}\otimes\mathcal{A}\otimes\mathcal{I}_{\mathrm{S}'}\right)^{\dagger}=\left[\left(\mathcal{I}_{\mathrm{S}}\otimes\mathcal{A}\right)\otimes\mathcal{I}_{\mathrm{S}'}\right]^{\dagger}$.
We have just proved that
\[
\left[\left(\mathcal{I}_{\mathrm{S}}\otimes\mathcal{A}\right)\otimes\mathcal{I}_{\mathrm{S}'}\right]^{\dagger}=\left(\mathcal{I}_{\mathrm{S}}\otimes\mathcal{A}\right)^{\dagger}\otimes\mathcal{I}_{\mathrm{S}'},
\]
and that $\left(\mathcal{I}_{\mathrm{S}}\otimes\mathcal{A}\right)^{\dagger}=\mathcal{I}_{\mathrm{S}}\otimes\mathcal{A}^{\dagger}$,
therefore we conclude that $\left(\mathcal{I}_{\mathrm{S}}\otimes\mathcal{A}\otimes\mathcal{I}_{\mathrm{S}'}\right)^{\dagger}=\mathcal{I}_{\mathrm{S}}\otimes\mathcal{A}^{\dagger}\otimes\mathcal{I}_{\mathrm{S}'}$.
\end{proof}
\begin{Proposition}
Let $\mathcal{A}\in\mathsf{Transf}\left(\mathrm{A},\mathrm{B}\right)$,
and $\mathcal{B}\in\mathsf{Transf}\left(\mathrm{C},\mathrm{D}\right)$.
We have $\left(\mathcal{A}\otimes\mathcal{B}\right)^{\dagger}=\mathcal{A}^{\dagger}\otimes\mathcal{B}^{\dagger}$.
\end{Proposition}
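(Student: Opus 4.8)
The plan is to reduce the claim to the two results already at hand: Proposition~\ref{prop:dagger sequential} (the dagger reverses sequential composition) and Lemma~\ref{lem:dagger IAI} (the dagger commutes with tensoring an identity on either side). The key observation is the interchange law of the OPT framework: since $\mathcal{A}\in\mathsf{Transf}\left(\mathrm{A},\mathrm{B}\right)$ and $\mathcal{B}\in\mathsf{Transf}\left(\mathrm{C},\mathrm{D}\right)$, one can write the parallel composite as a sequential composite,
\[
\mathcal{A}\otimes\mathcal{B}=\left(\mathcal{A}\otimes\mathcal{I}_{\mathrm{D}}\right)\left(\mathcal{I}_{\mathrm{A}}\otimes\mathcal{B}\right),
\]
where $\mathcal{I}_{\mathrm{A}}\otimes\mathcal{B}\colon\mathrm{A}\otimes\mathrm{C}\to\mathrm{A}\otimes\mathrm{D}$ is applied first, followed by $\mathcal{A}\otimes\mathcal{I}_{\mathrm{D}}\colon\mathrm{A}\otimes\mathrm{D}\to\mathrm{B}\otimes\mathrm{D}$.

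First I would invoke Proposition~\ref{prop:dagger sequential} to get
\[
\left(\mathcal{A}\otimes\mathcal{B}\right)^{\dagger}=\left(\mathcal{I}_{\mathrm{A}}\otimes\mathcal{B}\right)^{\dagger}\left(\mathcal{A}\otimes\mathcal{I}_{\mathrm{D}}\right)^{\dagger}.
\]
Then I would apply Lemma~\ref{lem:dagger IAI} twice: taking $\mathrm{S}=\mathrm{A}$, $\mathrm{S}'$ trivial gives $\left(\mathcal{I}_{\mathrm{A}}\otimes\mathcal{B}\right)^{\dagger}=\mathcal{I}_{\mathrm{A}}\otimes\mathcal{B}^{\dagger}$, and taking $\mathrm{S}$ trivial, $\mathrm{S}'=\mathrm{D}$ gives $\left(\mathcal{A}\otimes\mathcal{I}_{\mathrm{D}}\right)^{\dagger}=\mathcal{A}^{\dagger}\otimes\mathcal{I}_{\mathrm{D}}$. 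Substituting,
\[
\left(\mathcal{A}\otimes\mathcal{B}\right)^{\dagger}=\left(\mathcal{I}_{\mathrm{A}}\otimes\mathcal{B}^{\dagger}\right)\left(\mathcal{A}^{\dagger}\otimes\mathcal{I}_{\mathrm{D}}\right),
\]
and a second use of the interchange law (now for $\mathcal{A}^{\dagger}\colon\mathrm{B}\to\mathrm{A}$ and $\mathcal{B}^{\dagger}\colon\mathrm{D}\to\mathrm{C}$) collapses the right-hand side to $\mathcal{A}^{\dagger}\otimes\mathcal{B}^{\dagger}$, which is the thesis.

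I do not expect a genuine obstacle here: the content is essentially bookkeeping, and the only points requiring a modicum of care are (i) getting the order of the sequential factors right when applying Proposition~\ref{prop:dagger sequential} (the dagger is contravariant, so the factors swap), and (ii) checking that the input/output wires of the two factors match up so that the interchange law genuinely applies in both the forward and the reversed arrangements. If one wishes to avoid appealing to the interchange law abstractly, the same two displayed equalities can instead be verified directly from Equation~\eqref{eq:dagger transformation effect} by sliding $\mathcal{A}$ and $\mathcal{B}$ past each other on disjoint wires, but using bifunctoriality of $\otimes$ is cleaner.
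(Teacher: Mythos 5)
Your argument is correct, and it is a more modular route than the one the paper takes. The paper does not invoke Proposition~\ref{prop:dagger sequential} at all in its proof of this statement: instead it argues directly from the defining circuit identity, Equation~\eqref{eq:dagger transformation effect}, by introducing the auxiliary vector $\xi:=\left(\mathcal{I}_{\mathrm{A}}\otimes\mathcal{B}\otimes\mathcal{I}_{\mathrm{S}}\right)E^{\dagger}$, absorbing $\mathcal{B}$ into the effect, applying the single-wire definition to $\mathcal{A}$, and then computing $\xi^{\dagger}$ via Lemmas~\ref{lem:dagger Axi} and~\ref{lem:dagger IAI} --- in effect re-running the same computation that proves Proposition~\ref{prop:dagger sequential}, but inline. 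Your version packages that computation away: the decomposition $\mathcal{A}\otimes\mathcal{B}=\left(\mathcal{A}\otimes\mathcal{I}_{\mathrm{D}}\right)\left(\mathcal{I}_{\mathrm{A}}\otimes\mathcal{B}\right)$ (which is implicitly the same move the paper makes when it slides $\mathcal{B}$ into $E^{\dagger}$) plus contravariance plus Lemma~\ref{lem:dagger IAI} gives the result in three lines, and you correctly note that the factors swap under the dagger and that the wire types match. The only point worth flagging is your final appeal to the interchange law for $\mathcal{A}^{\dagger}$ and $\mathcal{B}^{\dagger}$: these need not be \emph{physical} transformations (the paper stresses this), and since the theory is not assumed locally tomographic, $\mathcal{A}^{\dagger}\otimes\mathcal{B}^{\dagger}$ is not automatically defined as a tensor product of linear maps on a tensor product of state spaces. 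The honest reading is that $\left(\mathcal{I}_{\mathrm{A}}\otimes\mathcal{B}^{\dagger}\right)\left(\mathcal{A}^{\dagger}\otimes\mathcal{I}_{\mathrm{D}}\right)$, with each factor supplied by Definition~\ref{def:dagger transformation} and Lemma~\ref{lem:dagger IAI}, \emph{is} what the right-hand side of the proposition means; the paper's proof faces exactly the same interpretive issue, so this is a remark rather than a gap. What your approach buys is brevity and reuse of already-proven structure; what the paper's buys is that every step is verified against the defining equation on arbitrary dilations, without ever manipulating possibly non-physical maps as standalone morphisms.
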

\begin{proof}
Take any system $\mathrm{S}$, any state $\rho\in\mathsf{St}_{1}\left(\mathrm{B}\otimes\mathrm{D}\otimes\mathrm{S}\right)$,
and any effect $E\in\mathsf{Eff}\left(\mathrm{A}\otimes\mathrm{C}\otimes\mathrm{S}\right)$,
we have

\[
\begin{aligned}\Qcircuit @C=1em @R=.7em @!R { & \multiprepareC{2}{\rho} & \qw \poloFantasmaCn{\rB} & \multigate{1}{\left(\cA\otimes\cB\right)^{\dagger}} & \qw \poloFantasmaCn{\rA} & \multimeasureD{2}{E} \\ & \pureghost{\rho} & \qw \poloFantasmaCn{\rD} &\ghost{\left(\cA\otimes\cB\right)^{\dagger}} &\qw \poloFantasmaCn{\rC} & \ghost{E} \\ & \pureghost{\rho} & \qw \poloFantasmaCn{\rS} &\qw &\qw & \ghost{E}}\end{aligned}~=\!\!\!\!\begin{aligned}\Qcircuit @C=1em @R=.7em @!R { & \multiprepareC{2}{E^\dagger} & \qw \poloFantasmaCn{\rA} & \multigate{1}{\cA\otimes\cB} & \qw \poloFantasmaCn{\rB} & \multimeasureD{2}{\rho^\dagger} \\ & \pureghost{E^\dagger} & \qw \poloFantasmaCn{\rC} &\ghost{\cA\otimes\cB} &\qw \poloFantasmaCn{\rD} & \ghost{\rho^\dagger} \\ & \pureghost{E^\dagger} & \qw \poloFantasmaCn{\rS} &\qw &\qw & \ghost{\rho^\dagger}}\end{aligned}~=\!\!\!\!\begin{aligned}\Qcircuit @C=1em @R=.7em @!R { & \multiprepareC{2}{E^\dagger} & \qw \poloFantasmaCn{\rA} & \gate{\cA} & \qw \poloFantasmaCn{\rB} & \multimeasureD{2}{\rho^\dagger} \\ & \pureghost{E^\dagger} & \qw \poloFantasmaCn{\rC} &\gate{\cB} &\qw \poloFantasmaCn{\rD} & \ghost{\rho^\dagger} \\ & \pureghost{E^\dagger} & \qw \poloFantasmaCn{\rS} &\qw &\qw & \ghost{\rho^\dagger}}\end{aligned}~.
\]Now define $\xi:=\left(\mathcal{I}_{\mathrm{A}}\otimes\mathcal{B}\otimes\mathcal{I}_{\mathrm{S}}\right)E^{\dagger}$,
hence\[
\begin{aligned}\Qcircuit @C=1em @R=.7em @!R { & \multiprepareC{2}{\rho} & \qw \poloFantasmaCn{\rB} & \multigate{1}{\left(\cA\otimes\cB\right)^{\dagger}} & \qw \poloFantasmaCn{\rA} & \multimeasureD{2}{E} \\ & \pureghost{\rho} & \qw \poloFantasmaCn{\rD} &\ghost{\left(\cA\otimes\cB\right)^{\dagger}} &\qw \poloFantasmaCn{\rC} & \ghost{E} \\ & \pureghost{\rho} & \qw \poloFantasmaCn{\rS} &\qw &\qw & \ghost{E}}\end{aligned}~=\!\!\!\!\begin{aligned}\Qcircuit @C=1em @R=.7em @!R { & \multiprepareC{2}{\xi} & \qw \poloFantasmaCn{\rA} & \gate{\cA} & \qw \poloFantasmaCn{\rB} & \multimeasureD{2}{\rho^\dagger} \\ & \pureghost{\xi} & \qw \poloFantasmaCn{\rD} &\qw &\qw  & \ghost{\rho^\dagger} \\ & \pureghost{\xi} & \qw \poloFantasmaCn{\rS} &\qw &\qw & \ghost{\rho^\dagger}}\end{aligned}~=\!\!\!\!\begin{aligned}\Qcircuit @C=1em @R=.7em @!R { & \multiprepareC{2}{\rho} & \qw \poloFantasmaCn{\rB} & \gate{\cA^\dagger} & \qw \poloFantasmaCn{\rA} & \multimeasureD{2}{\xi^\dagger} \\ & \pureghost{\rho} & \qw \poloFantasmaCn{\rD} &\qw &\qw  & \ghost{\xi^\dagger} \\ & \pureghost{\rho} & \qw \poloFantasmaCn{\rS} &\qw &\qw & \ghost{\xi^\dagger}}\end{aligned}
\]By Lemmas~\ref{lem:dagger Axi} and \ref{lem:dagger IAI}, we have
that $\xi^{\dagger}=E\left(\mathcal{I}_{\mathrm{A}}\otimes\mathcal{B}^{\dagger}\otimes\mathcal{I}_{\mathrm{S}}\right)$,
so\[
\begin{aligned}\Qcircuit @C=1em @R=.7em @!R { & \multiprepareC{2}{\rho} & \qw \poloFantasmaCn{\rB} & \multigate{1}{\left(\cA\otimes\cB\right)^{\dagger}} & \qw \poloFantasmaCn{\rA} & \multimeasureD{2}{E} \\ & \pureghost{\rho} & \qw \poloFantasmaCn{\rD} &\ghost{\left(\cA\otimes\cB\right)^{\dagger}} &\qw \poloFantasmaCn{\rC} & \ghost{E} \\ & \pureghost{\rho} & \qw \poloFantasmaCn{\rS} &\qw &\qw & \ghost{E}}\end{aligned}~=\!\!\!\!\begin{aligned}\Qcircuit @C=1em @R=.7em @!R { & \multiprepareC{2}{\rho} & \qw \poloFantasmaCn{\rB} & \gate{\cA^\dagger} & \qw \poloFantasmaCn{\rA} & \multimeasureD{2}{E} \\ & \pureghost{\rho} & \qw \poloFantasmaCn{\rD} &\gate{\cB^\dagger} &\qw \poloFantasmaCn{\rC} & \ghost{E} \\ & \pureghost{\rho} & \qw \poloFantasmaCn{\rS} &\qw &\qw & \ghost{E}}\end{aligned}~,
\]whence the thesis.
\end{proof}
This means that the dagger respects the composition of diagrams, and
corresponds to the action of flipping a diagram with respect to a
vertical axis.

\bibliographystyle{plain}
\bibliography{Bibliography}
\end{document}